\pgfplotsset{compat=1.12}
\xpatchcmd{\@sect}{\uppercase}{\MakeTextUppercase}{}{}
\xpatchcmd{\@sect}{\uppercase}{\MakeTextUppercase}{}{}
\newcommand{\dom}{\operatorname{dom}}
\newcommand{\cO}{\mathcal{O}}
\newcommand{\dset}{\mathbf{D}}
\newcommand{\aset}{\mathbf{A}}
\newcommand{\pset}{\mathbf{P}}
\newcommand{\uset}{\mathbf{U}}
\newcommand{\vset}{\mathbf{X}}
\newcommand{\tuples}{\operatorname{tuples}}
\newcommand{\att}{\operatorname{att}}
\newcommand{\vdef}{\operatorname{vdef}}
\newcommand{\vdefplus}{\operatorname{vdef}_{+}}
\newcommand{\var}{\operatorname{var}}
\newcommand{\type}{\operatorname{type}}
\newcommand{\vi}{\operatorname{bound}}
\newcommand{\leqnext}{\leq_{\operatorname{next}}}
\newcommand{\leqlast}{\leq_{\operatorname{last}}}
\newcommand{\lnext}{<_{\operatorname{next}}}
\newcommand{\llast}{<_{\operatorname{last}}}
\newcommand{\TRUE}{\mathtt{TRUE}}
\newcommand{\FALSE}{\mathtt{FALSE}}
\newcommand{\SEL}{\mathtt{SEL}}
\newcommand{\as}{~\mathtt{AS}~}
\newcommand{\FILTER}{~\mathtt{FILTER}~}
\newcommand{\cor}{~\mathtt{OR}~}
\newcommand{\tcor}{\mathtt{OR}}
\newcommand{\sq}{\,;\,}
\newcommand{\ks}{+}
\newcommand{\STRICT}{\mathtt{STRICT}}
\newcommand{\NEXT}{\mathtt{NXT}}
\newcommand{\LAST}{\mathtt{LAST}}
\newcommand{\MAX}{\mathtt{MAX}}
\newcommand{\sFILTER}{\mathtt{FILTER}}
\newcommand{\scor}{\mathtt{OR}}
\newcommand{\sSTRICT}{\mathtt{STRICT}}
\newcommand{\sNEXT}{\mathtt{NXT}}
\newcommand{\sLAST}{\mathtt{LAST}}
\newcommand{\sMAX}{\mathtt{MAX}}
\newcommand{\cdotlor}{\mathrel{\ooalign{$\lor$\cr\hidewidth\raise.225ex\hbox{$\cdot\mkern3.3mu$}\cr}}}
\newcommand{\cA}{\mathcal{A}}
\newcommand{\cP}{\mathcal{P}}
\newcommand{\cT}{\mathcal{T}}
\newcommand{\cU}{\mathcal{U}}
\newcommand{\cR}{\mathcal{R}}
\newcommand{\bbN}{\mathbb{N}}
\newcommand{\sem}[1]{{\lsem{}{#1}\rsem}}
\newcommand{\trans}[2][]{\raisebox{-1pt}[10pt][0pt]{$\overset{#2}{\underset{^{#1}}{\raisebox{0pt}[3pt][0pt]{$\relbar\mspace{-8mu}\longrightarrow$}}}$}}
\newcommand{\amark}{\bullet}
\newcommand{\umark}{\circ}
\newcommand{\rmatch}{\operatorname{events}}
\newcommand{\run}{\operatorname{Run}}
\newcommand{\TPO}{\operatorname{TPO}}
\newcommand{\types}[1]{{{#1}\operatorname{-types}}}
\newcommand{\alist}{\operatorname{list}}
\newcommand{\aactive}{\operatorname{active}}
\newcommand{\qinit}{\operatorname{q_{init}}}
\newcommand{\aold}{{\tt old}}
\newcommand{\mNode}{\operatorname{Node}}
\newcommand{\myield}{{\tt yield}}
\newcommand{\mlazycopy}{{\tt lazycopy}}
\newcommand{\madd}{{\tt add}}
\newcommand{\mappend}{{\tt append}}
\newcommand{\mposition}{{\tt position}}
\newcommand{\mlist}{{\tt list}}
\newcommand{\mbegin}{{\tt begin}}
\newcommand{\mnext}{{\tt next}}
\newcommand{\matend}{{\tt atEnd}}
\newcommand{\mpop}{{\tt pop}}
\newcommand{\mpushblack}{{\tt push\text{-}black}}
\newcommand{\mpushwhite}{{\tt push\text{-}white}}
\newcommand{\mpopwhites}{{\tt pop\text{-}whites}}
\newcommand{\mempty}{{\tt empty}}
\newcommand{\menum}{{\tt Output}}
\newtheorem{theorem}{Theorem}
\newtheorem{lemma}{Lemma}
\newtheorem{proposition}{Proposition}
\newcommand*\circled[1]{\tikz[baseline=(char.base)]{
		\node[shape=circle,draw,inner sep=1pt] (char) {#1};}}
\newcommand*\circledblack[1]{\tikz[baseline=(char.base)]{
		\node[shape=circle,draw,inner sep=1pt, fill=black, text=white] (char) {\textbf{#1}};}}
\begin{document}


\title{Foundations of Complex Event Processing}



\numberofauthors{3}

\author{
       \alignauthor
       Marco Bucchi\\
       \affaddr{PUC Chile}
       \email{mabucchi@uc.cl}
       \and
       \alignauthor
       Alejandro Grez\\
       \affaddr{PUC Chile}
       \email{ajgrez@uc.cl}
       \and
       \alignauthor
       Cristian Riveros\\
       \affaddr{PUC Chile}
       \email{cristian.riveros@uc.cl}
       \and
       \alignauthor
       Martín Ugarte\\
       \affaddr{Université Libre de Bruxelles}
       \email{mugartec@ulb.ac.be}
}


\definecolor{titlecolor}{HTML}{666666}
\definecolor{bulletcolor}{HTML}{666666}

\maketitle


\begin{abstract}
Complex Event Processing (CEP) has emerged as the unifying field for technologies that require processing and correlating distributed data sources in real-time. CEP finds applications in diverse domains, which has resulted in a large number of proposals for expressing and processing complex events. However, existing CEP languages lack from a clear semantics, making them hard to understand and generalize. Moreover, there are no general techniques for evaluating CEP query languages with clear performance guarantees.

In this paper we embark on the task of giving a rigorous and efficient framework to CEP. We propose a formal language for specifying complex events, called CEL, that contains the main features used in the literature and has a denotational and compositional semantics. We also formalize the so-called selection strategies, which had only been presented as by-design extensions to existing frameworks. With a well-defined semantics at hand, we study how to efficiently evaluate CEL for processing complex events in the case of unary filters. We start by studying the syntactical properties of CEL and propose rewriting optimization techniques for simplifying the evaluation of formulas. Then, we introduce a formal computational model for CEP, called complex event automata (CEA), and study how to compile CEL formulas into CEA. Furthermore, we provide efficient algorithms for evaluating CEA over event streams using constant time per event followed by constant-delay enumeration of the results. By gathering these results together, we propose a framework for efficiently evaluating CEL with unary filters. Finally, we show experimentally that this framework consistently outperforms the competition, and even over trivial queries can be orders of magnitude more efficient.
\end{abstract}
\pagebreak


\section{Introduction}

Complex Event Processing (CEP) has emerged as the unifying field of technologies for detecting situations of interest under high-throughput data streams. 
In scenarios like Network Intrusion Detection~\cite{mukherjee1994}, Industrial Control Systems~\cite{groover2007} or Real-Time Analytics~\cite{sahay2008}, CEP systems aim to efficiently process arriving data, giving timely insights for implementing reactive responses to complex events.

Prominent examples of CEP systems from academia and industry include SASE \cite{SASE}, EsperTech~\cite{EsperTech}, Cayuga \cite{cayuga}, TESLA/T-Rex~\cite{TESLA,Cugola:2012}, among others (see \cite{cugola2012} for a survey).
The main focus of these systems has been in practical issues like scalability, fault tolerance, and distribution, with the objective of making CEP systems applicable to real-life scenarios. Other design decisions, like query languages, are generally adapted to match computational models that can efficiently process data (see for example \cite{SASEcomplexity}). This has produced new data management and optimization techniques, generating promising results in the area \cite{SASE,EsperTech}.

Unfortunately, as has been claimed several times~\cite{galton2002two,zimmer1999semantics,TESLA,DEBStutorial} CEP query languages lack from a simple and denotational semantics, which makes them difficult to understand, extend, or generalize. 
The semantics of several languages are defined either by examples~\cite{RAPIDE,amit,RACED}, or by intermediate computational models~\cite{SASE,nextCEP,distCED}. Although there are frameworks that introduce formal semantics (e.g. \cite{cayuga,CEDR,PBCED,TESLA,anicic2010rule}), they do not meet the expectations to pave the foundations of CEP languages. For instance, some of them are too complicated (e.g. sequencing is combined with filters), have unintuitive behavior (e.g. sequencing is non-associative), or are severely restricted (e.g. nesting operators is not supported). One symptom of this problem is that iteration, which is a fundamental operator in CEP, has not yet been defined successfully as a compositional operator. Since iteration is difficult to define and evaluate, it is usually restricted by not allowing nesting or reuse of variables~\cite{SASE,cayuga}. Thus, without a formal and natural semantics, the languages for CEP are in general cumbersome.

The lack of simple denotational semantics makes query languages also difficult to evaluate. A common factor in CEP system is to find sophisticated heuristics~\cite{SASEcomplexity, TESLA} that cannot be replicated in other frameworks.
Further, optimization techniques are usually proposed at the architecture level~\cite{GEM,cayuga,distCED}, preventing from a unifying optimization theory.
In this direction, many CEP frameworks use automata-based models~\cite{cayuga,CEDR,PBCED} for query evaluation. 
However, these models are usually complicated~\cite{distCED,nextCEP}, informally defined~\cite{cayuga} or non-standard~\cite{TESLA,SASEautomata}. In practice this implies that, although finite state automata is a recurring approach in CEP,  
there is no general evaluation strategy with clear performance guarantees.

Given this scenario, the goal of this paper is to give solid foundations to CEP systems in terms of query language and query evaluation. Towards these goals, we first provide a formal language that allows for expressing the most common features of CEP systems, namely sequencing, filtering, disjunction, and iteration. We introduce complex event logic (CEL for short), a logic with well-defined compositional and denotational semantics. We also formalize the so-called \emph{selection strategies}, an important notion of CEP that is usually discussed directly~\cite{SASEcomplexity,cayuga} or indirectly~\cite{CEDR} in the literature but has not been formalized at the language level.

Then, we embark on the design of a formal framework for CEL evaluation. This framework must consider three main building blocks for the efficient evaluation of CEL: (1) syntactic techniques for rewriting CEL queries, (2) a well-defined intermediate evaluation model, and (3) efficient translation and algorithms to evaluate this model. 
Regarding the rewriting techniques, we study the structure of CEL by introducing the notions of well-formed and safe formulas, and show that these restrictions are relevant for query evaluation. Further, we give a general result on rewriting CEL formulas into the so-called LP-normal form, a normal form for dealing with unary filters. 
For the intermediate evaluation model, we introduce a formal computational model for the regular fragment of CEL, called \emph{complex event automata} (CEA). We show that this model is closed under I/O-determinization and provide translations for any CEL formula into CEA.
More important, we show an efficient algorithm for evaluating CEA with clear performance guarantees: constant time per tuple followed by constant-delay enumeration of the output. 
We bring together our results to present a formal framework for evaluating CEL.
Towards the end of the paper, we show an experimental evaluation of our framework with the leading CEP systems in the area.
Our experiments shows that our framework outperforms previous systems by order of magnitudes in terms of processing time and memory consumption.

\smallskip

\noindent \textbf{Related work.}
Active Database Systems (ADSMS) and Data Stream Management Systems (DSMS) are solutions for processing data streams and they are usually associated with CEP systems. Both technologies, and specially DSMS, are designed for executing relational queries over dynamic data~\cite{Chen:2000,Abadi:2003,Arasu:2003}.
In contrast, CEP systems see data streams as a sequence of data events where the arrival order is the main guide for finding patterns inside streams (see~\cite{cugola2012} for a comparison between ADSMS, DSMS, and CEP). 
In particular, DSMS query languages (e.g. CQL~\cite{Arasu:2006}) are incomparable with our framework since they do not focus on CEP operators like sequencing and iteration.

Query languages for CEP are usually divided into three approaches~\cite{cugola2012, DEBStutorial}: logic-based, tree-based and automata-based models. Logic-based models have their roots in temporal logic or event calculus, and usually have a formal, declarative semantics~\cite{anicic2010rule,artikis2015event,chesani2010logic} (see \cite{artikis2012logic} for a survey).
However, this approach does not include iteration as an operator or they do not model the output explicitly. 
Furthermore, their evaluation techniques rely on logic inference mechanisms which are radically different from our approach.
Tree-based models~\cite{mei2009zstream,liu2011cube,EsperTech} have also been used for CEP but their language semantics is usually non-declarative and their evaluation techniques are based on cost-models, similar to relational database systems. 

Automata-based models are the closest approach to the techniques used in this paper. Most proposals (e.g. SASE\cite{SASEautomata}, NextCEP\cite{nextCEP}, DistCED\cite{distCED}) do not rely in a denotational semantics; their output is defined by intermediate automata models. This implies that either iteration cannot be nested \cite{SASEautomata} or its semantics is confusing~\cite{nextCEP}.
Other proposals~(e.g. CEDR\cite{CEDR}, TESLA\cite{TESLA}, PBCED\cite{PBCED}) are defined with a formal semantics but they do not include iteration. 
An exception to this is Cayuga\cite{cayuga2} but its language does not allow the reuse of variables and the sequencing operator is non-associative, which derives in a cumbersome semantics.
Our framework is comparable to these systems, but provides a well-defined language that is compositional, allowing arbitrary nesting of operators. Moreover, we present the first evaluation of CEP queries that guarantees constant time per event and constant-delay enumeration of the output. We show experimentally that this vastly improves performance.

Finally, there has been some research in theoretical aspects of CEP like, for instance, in axiomatization of temporal models~\cite{white2007next}, privacy~\cite{he2011complexity}, and load shedding~\cite{he2013load}. This literature does not study the semantics and evaluation of CEP and, therefore, is orthogonal to our work.
\smallskip

\noindent{\bf Organization.} We give an intuitive introduction to CEP and our framework in Section~\ref{sec:in_action}. In Section~\ref{sec:prelim} and~\ref{sec:selectors} we formally present our logic and selection strategies. The syntactic structure of the logic is studied in Section~\ref{sec:language-prop}. The computational model is studied in Section~\ref{sec:computational-models} where we also show how to compile formulas into automata. Section~\ref{sec:evaluation} presents our algorithms for  efficient evaluation of automata. Section~\ref{sec:experiments} puts all the results in perspective and shows our experimental evaluation of the framework.
Future work is finally discussed in Section~\ref{sec:conclusions}. Due to space limitations all proofs are deferred to the appendix.

\section{Events in Action}\label{sec:in_action}

We start by presenting the main features and challenges of CEP. The examples used in this section will also serve throughout the paper as running examples.\par

In a CEP setting, events arrive in a streaming fashion to a system that must detect certain \emph{patterns}~\cite{cugola2012}. For the purpose of illustration assume there is a stream produced by wireless sensors positioned in a farm, whose main objective is to detect fires. As a first scenario, assume that there are three sensors, and each of them can measure both temperature (in Celsius degrees) and relative humidity (as the percentage of vapor in the air). Each sensor is assigned an id in $\{0,1,2\}$. The \emph{events} produced by the sensors consist of the id of the sensor and a measurement of temperature or humidity. In favor of brevity, we write $T(id, tmp)$ for an event reporting temperature $tmp$ from sensor with id $id$, and similarly $H(id, hum)$ for events reporting humidity. Figure~\ref{fig:stream} depicts such a stream: each column is an event and the \emph{value} row is the temperature or humidity if the event is of type $T$ or $H$, respectively.\par

\begin{figure}
	\centering{
		{\small
			\begin{tabular}{|c|c|c|c|c|c|c|c|c|c|c}\hline
				type  &$H$&$T$&$H$&$H$&$T$&$T$&$T$&$H$&$H$ & \ldots \\ \hline
				$id$  & 2 & 0 & 0 & 1 & 1 & 0 & 1 & 1 & 0 & \multirow{2}{*}{\ldots} \\
				value & 25 & 45& 20& 25& 40& 42& 25& 70& 18\\ \hline
				index & 0 & 1 & 2 & 3 & 4 & 5 & 6 & 7 & 8 & \ldots \\ \hline
			\end{tabular}}
			\caption{A stream $S$ of events measuring temperature and humidity. ``value'' contains degrees and humidity for $T$- and $H$- events, respectively.}\label{fig:stream}
			\vspace{-.4cm}
		}
	\end{figure}

The patterns to be deetcted are generally specified by domain experts. For the sake of illustration, assume that the position of sensor~$0$ is particularly prone to fires, and it has been detected that a temperature measurement above 40 degrees Celsius followed by a humidity measurement of less than 25\% represents a fire with high probability. Let us intuitively explain how a domain expert can express this as a pattern (also called a \emph{formula}) in our framework:
\begin{multline*}
\varphi_1=(T \as x \sq H \as y)\FILTER (x.tmp > 40\ \land\\
y.hum <= 25\ \land x.id=0\ \land y.id=0)
\end{multline*}
This formula is asking for two events, one of type temperature ($T$) and one of type humidity ($H$). The events of type temperature and humidity are given names $x$ and $y$, respectively, and the two events are filtered to select only those pairs $(x,y)$ representing a high temperature followed by a low humidity measured by sensor 0. 

What should be the result of evaluating $\varphi_1$ over the stream in Figure~\ref{fig:stream}?
A first important remark is that event streams are noisy in practice, and one does not expect the events matching a formula to be \emph{contiguous} in the stream. Then, a CEP engine needs to be able to dismiss irrelevant events. The semantics of the \emph{sequencing} operator~(;) will thus allow for arbitrary events to occur in between the events of interest. A second remark is that in CEP the set of events matching a pattern, called a \emph{complex event}, is particularly relevant to the end user. Every time that a formula matches a portion of the stream, the final user should retrieve the events that compose that portion of the stream. This means that the evaluation of a formula over a stream should output a set of \emph{complex events}. In our framework, each complex event will be the set of indexes (stream positions) of the events that witness the matching of a formula.
Specifically, let $S[i]$ be the event at position $i$ of the stream $S$. What we expect for the output of formula $\varphi_1$ is a set of pairs $\{i,j\}$ such that $S[i]$ is of type $T$, $S[j]$ is of type $H$, $i<j$, and they satisfy the conditions expressed after the FILTER. By inspecting Figure~\ref{fig:stream}, we can see that the pairs satisfying these conditions are $\{1,2\}$, $\{1,8\}$, and $\{5,8\}$.\par

Formula $\varphi_1$ illustrates in a simple way the two most elemental features of CEP, namely \emph{sequencing} and \emph{filtering}~\cite{cugola2012,Arasu:2003,SASEcomplexity,Abadi:2003,buchmann2009}. But although it detects a set of possible fires, it restricts the \emph{order} in which the two events must occur, namely the temperature must be measured before the humidity. Naturally, this could prevent the detection of a fire in which the humidity was measured first. This motivates the introduction of \emph{disjunction}, another common feature in CEP engines~\cite{cugola2012,Arasu:2003}. To illustrate, we extend $\varphi_1$ by allowing events to appear in arbitrary order.
\begin{multline*}
	\varphi_2=[(T \as x \sq H \as y)\cor (H \as y \sq T \as x)]\FILTER\\
	(x.tmp > 40\ \land y.hum <= 25 \land x.id=0\ \land y.id=0)
\end{multline*}
The $\tcor$ operator allows for any of the two patterns to be matched, and the filter is applied as in $\varphi_1$. The result of evaluating $\varphi_2$ over the stream $S$ of Figure~\ref{fig:stream} is the same as evaluating $\varphi_1$ over $S$ plus the complex event $\{2,5\}$.\par

The previous formulas show how CEP systems raise alerts when a certain complex event occurs. However, from a wider scope the objective of CEP is to retrieve information of interest from streams. For example, assume that we want to see how does temperature change in the location of sensor~1 when there is an increase of humidity. A problem here is that we do not know a priori the amount of temperature measurements; we need to capture an unbounded amount of events. 
The \emph{iteration} operator~\cite{cugola2012,Arasu:2003} (also known as Kleene closure~\cite{gyllstrom2008supporting}) is introduced in most CEP frameworks for solving this problem. This operator introduces many difficulties in the semantics of CEP languages. For example, since events are not required to occur contiguously, the nesting of $+$ is particularly tricky and most frameworks simply disallow this (see \cite{SASE,Arasu:2006,cayuga}). Coming back to our example, the formula for measuring temperatures whenever an increase of humidity is detected by sensor $1$ is:
\begin{multline*}
	\varphi_3=[H\as x \sq (T \as y\FILTER y.id=1)+ \sq H \as z]\\
	\qquad \FILTER(x.hum < 30\ \land z.hum > 60 \land x.id=z.id=1)
\end{multline*}
Intuitively, variables $x$ and $z$ witness the increase of humidity from less than 30\% to more than 60\%, and $y$ captures temperature measures between $x$ and $z$. 
Note that the filter for $y$ is included inside the $+$ operator. Some frameworks allow to declare variables inside a $+$ and filter them outside that operator (e.g. \cite{SASE}). Although it is possible to define the semantics for that syntax, this form of filtering makes the definition of nesting $+$ difficult. 
Another semantic subtlety of the + operator is the association of $y$ to an event.
Given that we want to match the event $(T \as y\FILTER y.id=1)$ an unbounded number of times: how should the events associated to $y$ occur in the complex events generated as output?
Associating different events to the same variable during evaluation has proven to make the semantics of CEP languages cumbersome. In Section~\ref{sec:prelim}, we introduce a natural semantics that allows nesting $+$ and associate variables (inside $+$ operators) to different events across repetitions.\par

Let us now explain the semantics of $\varphi_3$ over stream $S$ (Figure~\ref{fig:stream}). 
The only two humidity events satisfying the top-most filter are $S[3]$ and $S[7]$ and the temperature events between these two are $S[4]$ and $S[6]$. As expected, the complex event $\{3,4,6,7\}$ is part of the output. However, there are also other complex events in the output. Since, as discussed, there might be irrelevant events between relevant ones, the semantics of $+$ must allow for \emph{skipping} arbitrary events.
This implies that the complex events $\{3,6,7\}$ and $\{3,4,7\}$ are also part of the output. \par

The previous discussion raises an interesting question: are users interested in receiving all complex events? Are some complex events more informative than others? Coming back to the output of $\varphi_3$ ($\{3,6,7\}$, $\{3,4,7\}$ and $\{3,4,6,7\}$), one can easily argue that the largest complex event is more informative than others since all events are contained in it. A more complicated analysis deserves the complex events output by $\varphi_1$. In this scenario, the pairs that have the same second component (e.g., $\{1,8\}$ and $\{5,8\}$) represent a fire occurring at the same place and time, so one could argue that only one of the two is necessary. 
For cases like above, it is common to find CEP systems that restrict the output by using so-called \emph{selection strategies} (see for example ~\cite{SASE,SASEcomplexity,TESLA}).
Selection strategies are a fundamental feature of CEP. Unfortunately, they have only been presented as heuristics applied to particular computational models, and thus their semantics given by an algorithm and hard to understand. A special mention deserves the \emph{next} selection strategy (called skip-till-next-match in~\cite{SASE,SASEcomplexity}) which models the idea of outputting only those complex events that can be generated without skipping relevant events. Although the semantics of \emph{next} has been mentioned in previous papers (e.g \cite{CEDR}), it is usually underspecified~\cite{SASE,SASEcomplexity} or complicates the semantics of other operators~\cite{cayuga}. In Section~\ref{sec:selectors}, we formally define a set of selection strategies including \emph{next}.\par

Before formally presenting our framework, we illustrate one more common feature of CEP, namely \emph{correlation}. Correlation is introduced by filtering events with predicates that involve more than one event. For example, consider that we want to see how does temperature change at some location whenever there is an increase of humidity, like in $\varphi_3$. What we need is a pattern where all the events are produced by the same sensor, but that sensor is not necessarily sensor~1. This is achieved by the following pattern:
\begin{multline*}
\varphi_4=[H\as x;(T \as y\FILTER y.id=x.id)+;H \as z]\\
\FILTER(x.hum < 30\ \land z.hum > 60 \land x.id=z.id)
\end{multline*}
Notice that here the filters contain the binary predicates $x.id=y.id$ and $x.id=z.id$ that force all events to have the same id. Although this might seem simple, the evaluation of formulas that correlate events introduces new challenges. Intuitively, formula $\varphi_4$ is more complicated because the value of $x$ must be remembered and used during evaluation in order to compare it with future incoming events.
If the reader is familiar with automata theory~\cite{HopcroftU79,sakarovitch2009elements}, this behavior is clearly not ``regular'' and it will not be captured by a finite state model.
In this paper, we study and characterize the regular part of CEP-systems. 
Therefore, from Section~\ref{sec:computational-models} to Section~\ref{sec:experiments} we focus on formulas without correlation. As we will see, the formal analysis of this fragment already presents important challenges, which is the reason why we defer the analysis of formulas like $\varphi_4$ for future work. It is important to mention that the semantics of our language (including selection strategies) is general and includes correlation.\par

\section{A query language for CEP}\label{sec:prelim}


Having discussed and illustrated the common operators and features of CEP, we proceed to formally introduce CEL (Complex Event Logic), our pattern language for capturing complex events.
\smallskip

\noindent {\bf Schemas, Tuples and Streams.}
Let $\aset$ be a set of \emph{attribute names} and $\dset$ be a set of values. A database schema $\cR$ is a finite set of relation names, where each relation name $R \in \cR$ is associated to a tuple of attributes in $\aset$ denoted by $\att(R)$. If $R$ is a relation name, then an $R$-tuple is a function $t:\att(R) \rightarrow \dset$. We say that the type of an $R$-tuple $t$ is $R$, and denote this by $\type(t)=R$. For any relation name~$R$, $\tuples(R)$ denotes the set of all possible $R$-tuples, i.e., $\tuples(R)=\{t:\att(R) \rightarrow \dset\}$.
Similarly, for any database schema $\cR$, $\tuples(\cR)=\bigcup_{R \in \cR}\tuples(R)$. 

Given a schema $\cR$, an $\cR$\textit{-stream} $S$ is an infinite sequence $S = t_0 t_1 \ldots$ where $t_i \in \tuples(\cR)$. When $\cR$ is clear from the context, we refer to $S$ simply as a stream. Given a stream $S = t_0 t_1 \ldots$ and a position $i \in \bbN$, the $i$-th element of $S$ is denoted by $S[i]=t_i$, and the sub-stream $t_{i}t_{i+1} \ldots$ of $S$ is denoted by $S_i$. 
Note that we consider in this paper that the time of each event is given by its index, and defer a more elaborated time model (like~\cite{white2007next}) for future work.

Let $\vset$ be a set of variables. 
Given a schema $\cR$, a predicate of arity $n$ is an $n$-ary relation $P$ over
$\tuples(\cR)$, i.e. $P \subseteq \tuples(\cR)^n$. 
An atom is an expression  $P(x_1, \ldots, x_n)$ (or $P(\bar{x})$) where $P$ is an $n$-ary predicate and $\bar{x} = x_1, \ldots, x_n \in \vset$. 
For example, $P(x):=x.hum < 30$ is an atom and $P$ is the predicate of all tuples that have a humidity attribute with value less than $30$. 
In this paper, we consider a fixed set of predicates, denoted by $\pset$. Moreover, we assume that $\pset$ is closed under intersection, union, and complement, and $\pset$ contains the predicate $P_R(x) := \type(x)=R$ for checking if a tuple is an $R$-tuple for everv $R \in \cR$. 
\smallskip

\noindent \textbf{CEL syntax.} Now we proceed to give the syntax of what we call the \emph{core} of CEL (core-CEL for short), a logic inspired by the operations described in the previous section. This language features the most essential CEP features.
The set of formulas in core-CEL, or core formulas for short, is given by the following grammar:
\[\varphi \; := \; R \as x \ \mid \  \varphi \FILTER P(\bar{x}) \ \mid \ \varphi \cor \varphi \ \mid \ \varphi\sq\varphi \ \mid \ \varphi\ks \]
Here $R$ is a relation name, $x$ is a variable in $\vset$ and $P(\bar{x})$ is an atom in $\pset$. 
All formulas in Section~\ref{sec:in_action} are CEL formulas. Furthermore, formulas of the form $\varphi \FILTER (P(\bar{x}) \wedge Q(\bar{y}))$ or $\varphi \FILTER (P(\bar{x}) \vee Q(\bar{y}))$ are used as syntactic sugar for $(\varphi \FILTER P(\bar{x})) \FILTER Q(\bar{y})$ or $(\varphi \FILTER P(\bar{x})) \cor$ $(\varphi \FILTER Q(\bar{y}))$, respectively. As opposed to existing frameworks, we do not restrict the use of operators or variables, allowing for arbitrary nestings (in particular of $+$).
\smallskip

\noindent \textbf{CEL semantics.} We proceed to define the semantics of core formulas, for which we need to introduce some further notation. A \emph{complex event} $C$ is defined as a non-empty and finite set of indices. As mentioned in Section~\ref{sec:in_action}, a complex event contains the positions of the events that witness the matching of a formula over a stream, and moreover, they are the final output of evaluating a formula over a stream. We denote by $|C|$ the size of $C$ and by $\min(C)$ and $\max(C)$ the minimum and maximum element of $C$, respectively. 
Given two complex events $C_1$ and $C_2$, $C_1 \cdot C_2$ denotes the \emph{concatenation} of two complex events, that is, $C_1 \cdot C_2 := C_1 \cup C_2$ whenever $\max(C_1) < \min(C_2)$ and is undefined otherwise. \par

In core-CEL formulas, variables are second class citizens because they are only used to filter and select particular events, i.e. they are not retrieved as part of the output. As examples in Section~\ref{sec:in_action} suggest, we are only concerned with finding the events that compose the complex events, and not which position corresponds to which variable. The reason behind this is that the operator $+$ allows for repetitions, and therefore variables under a (possibly nested) $+$ operator would need to have a special meaning, particularly for filtering.
This discussion motivates the following definitions. 
Given a formula $\varphi$ we denote by $\var(\varphi)$ the set of all variables mentioned in $\varphi$ (including its predicates), and by $\vdef(\varphi)$ all variables defined in $\varphi$ by a clause of the form $R \as x$. Furthermore, $\vdefplus(\varphi)$ denotes all variables in $\vdef(\varphi)$ that are defined outside the scope of all $\ks$ operators. For example, for $\varphi = (T \as x \sq (H \as y)\ks) \FILTER z.id = 1$ we have that $\var(\varphi) = \{x, y, z\}$, $\vdef(\varphi) = \{x,y\}$, and $\vdefplus(\varphi) = \{x\}$. 
Finally, a valuation is a function $\nu:\vset \rightarrow \bbN$. Given a finite set of variables $U \subseteq \vset$ and two valuations $\nu_1$ and $\nu_2$, the valuation $\nu_1[\nu_2 \slash U]$ is defined by $\nu_1[\nu_2 \slash U](x) = \nu_2(x)$ if $x \in U$ and by $\nu_1[\nu_2 \slash U](x) = \nu_1(x)$ otherwise. 

We are ready to define the semantics of a core-CEL formula $\varphi$. Given a complex event $C$ and a stream $S$, we say that $C$ is in the evaluation of $\varphi$ over $S$ under valuation $\nu$ ($C \in \sem{\varphi}(S, \nu)$) if one of the following conditions holds:
\begin{itemize}
	\item $\varphi=R \as x$, $C = \{\nu(x)\}$, and $\type(S[\nu(x)]) = R$.
	\item $\varphi = \rho \FILTER P(x_1, \ldots, x_n)$ and both $C \in \sem{\rho}(S, \nu)$ and $(S[\nu(x_1)],\ldots, S[\nu(x_n)]) \in P$ hold.
	\item $\varphi=\rho_1 \cor \rho_2$ and $C \in \sem{\rho_1}(S, \nu)$ or $C \in \sem{\rho_2}(S, \nu)$.
	\item $\varphi=\rho_1 \sq \rho_2$ and there exist complex events $C_1 \in \sem{\rho_1}(S, \nu)$ and $C_2 \in \sem{\rho_2}(S, \nu)$ such that $C = C_1 \cdot C_2$.
	\item $\varphi=\rho\ks$ and there exists $\nu'$ such that $C \in \sem{\rho}(S, \nu[\nu' \slash U])$ or  $C \in \sem{\rho\sq\rho\ks}(S, \nu[\nu' \slash U])$, where $U = \vdefplus(\rho)$.
\end{itemize}
There are a couple of important remarks here. First, the valuation $\nu$ can be defined over a superset of the variables mentioned in the formula. This is important for sequencing (;) because we require the complex events from both sides to be produced with the same valuation. Second, when we evaluate a subformula of the form $\rho+$, we \emph{carry} the value of variables defined outside the subformula. For example, the subformula $(T\as y \FILTER y.id=x.id)+$ of $\varphi_4$ does not define the variable $x$. However, from the definition of the semantics we see that $x$ will be \emph{already assigned} (because $R\as x$ occurs outside the subformula). This is precisely where other frameworks fail to formalize iteration, as without this construct it is not easy to correlate the variables inside + with the ones outside, as we illustrate with $\varphi_4$.

As previously discussed, in core-CEL variables are just used for comparing attributes with $\sFILTER$, but are not relevant for the final output. In consequence, we say that $C$ belongs to the evaluation of $\varphi$ over $S$ (denoted $C \in \sem{\varphi}(S)$) if there is a valuation $\nu$ such that $C \in \sem{\varphi}(S, \nu)$.
As an example, the complex events presented in Section~\ref{sec:in_action} are indeed the outputs of $\varphi_1$ to $\varphi_3$ over the stream in Figure~\ref{fig:stream}.

\section{Selection strategies}\label{sec:selectors}

Matching complex events is a computationally intensive task. As the examples in Section~\ref{sec:in_action} might suggest, the main reason behind this is that the amount of complex events can grow exponentially in the size of the stream, forcing systems to process large numbers of \emph{candidate} outputs. In order to speed up the matching processes, it is common to restrict the set of results \cite{carlson2010resource, SASE, SASEcomplexity}. As we validate in the experimental section, this is required for current CEP systems to work in practice. Unfortunately, most proposals in the literature restrict outputs by introducing heuristics to particular computational models without describing how the semantics are affected. For a more general approach, we introduce \emph{selection strategies} (or \emph{selectors}) as unary operators over core-CEL formulas. 
Formally, we define four selection strategies called strict ($\STRICT$), next ($\NEXT$), last ($\LAST$) and max ($\MAX$). 
$\STRICT$ and $\NEXT$ are motivated by previously introduced operators~\cite{SASE} under the name of \emph{strict-contiguity} and \emph{skip-till-next-match}, respectively. $\LAST$ and $\MAX$ are introduced here as useful selection strategies from a semantic point of view.
We proceed to define each selection strategy below, giving the motivation and formal semantics. 

\smallskip
\noindent \textbf{STRICT.} As the name suggest, $\STRICT$ or strict-contiguity keeps only the complex events that are contiguous in the stream, basically reducing the evaluation problem to that of regular expressions.
To motivate this, recall that formula $\varphi_1$ in Section~\ref{sec:in_action} detects complex events composed by a temperature above 40 degrees Celsius followed by a humidity of less than 25\%.
As already argued, in general one could expect other events between $x$ and $y$. However, it could be the case that this pattern is of interest only if the events occur contiguously in the stream, namely a temperature immediately after a humidity measure. For this purpose, $\STRICT$ reduces the set of outputs selecting only strictly consecutive complex events. Formally, for any CEL formula $\varphi$ we have that $C \in  \sem{\STRICT(\varphi)}(S, \nu)$ holds if $C \in  \sem{\varphi}(S, \nu)$ and for every $i, j \in C$, if $i<k<j$ then $k\in C$ (i.e., $C$ is an interval). In our running example, $\STRICT(\varphi_1)$ would only produce $\{1, 2\}$, although $\{1,8\}$ and $\{5, 8\}$ are also outputs for $\varphi_1$ over~$S$.

\par

\smallskip
\noindent \textbf{NXT.} The second selector, $\NEXT$, is similar to the previously proposed operator  \emph{skip-till-next-match}~\cite{SASE}. The motivation behind this operator comes from a heuristic that consumes a stream skipping those events that cannot participate in the output, but matching patterns in a \emph{greedy} manner that selects only the first event satisfying the next element of the query. In~\cite{SASE} the authors gave the definition informally as  
\begin{quote}
``\emph{a further relaxation is to remove the contiguity requirements: all irrelevant events will be skipped until the next relevant event is read}'' (*).
\end{quote}
In practice, the definition of skip-till-next-match is given by a greedy evaluation algorithm that adds an event to the output whenever a sequential operator is used, or goes as far as possible adding events whenever an iteration operator is used. The fact that the semantics is only defined by an algorithm requires a user to understand the algorithm to write meaningful queries. In other words, this operator speeds up the evaluation by sacrificing the clarity of the semantics



To overcome the above problem, we formalize the intuition behind (*) based on a special order over complex events. As we will see later, this allows to speed up the evaluation process as much as skip-till-next-match while providing clear and intuitive semantics.
Let $C_1$ and $C_2$ be complex events. The symmetric difference between $C_1$ and $C_2$ ($C_1\triangle C_2$) is the set of all elements either in $C_1$ or $C_2$ but not in both. We say that $C_1 \leqnext C_2$ if either $C_1=C_2$ or $\min(C_1 \triangle C_2)\in C_2$. For example, $\{5,8\} \leqnext  \{1,8\}$ since the minimum element in $\{5,8\} \triangle \{1,8\} = \{1,5\}$ is $1$, which is in $\{1,8\}$. Note that this is intuitively similar to skip-till-next-match, as we are selecting the first relevant event. An important property is that the $\leqnext$-relation forms a total order among complex events, implying the existence of a minimum and a maximum over any finite set of complex events.
\begin{lemma}\label{lemma:next-order}
	$\leqnext$ is a total order between complex events. 
\end{lemma}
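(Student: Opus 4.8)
The plan is to show that $\leqnext$ satisfies the three defining properties of a total order — reflexivity, antisymmetry, and transitivity — plus totality (any two complex events are comparable). Reflexivity is immediate from the definition, since the first clause ($C_1 = C_2$) directly gives $C_1 \leqnext C_1$. The real content lies in antisymmetry, transitivity, and totality, and the key technical observation I would isolate first is this: if $C_1 \neq C_2$, then $C_1 \triangle C_2$ is a nonempty finite set, so $m := \min(C_1 \triangle C_2)$ is well-defined, and $m$ lies in \emph{exactly one} of $C_1, C_2$ (by definition of symmetric difference). Thus exactly one of ``$m \in C_1$'' and ``$m \in C_2$'' holds, which gives totality almost for free: for distinct $C_1, C_2$, either $m \in C_2$ (so $C_1 \leqnext C_2$) or $m \in C_1$ (so $C_2 \leqnext C_1$), and these cases are mutually exclusive.

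\textbf{Antisymmetry} then follows from the same observation: if $C_1 \leqnext C_2$ and $C_2 \leqnext C_1$ with $C_1 \neq C_2$, then $m \in C_2$ and $m \in C_1$ simultaneously, contradicting that $m$ lies in only one of them. Hence $C_1 = C_2$. Note that $C_1 \triangle C_2 = C_2 \triangle C_1$, so the minimum $m$ is the same witness in both comparisons, which is what makes this contradiction clean.

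\textbf{Transitivity} is the step I expect to be the main obstacle. Suppose $C_1 \leqnext C_2$ and $C_2 \leqnext C_3$; I want $C_1 \leqnext C_3$. The difficulty is that the three pairwise symmetric differences have different minima in general, so I cannot directly compare witnesses. My plan is to reduce to the nontrivial case where $C_1, C_2, C_3$ are pairwise distinct, set $a = \min(C_1 \triangle C_2) \in C_2$ and $b = \min(C_2 \triangle C_3) \in C_3$, and let $c = \min(C_1 \triangle C_3)$; I must show $c \in C_3$. The clean way is to argue by the value of the threshold $t = \min(a, b)$ and to show that \emph{below} $t$ the three sets agree: for every index $k < t$, membership of $k$ is identical in $C_1, C_2, C_3$ (since $k < a$ forces $k$ to agree in $C_1, C_2$, and $k < b$ forces $k$ to agree in $C_2, C_3$). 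I would then do a short case analysis on whether $a < b$, $a > b$, or $a = b$, tracking in each case how the first point of disagreement between $C_1$ and $C_3$ relates to $t$, and checking it falls on the $C_3$ side. A cleaner alternative that avoids the case analysis is to observe that $\leqnext$ coincides with a lexicographic-style order induced by reading indices from smallest upward and preferring the set that \emph{contains} the first differing index; reformulating $\leqnext$ this way makes transitivity a standard property of such orders, and I would likely present the proof in that form if the bookkeeping of the direct case analysis becomes heavy. Finally, with reflexivity, antisymmetry, transitivity, and totality all established, I conclude that $\leqnext$ is a total order, and the existence of a minimum and maximum over any finite set follows immediately since a total order on a finite set always attains both.
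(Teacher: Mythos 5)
Your proposal is correct and follows essentially the same route as the paper's proof: reflexivity, antisymmetry, and totality all fall out of the observation that $\min(C_1 \triangle C_2)$ lies in exactly one of the two sets, and transitivity is handled by comparing the two witnesses $\min(C_1\triangle C_2)$ and $\min(C_2\triangle C_3)$ and noting the three sets agree below the smaller of them. The only cosmetic difference is that your case $a=b$ is vacuous (the paper observes $a\in C_2$ while $b\notin C_2$, so $a\neq b$), which only shortens the case analysis.
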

We can define now the semantics of $\NEXT$: for a CEL formula $\varphi$ we have that $C \in \sem{\NEXT(\varphi)}(S,\nu)$ if $C \in \sem{\varphi}(S, \nu)$ and for every complex event $C' \in \sem{\varphi}(S, \nu)$, if $\max(C) = \max(C')$ then $C \leqnext C'$. In our running example, when evaluating $\varphi_1$ over $S$ we have that $\{1,8\}$ matches $\NEXT(\varphi_1)$ but $\{5,8\}$ does not. Furthermore, $\{3,4,6,7\}$ matches $\NEXT(\varphi_4)$ while $\{3,4,7\}$ and $\{3,6,7\}$ do not. Note that we compare outputs with respect to $\leqnext$ that have the same final position. This way, complex events are discarded only when there is a \emph{preferred} complex event triggered by the same last event.

\smallskip
\noindent \textbf{LAST.} The $\NEXT$ selector is motivated by the computational benefit of skipping irrelevant events in a greedy fashion. However, from a semantic point of view it might not be what a user wants. For example, if we consider again $\varphi_1$ and stream $S$ (Section~\ref{sec:in_action}), we know that every complex event in $\NEXT(\varphi_1)$ will have event $1$. In this sense, the $\NEXT$ strategy selects the \emph{oldest} complex event for the formula. We argue here that a user might actually prefer the opposite, i.e. the most recent explanation for the matching of a formula. This is the idea captured by $\LAST$. Formally, the $\LAST$ selector is defined exactly as $\NEXT$, but changing the order $\leqnext$ by $\leqlast$: if $C_1$ and $C_2$ are two complex events, then $C_1 \leqlast C_2$ if either $C_1=C_2$ or $\max(C_1 \triangle C_2)\in C_2$. For example, $\{1,8\} \leqlast  \{5,8\}$. In our running example, $\LAST(\varphi_1)$ would select the \emph{most recent} temperature and humidity that explain the matching of $\varphi_1$ (i.e. $\{5, 8\}$), which might be a better explanation for a possible fire. Surprisingly, we show in Section~\ref{sec:evaluation} that $\LAST$ enjoys the same good computational properties as $\NEXT$.

\smallskip
\noindent \textbf{MAX.} A more ambitious selection strategy is to keep all the maximal complex events in terms of set inclusion. This corresponds to obtaining those complex events that are \emph{as informative as possible}, which could be naturally more useful for end users. Formally, given a CEL formula $\varphi$ we say that $C \in \sem{\MAX(\varphi)}(S, \nu)$ holds iff $C \in \sem{\varphi}(S, \nu)$ and for all $C' \in \sem{\varphi}(S, \nu)$, if $\max(C) = \max(C')$ then $C \subseteq C'$. Coming back to our example $\varphi_1$, the $\MAX$ selector will output both $\{1,8\}$ and $\{5,8\}$, given that both complex events are maximal in terms of set inclusion. On the contrary, formula $\varphi_3$ produced $\{3,6,7\}$, $\{3,4,7\}$, and $\{3,4,6,7\}$. Then, if we evaluate $\MAX(\varphi_3)$ over the same stream, we will obtain only $\{3,4,6,7\}$ as output, which is the maximal complex event. It is interesting to note that if we evaluate both $\NEXT(\varphi_3)$ and $\LAST(\varphi_3)$ over the stream we will also get $\{3,4,6,7\}$ as the only output, illustrating that $\NEXT$ and $\LAST$ also yield complex events with maximal information.\par

We have formally presented the foundations of a language for recognizing complex events, and how to restrict the outputs of this language in meaningful manners. In the following, we study practical aspects of the CEL syntax that impact how efficiently can formulas be evaluated.


\section{Syntactic analysis of CEL}\label{sec:language-prop}

We now turn to study the syntactic form of CEL formulas. We define \emph{well-formed} and \emph{safe} formulas, which are syntactic restrictions that characterize semantic properties of interest. Then, we define a convenient normal form and show that any formula can be rewritten in this form.

\subsection{Syntactic restrictions of formulas}

Although CEL has well-defined semantics, there are some formulas whose semantics can be unintuitive because the use of variables is not restricted. Consider for example
$$\varphi_5 \ = \ (H \as x) \FILTER (y.tmp \leq 30).$$ 
Here, $x$ will be naturally bound to the only element in a complex event, but $y$ will not \emph{add} a new position to the output. By the semantics of CEL, a valuation $\nu$ for $\varphi_5$ must assign a position for $y$ that satisfies the filter, but such position is not restricted to occur in the complex event. Moreover, $y$ is not necessarily bound to any of the events seen up to the last element, and thus a complex event could depend on future events. For example, if we evaluate $\varphi_5$ over our running example $S$ (Figure~\ref{fig:stream}), we have that $\{2\} \in \sem{\varphi_5}(S)$, but this depends on the event at position $6$. This means that to evaluate this formula we potentially need to inspect events that occur after all events composing the output complex event have been seen, an arguably undesired situation.

To avoid this problem, we introduce the notion of \emph{well-formed} formulas. As the previous example illustrates, this requires defining where variables are \emph{bound} by a sub-formula of the form $R\as x$. The set of bound variables of a formula $\varphi$ is denoted by $\vi(\varphi)$ and is recursively defined as follows:
$$
\renewcommand{\arraystretch}{1.2}
\begin{array}{rcl}
\vi(R\as x) & = & \{x\} \\
\vi(\rho \FILTER P(\bar{x})) & = & \vi(\rho)  \\
\vi(\rho_1 \cor \rho_2) & = & \vi(\rho_1) \cap \vi(\rho_2) \\
\vi(\rho+) & = & \emptyset \\
\vi(\rho_1 \sq \rho_2) & = & \vi(\varphi_1) \cup \vi(\varphi_2) \\
\vi(\mathtt{SEL}(\rho)) & = & \vi(\rho)
\end{array}
$$
where $\mathtt{SEL}$ is any selection strategy. Note that for the $\scor$ operator a variable must be defined in both formulas in order to be bound. We say that a CEL formula $\varphi$ is \emph{well-formed} if for every sub-formula of the form $\rho\FILTER P(\bar{x})$ and every $x\in \bar{x}$, there is another sub-formula $\rho_x$ such that $x\in\vi(\rho_x)$ and $\rho$ is a sub-formula of $\rho_x$.  Note that this definition allows for including filters with variables defined in a wider scope. For example, formula $\varphi_4$ in Section~\ref{sec:in_action} is well-formed although it has the not-well-formed formula $(T \as y\FILTER y.id=x.id)+$ as a sub-formula. \par

One can argue that it would be desirable to restrict the users to only write well-formed formulas. Indeed, the well-formed property can be checked efficiently by a syntactic parser and users should understand that all variables in a formula must be correctly defined. Given that well-formed formulas have a well-defined variable structure, in the future we restrict our analysis to well-formed formulas.\par

Another issue for CEL is that the reuse of variables can easily produce unsatisfiable formulas. For example, the formula $\psi = T \as x \sq T \as x$ is not satisfiable (i.e. $\sem{\psi}(S) = \emptyset$ for every $S$) because variable $x$ cannot be assigned to two different positions in the stream. However, we do not want to be too conservative and disallow the reuse of variables in the whole formula (otherwise formulas like $\varphi_2$ in Section~\ref{sec:in_action} would not be permitted). This motivates the notion of \emph{safe} CEL formulas. We say that a CEL formula is \emph{safe} if for every sub-formula of the form $\varphi_1 \sq \varphi_2$ it holds that $\vdefplus(\varphi_1) \cap \vdefplus(\varphi_2) = \emptyset$. For example, all CEL formulas in this paper are safe except for the formula~$\psi$ above.

The safe notion is a mild restriction to help the evaluation of CEL, and can be easily checked during parsing time. However, safe formulas are a subclass of CEL and it could be the case that they do not capture the full language. We show in the next result that this is not the case. Formally, we say that two CEL formulas $\varphi$ and $\psi$ are equivalent, denoted by $\varphi\equiv\psi$, if for every stream $S$ and complex event $C$, it is the case that $C \in \sem{\varphi}(S)$ if, and only if, $C \in \sem{\psi}(S)$.
\begin{theorem}\label{theo:safe}
	Given a core-CEL formula $\varphi$, there is a safe formula $\varphi'$ s.t. $\varphi\equiv\varphi'$ and $|\varphi'|$ is at most exponential in~$|\varphi|$.
\end{theorem}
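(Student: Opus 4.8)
The plan is to exploit the fact that a formula can fail to be safe only at a sequencing sub-formula $\varphi_1\sq\varphi_2$ whose two sides share a variable of $\vdefplus$, and to show that—once disjunctions are exposed—such a sequencing is \emph{genuinely unsatisfiable} and can be neutralised. The naive repair of renaming the offending variable on one side is unsound: $T\as x\sq T\as x$ is unsatisfiable, whereas $T\as x\sq T\as y$ is not. So instead of renaming I will (i) prove a forcing lemma, (ii) push all $\cor$'s to the surface, and (iii) deal separately with the $\ks$ operator, which is where the only real subtlety lies.

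The forcing lemma is: if $\sigma$ is a formula in which no $\cor$ occurs outside the scope of a $\ks$, then for every valuation $\nu$, every $C\in\sem{\sigma}(S,\nu)$ and every $x\in\vdefplus(\sigma)$ we have $\nu(x)\in C$. This follows by structural induction: $R\as x$ is immediate; $\FILTER$ changes neither $C$ nor $\vdefplus$; for $\rho_1\sq\rho_2$ the position $\nu(x)$ lies in $C_1$ or $C_2$ according to which side defines $x$, and $C=C_1\cup C_2$; for $\rho\ks$ we have $\vdefplus(\rho\ks)=\emptyset$, so the claim is vacuous; and $\rho_1\cor\rho_2$ cannot occur at this level. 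The consequence I need: if such a $\sigma$ has a \emph{top-level} sequencing $\sigma_1\sq\sigma_2$ (one outside every $\ks$) with $x\in\vdefplus(\sigma_1)\cap\vdefplus(\sigma_2)$, then $\nu(x)$ is forced into both $C_1$ and $C_2$, contradicting $C_1\cap C_2=\emptyset$ (which holds because $C_1\cdot C_2$ demands $\max(C_1)<\min(C_2)$). Hence $\sigma$ is unsatisfiable.

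For the normalisation I use $(\rho_1\cor\rho_2)\sq\tau\equiv(\rho_1\sq\tau)\cor(\rho_2\sq\tau)$, its mirror image, and $(\rho_1\cor\rho_2)\FILTER P\equiv(\rho_1\FILTER P)\cor(\rho_2\FILTER P)$, all immediate from the denotational semantics since the same $\nu$ is threaded through both sides. Treating each maximal $\ks$-sub-formula as opaque, these laws rewrite any formula into $\psi_1\cor\cdots\cor\psi_k$ where each $\psi_i$ is free of $\cor$ outside the scope of a $\ks$; a routine check shows each law preserves $\vdefplus$ exactly. The $\ks$-sub-formulas are treated recursively: $\mathrm{Safe}(\varphi)$ first replaces every maximal $\rho\ks$ by $(\mathrm{Safe}(\rho))\ks$, then normalises and repairs the surrounding level. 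Since each body $\rho$ has strictly smaller $\ks$-nesting depth, this recursion is well-founded.

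The repair replaces every disjunct $\psi_i$ that still contains an unsafe top-level sequencing—unsatisfiable by the lemma—with $\psi_i^{\dagger}:=(R\as x_1\sq\cdots\sq R\as x_r)\FILTER P_{\emptyset}(x_1)$, where $\{x_1,\dots,x_r\}=\vdefplus(\psi_i)$ and $P_{\emptyset}\in\pset$ is the empty predicate (present because $\pset$ is closed under intersection and complement). This $\psi_i^{\dagger}$ is safe (its variables are distinct), unsatisfiable, and has $\vdefplus(\psi_i^{\dagger})=\vdefplus(\psi_i)$; the surviving disjuncts are already safe, since their top-level sequencings are safe and their $\ks$-bodies were made safe recursively. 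The point requiring care—the main obstacle—is that one may \emph{not} simply delete an unsatisfiable disjunct inside a $\ks$: the semantics of $\rho\ks$ refers to $U=\vdefplus(\rho)$, so dropping a disjunct that contributes a variable to $\vdefplus(\rho)$ could change which variables are re-bound across iterations and hence the meaning. Replacing rather than deleting keeps $\vdefplus$ intact, and since the semantics of $\ks$ depends on its body only through $\sem{\cdot}$ (pointwise) and through $\vdefplus$, one obtains $\rho\ks\equiv\rho^{\dagger}\ks$ by induction on $|C|$ (every iteration contributes a non-empty block), so equivalence propagates through the whole formula. Finally, the only blow-up is the disjunctive normalisation, which is single-exponential at each level; because the exponents introduced at different $\ks$-levels are governed by disjoint portions of the formula and therefore add rather than compound, induction on the $\ks$-nesting depth yields $|\varphi'|\le 2^{O(|\varphi|)}$.
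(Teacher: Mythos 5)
Your proof is correct and follows the same route as the paper's: both first push every $\scor$ that is not guarded by a $\ks$ to the top level (an exponential disjunctive normal form, applied recursively to $\ks$-bodies), then prove the same forcing lemma (for formulas with no top-level $\scor$, $x\in\vdefplus(\sigma)$ forces $\nu(x)\in C$), conclude that any disjunct containing an unsafe top-level sequencing is unsatisfiable, and finally eliminate those disjuncts. The one place you genuinely diverge is the elimination step: the paper simply \emph{deletes} the unsatisfiable disjuncts, whereas you \emph{replace} each one by a safe, unsatisfiable placeholder $\psi_i^{\dagger}$ with the same $\vdefplus$. Your reason is sound: the semantics of $\rho\ks$ depends on its body not only through $\sem{\rho}$ but also through $U=\vdefplus(\rho)$, and deleting a disjunct of $\rho$ can shrink $U$ and hence change which variables are re-quantified across iterations; the paper's proof does not address this point, so your version is the more cautious of the two, at the cost of keeping (linear-size) dead disjuncts around. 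Two small things you leave implicit: (i) to pass from ``the offending sequencing subformula is unsatisfiable'' to ``the whole disjunct is unsatisfiable'' you need the easy propagation lemma that an unsatisfiable subformula occurring outside every $\ks$ kills the formula containing it --- the paper states and proves this as a separate lemma, and you should too; (ii) your placeholder requires $\vdefplus(\psi_i)\neq\emptyset$, which does hold because an unsafe top-level sequencing contributes at least one variable to $\vdefplus(\psi_i)$, but this deserves a sentence.
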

By this result, we can restrict our analysis to safe formulas without loss of generality. 
Unfortunately, we do not know if the exponential size of  $\varphi'$ is necessary. 
We conjecture that this exponential blow-up is unavoidable, however, we do not know yet the corresponding lower bound.

\subsection{LP-normal form}\label{sec:lp-normal-form}

Now we study how to rewrite CEL formulas in order to simplify the evaluation of unary filters. 
Intuitively, filter operators in a CEL formula can become difficult to handle for a CEP query engine.
To illustrate this, consider again formula $\varphi_1$ in Section~\ref{sec:in_action}. Syntactically, this formula states ``find an event $x$ followed by an event $y$, and then check that they satisfy the filter conditions''. However, we would like an execution engine to only consider those events $x$ with $id=0$ that represent temperature above 40 degrees. Only afterwards the possible matching events $y$ should be considered. In other words, formula $\varphi_1$ can be restated as:
\begin{multline*}
	\varphi_1'=[(T \as x) \FILTER (x.tmp > 40 \land x.id=0)];\\
	[(H \as y)\FILTER (y.hum <= 25\ \land y.id=0)]
\end{multline*}


This example motivates defining the \emph{locally parametrized} normal form (LP normal form). Let $\uset$ be the set of all predicates $P\in \pset$ of arity 1 (i.e. $P \subseteq \tuples(\cR)$). We say that a formula $\varphi$ is in LP-normal form if the following condition holds: for every sub-formula $\varphi'\FILTER P(\bar{x})$ of $\varphi$, if $P \in \uset$, then $\varphi' = R \as x$ for some $R$ and $x$. In other words, all filters containing unary predicates are applied directly to the definitions of their variables. For instance, formula $\varphi_1'$ is in LP-normal form while formulas $\varphi_1$ and $\varphi_2$ are not. Note that non-unary predicates are not restricted, and they can be used anywhere in the formula.

One can easily see that having formulas in LP-normal form would be an advantage for an evaluation engine, because it can \emph{filter out} some events as soon as they arrive (see Section~\ref{sec:experiments} for further discussion). However, formulas that are not in LP-normal form can still be very useful for declaring patterns. To illustrate this, consider the formula:
\begin{multline*}
	\varphi_6 = (T\as x); ((T \as y \FILTER x.temp \geq 40) \cor  \\
	(H \as y \FILTER x.temp < 40))
\end{multline*}
Here, the $\sFILTER$ operator works like a conditional statement: if the $x$-temperature is greater than $40$, then the following event should be a temperature, and a humidity event otherwise. This type of conditional statements can be very useful, but at the same time it can be hard to evaluate.
Fortunately, the next result shows that one can always rewrite a formula into LP-normal form, incurring in the worst case in an exponential blow-up in the size of the formula.  

\begin{theorem}\label{theo:LP-normal-form}
	Let $\varphi$ be a core-CEL formula. Then, there is a core-CEL formula $\psi$ in LP-normal form such that $\varphi\equiv\psi$, and $|\psi|$ is at most exponential in $|\varphi|$.
\end{theorem}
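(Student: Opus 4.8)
The plan is to relocate every \emph{unary} filter of $\varphi$ so that it ends up directly on the definition $R \as x$ of its variable, while leaving non-unary filters wherever they are (these are unrestricted in LP-normal form). Throughout I assume $\varphi$ is well-formed, following the running convention, and — invoking Theorem~\ref{theo:safe} at a first exponential cost — safe, so that along any sequence each top-level variable has a unique definition (this is a convenience: since a unary condition constrains the single value $\nu(x)$, in the non-safe case one may equivalently attach the filter to every definition of $x$).

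First I would establish the equivalence-preserving rewrite rules that let filters travel through a formula. Writing $P$ for a unary predicate on a variable $x$, the key identities — all proved by unfolding the semantics $\sem{\cdot}$ and observing that the condition $S[\nu(x)]\in P$ depends only on the \emph{shared} valuation $\nu$ — are: filters commute, $(\rho \FILTER Q)\FILTER P(x) \equiv (\rho \FILTER P(x))\FILTER Q$; both filtering and sequencing distribute over disjunction, $(\rho_1 \cor \rho_2)\FILTER P(x) \equiv (\rho_1 \FILTER P(x)) \cor (\rho_2 \FILTER P(x))$ and $\rho_1 \sq (\rho_2 \cor \rho_3) \equiv (\rho_1 \sq \rho_2)\cor(\rho_1 \sq \rho_3)$; and a unary filter may slide across a sequence to whichever side carries $x$, namely $\rho_1 \sq (\sigma \FILTER P(x)) \equiv (\rho_1 \FILTER P(x)) \sq \sigma$ whenever $x$ is defined in (or carried into) $\rho_1$. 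Since $\pset$ is closed under intersection, several unary filters landing on the same definition merge into one.

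Next, using the distribution laws I would push every $\cor$ outward as far as it will go, that is, to the top level of each maximal region not lying under a $\ks$. This rewrites $\varphi$ into a disjunction $\psi_1 \cor \cdots \cor \psi_m$ in which no $\psi_i$ contains $\cor$ except strictly inside the body of some $\ks$; those bodies are then normalized recursively by the same procedure (the disjunction of an iteration cannot be factored out, since successive repetitions may choose different branches). Within one $\cor$-free (modulo $\ks$) disjunct, safety gives each top-level variable a unique definition, so I can slide every top-level unary filter inward/leftward onto that definition and merge, which is size non-increasing. The genuine subtlety is the $\ks$ operator: its semantics evaluates $\rho\ks$ by rebinding the variables of $U=\vdefplus(\rho)$ through $\nu[\nu'\slash U]$, whereas a filter $P(x)$ sitting on $\rho\ks$ is evaluated under the \emph{carried-in} $\nu$. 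Hence for $x\in\vdefplus(\rho)$ the filter must stay inside and is dealt with by the recursive call on $\rho$, while for $x\notin\vdefplus(\rho)$ — which, since $\vi(\rho\ks)=\emptyset$, by well-formedness means $x$ is bound in an enclosing scope — the filter actually refers to the outer binding and must be pulled \emph{out} of the $\ks$ and relocated at the external definition of $x$.

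Finally I would bound the size and argue termination. Sliding and merging never increase size, so all growth comes from distributing over $\cor$: pulling a disjunction across a neighbouring subformula at worst duplicates it, and carrying all of the (at most $|\varphi|$ many) disjunctions to the top contributes a factor of $2^{O(|\varphi|)}$; to keep this a single exponential rather than a tower over nested $\ks$'s, the distribution should be performed once globally, treating each maximal $\ks$-body as an atom to be recursed into, so the number of disjuncts stays $2^{O(|\varphi|)}$ and each has size $O(|\varphi|)$. I expect the main obstacle to be exactly the interaction of unary filters with $\ks$: one must verify, against the $\nu[\nu'\slash U]$ rebinding, that pulling a filter out of a $\ks$ for carried-in variables and keeping it in for $\vdefplus$ variables is sound, and that the backward movement of filters toward left-hand definitions — the true source of the exponential blow-up, as in $\varphi_6$ — terminates and leaves a formula that is LP-normal on the nose, establishing Theorem~\ref{theo:LP-normal-form}.
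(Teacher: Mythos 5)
Your overall plan---relocate every unary filter onto the definition $R \as x$ of its variable, with the exponential cost coming from distributing over $\cor$---matches the intent of the paper's construction, and your local rewrite rules (commuting filters, distributing $\sFILTER$ and $\sq$ over $\cor$, sliding a unary filter across a $\sq$ within a disjunction-free context) are sound. The gap sits exactly where you locate "the genuine subtlety": your rule for a unary filter $P(x)$ occurring inside a $\ks$-body on a carried-in variable $x$, namely that it "must be pulled out of the $\ks$ and relocated at the external definition of $x$," is not equivalence-preserving. Since you (correctly) refuse to factor disjunctions out of $\ks$-bodies, such a filter may occur in only one branch of a $\cor$ inside the body. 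Consider
\[
\varphi \ = \ (R \as x) \sq \bigl((T \as y \FILTER P(x)) \cor (H \as y)\bigr)\ks .
\]
This formula is well-formed and safe, so it survives your preprocessing. A complex event all of whose iterations take the $H$-branch never evaluates $P(x)$ and hence lies in $\sem{\varphi}(S)$ whether or not $S[\nu(x)]\in P$; after relocating $P(x)$ onto $R \as x$, every output is forced to satisfy $P(x)$. So the rewritten formula is strictly more restrictive.

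The missing idea is the device the paper uses in its first step: a $\TRUE/\FALSE$ case split performed at the level $\varphi_x$ where $x$ is bound, rewriting $\varphi_x$ into $(\varphi_x^{t}\FILTER P(x)) \cor (\varphi_x^{f}\FILTER \lnot P(x))$, where $\varphi_x^{t}$ (resp.\ $\varphi_x^{f}$) is $\varphi_x$ with the inner occurrence of $P(x)$ replaced by $\TRUE$ (resp.\ $\FALSE$). This exploits the closure of $\pset$ under complement and is sound precisely because $x$ is not rebound between $\varphi_x$ and the filter; it correctly preserves the branch that avoids the filter (via $\varphi_x^{f}$) while your "pull out" rule discards it. Only after this step does the paper push each filter down onto every definition $R\as x$ and merge stacked filters using closure under intersection. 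Two smaller cautions: the theorem is stated for arbitrary core-CEL formulas and the paper's proof does not route through Theorem~\ref{theo:safe}, so if you first apply Theorem~\ref{theo:safe} (exponential) and then distribute over $\cor$ again (another exponential), you owe an argument that the composition stays singly exponential; and your "slide across $\sq$" rule must be stated so that it is only applied in the disjunction-free contexts produced by your DNF step, otherwise it suffers the same optional-branch problem outside of $\ks$ as well.
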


%

The importance of this result and Theorem~\ref{theo:safe} will become clear in the next sections, where we show that safe formulas in LP-normal form have good properties for evaluation. Similar to Theorem~\ref{theo:safe}, we do not know if the exponential blow-up is unavoidable and leave this for future work.

\section{A computational model for CEL}\label{sec:computational-models}

In this section, we introduce a formal computational model for evaluating CEL formulas called \emph{complex event automata} (CEA for short). Similar to classical database management systems (DBMS), it is useful to have a formal model that stands between the query language and the evaluation algorithms, in order to simplify the analysis and optimization of the whole evaluation process. There are several examples of DBMS that are based on this approach like regular expressions and finite state automata~\cite{HopcroftU79,Aho90}, and relational algebra and SQL~\cite{abiteboul1995foundations,Ramakrishnan2003}. Here, we propose CEA as the intermediate evaluation model for CEL and show later how to compile any (unary) CEL formula into a CEA.

As its name suggests, complex event automata (CEA) are an extension of \emph{Finite State Automata} (FSA). The first difference from FSA comes from handling streams instead of words. A CEA is said to run over a stream of tuples, unlike FSA which run over words of a certain alphabet. The second difference arises directly from the first one by the need of processing tuples, which can have infinitely many different values, in contrast to the finite input alphabet of FSA. To handle this, our model is extended the same way as a Symbolic Finite Automata (SFA)~\cite{veanes2013applications}. SFAs are finite state automata in which the alphabet is described implicitly by a boolean algebra over the symbols. This allows automata to work with a possibly infinite alphabet and, at the same time, use finite state memory for processing the input. CEA are extended analogously, which is reflected in transitions labeled by unary predicates over tuples.
The last difference addresses the need to generate complex events instead of boolean answers. A well known extension for FSA are \emph{Finite State Transducers}~\cite{berstel2013transductions}, which are capable of producing an output whenever an input element is read. Our computational model follows the same approach: CEA are allowed to generate and output complex events when reading a stream.

Recall from Section~\ref{sec:language-prop} that $\uset$ is the subset of unary predicates of $\pset$. Let $\amark, \umark$ be two symbols. A \emph{complex event automaton} (CEA) is a tuple $\cA = (Q, \Delta, I, F)$ where $Q$ is a finite set of states, $\Delta \subseteq Q \times (\uset \times \{\amark, \umark\}) \times Q$ is the transition relation, and $I, F \subseteq Q$ are the set of initial and final states, respectively. Given a stream $S = t_0 t_1 \ldots$, a run $\rho$ of $\cA$ over $S$ is a sequence of transitions:
$\rho: q_0 \ \trans{P_0 / m_0} \ q_1 \  \trans{P_1 / m_1} \ \cdots \ \trans{P_n / m_n} \ q_{n+1}$
such that $q_0 \in I$, $t_i \in P_i$ and $(q_i, P_{i}, m_{i}, q_{i+1}) \in \Delta$ for every $i \leq n$. We say that $\rho$ is \emph{accepting} if $q_{n+1} \in F$ and $m_n = \amark$. 
We denote by $\run_n(\cA, S)$ the set of accepting runs of $\cA$ over $S$ of length $n$. Further, $\rmatch(\rho)$ denotes the set of positions where the run \emph{marks} the stream, namely $\rmatch(\rho) = \{i \in [0,n] \, \mid \, m_i = \amark \}$. Intuitively this means that when a transition is taken, if the transition has the $\amark$ symbol then the \emph{current} position of the stream is included in the output (similar to the execution of a transducer). Note that we require the last position of an accepting run to be marking, as otherwise an output could depend on \emph{future} events (see the discussion about well-formed formulas in Section~\ref{sec:language-prop}). Given a stream $S$ and $n \in \bbN$, we define the set of complex events of $\cA$ over~$S$~ at position $n$~as
$
\sem{\cA}_n(S) = \{\rmatch(\rho) \mid \rho \in \run_n(\cA, S) \}
$ and the set of all complex events as $\sem{\cA}(S) = \bigcup_n \sem{\cA}_n(S)$.
Note that $\sem{\cA}(S)$ can be infinite, but $\sem{\cA}_n(S)$ is finite.

Consider as an example the CEA $\cA$ depicted in Figure~\ref{fig:unbounded_ma}. In this CEA, each transition $P(x)\!\mid\!\amark$ marks one $H$-tuple and each transition $P'(x)\!\mid\!\amark$ marks a sequence of $T$-tuples with temperature bigger than $40$. Note also that the transitions labeled by $\TRUE\!\mid\!\umark$ allow $\cA$ to arbitrarily skip tuples of the stream. Then, for every stream~$S$, $\sem{\cA}(S)$ represents the set of all complex events that begin and end with an $H$-tuple and also contain some of the $T$-tuples with temperature higher than $40$.

\begin{figure}
	\begin{center}
	\begin{tikzpicture}[->,>=stealth, semithick, auto, initial text= {}, initial distance= {3mm}, accepting distance= {4mm}, node distance=2.5cm, semithick]
		\tikzstyle{every state}=[draw=black,text=black,inner sep=0pt, minimum size=8mm]
		\node[initial,state]	(1) 				{$q_1$};
		\node[state]			(2) [right of=1]	{$q_2$};
		\node[accepting,state]	(3) [right of=2]	{$q_3$};
		\path
		(1)
		edge 				node {$P(x) \mid \amark$} (2)
		edge [loop above] 	node {$\TRUE \mid \umark$} (1)
		(2)
		edge [in=100,out=130,loop] node[pos=0.55] {$P'(x) \mid \amark$} (2)
		edge [in=50,out=80,loop] node[pos=0.45] {$\TRUE \mid \umark$} (2)
		edge node {$P(x) \mid \amark$} (3);
	\end{tikzpicture}
	\vspace{-.1cm}
	\caption{A CEA that can generate an unbounded amount of complex events. Here $P(x) := \type(x)=H$ and $P'(x) := \type(x)=T \wedge x.temp > 40$.
	}\label{fig:unbounded_ma}
	\vspace{-0.7cm}
	\end{center}
\end{figure}
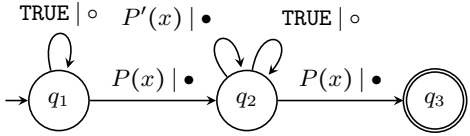

It is important to stress that CEA are designed to be an evaluation model for the unary sub-fragment of CEL (a formal definition is presented in the next paragraph). Several computational models have been proposed for complex event processing~\cite{cayuga,distCED,SASE,nextCEP}, but most of them are informal and non-standard extensions of finite state automata. In our framework, we want to give a step back compared to previous proposals and define a simple but powerful model that captures the \emph{regular core} of CEL. With ``regular'' we mean all CEL formulas that can be evaluated with finite state memory.  Intuitively, formulas like $\varphi_1$, $\varphi_2$ and $\varphi_3$ presented in Section~\ref{sec:in_action} can be evaluated using a bounded amount of memory. In contrast, formula $\varphi_4$ needs unbounded memory to store \emph{candidate} events seen in the past, and thus, it calls for a more sophisticated model (e.g. data automata~\cite{segoufin2006automata}). Of course one would like to have a full-fledged model for CEL, but to this end we must first understand the regular fragment. For these reasons, a computational model for the whole CEP logic is left as future work (see Section~\ref{sec:conclusions}). 

\smallskip
\noindent {\bf Compiling unary CEL into CEA.}
We now show how to compile a well-formed and unary CEL formula $\varphi$ into an equivalent CEA $\cA_\varphi$.
Formally, we say that a CEL formula $\varphi$ is unary if for every subformula of $\varphi$ of the form $\varphi'\FILTER P(\bar{x})$, it holds that $P(\bar{x})$ is a unary predicate (i.e. $P(\bar{x}) \in \uset$).
For example, formulas $\varphi_1$, $\varphi_2$, and $\varphi_3$ in Section~\ref{sec:in_action} are unary, but formula $\varphi_4$ is not (the predicate $y.id = x.id$ is binary).
As motivated in Section~\ref{sec:in_action} and~\ref{sec:lp-normal-form}, and further supported by our experiments (see Section~\ref{sec:experiments}), despite their appear simplicity unary formulas already present non-trivial computational challenges.



\begin{theorem}\label{theo:core-automata}
	For every well-formed formula $\varphi$ in unary core-CEL, there is a CEA $\cA_\varphi$ equivalent to $\varphi$. Furthermore, $\cA_\varphi$ is of size at most linear in $|\varphi|$ if $\varphi$ is safe and in LP-normal form and at most double exponential in $|\varphi|$ otherwise.
\end{theorem}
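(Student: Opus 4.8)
The plan is to establish the linear bound first, by a direct structural induction that compiles $\varphi$ into $\cA_\varphi$ in Thompson style, and then to obtain the general double-exponential bound by composing Theorem~\ref{theo:safe} and Theorem~\ref{theo:LP-normal-form}. For the linear case I assume $\varphi$ is unary, safe, and in LP-normal form, and I build $\cA_\varphi$ by induction while maintaining the invariant that every initial state carries a $\TRUE \mid \umark$ self-loop (so that irrelevant prefixes and gaps can always be skipped). The base case $R \as x$ is a single marking transition $P_R \mid \amark$ into a final state; the only filtered form that survives LP-normal form, $(R \as x)\FILTER P(x)$ with $P \in \uset$, becomes the single transition $(P_R \cap P) \mid \amark$, using that $\pset$ is closed under intersection and contains each $P_R$. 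For $\rho_1 \cor \rho_2$ I take the union of the two automata and keep all initial and all final states; since a run may begin in any initial state, this realizes the union of outputs without the $\epsilon$-transitions that CEA lack. For $\rho_1 \sq \rho_2$ I splice by adding, for each final state $f$ of $\cA_1$ and each transition $(i,P,m,q) \in \Delta_2$ out of an initial state $i \in I_2$, a new transition $(f,P,m,q)$, keeping $I_1$ initial and $F_2$ final and dropping the old finality of $F_1$; the copied $\TRUE \mid \umark$ loops skip the gap, and since consecutive transitions read consecutive positions, strictness $\max(C_1) < \min(C_2)$ is automatic. Iteration $\rho\ks$ uses the same splice but loops each $f \in F_\rho$ back into the initial transitions of $\cA_\rho$, keeping $I_\rho$ initial and $F_\rho$ final. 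Each step adds only linearly many states and transitions, so $|\cA_\varphi| = O(|\varphi|)$.

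The subtle point is correctness rather than size, and this is exactly where safety and LP-normal form enter. Since core-CEL treats variables as second-class (they never appear in the output), the construction ignores variable names, which is sound only if no variable reuse forces two distinct output positions to coincide. Safety excludes this at every sequencing node --- it forbids, for instance, $T \as x \sq T \as x$, which is unsatisfiable yet would spuriously produce outputs --- while the unary LP-normal shape guarantees that every filter constrains exactly the position at which its variable is marked, so no ``carried'' filter must be re-enforced at a later, variable-blind transition. I would then prove by induction the invariant $\sem{\cA_\varphi}_n(S) = \{C \in \sem{\varphi}(S) : \max(C) = n\}$. The only case where valuations matter is $\sq$, whose semantics demands a single $\nu$ on both sides; safety together with LP-normal locality is precisely what lets this shared-valuation requirement decouple into independent choices on the two operands, matching the splice. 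The iteration case is driven by the unfolding $\rho\ks \equiv \rho \cor (\rho \sq \rho\ks)$ read directly off the semantics, so the loop realizes one-or-more repetitions and the refreshing of $U = \vdefplus(\rho)$ comes for free from variable-blindness.

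For an arbitrary well-formed unary $\varphi$ I would first apply Theorem~\ref{theo:safe} to obtain an equivalent safe formula of size $2^{O(|\varphi|)}$, then Theorem~\ref{theo:LP-normal-form} to put it in LP-normal form at a further exponential cost, yielding an equivalent safe, unary, LP-normal formula of size $2^{2^{O(|\varphi|)}}$; the linear construction then produces a CEA of double-exponential size. I expect the main obstacle here to be bookkeeping rather than a new idea: I must check that the two rewritings compose, i.e.\ that they preserve unarity and well-formedness (clear, as neither introduces higher-arity predicates nor frees a variable) and, crucially, that the output is simultaneously safe and in LP-normal form. The cleanest route is to verify that the LP-normal rewriting preserves safety --- it only relocates and duplicates unary filters, leaving the $R \as x$ definitions, and hence the $\vdefplus$ sets at each sequencing node, unchanged --- so that applying Theorem~\ref{theo:safe} and then Theorem~\ref{theo:LP-normal-form} lands in the hypothesis of the linear case; should the order need to be reversed, one argues symmetrically that making a formula safe does not relocate filters and thus preserves LP-normal form.
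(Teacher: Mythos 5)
Your proposal is correct and follows essentially the same route as the paper: compile safe, unary, LP-normal formulas by a Thompson-style structural induction with $\TRUE\mid\umark$ skip loops (the paper introduces $\epsilon$-transitions and eliminates them afterwards, which is exactly your transition splicing done in two steps), prove correctness by induction on the formula using safety to decouple the shared valuation at $\sq$, and obtain the double-exponential bound by composing Theorems~\ref{theo:safe} and~\ref{theo:LP-normal-form}. Your added care about the order of the two rewritings --- in particular that LP-normalization preserves safety, so that the composition lands in the hypotheses of the linear case --- is a bookkeeping point the paper glosses over, but it does not change the argument.
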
 
The proof of Theorem~\ref{theo:core-automata} is closely related with the safeness condition and the LP-normal form presented in Section~\ref{sec:language-prop}. The construction goes by first converting $\varphi$ into an equivalent CEL formula $\varphi'$ in LP-normal form (Theorem~\ref{theo:LP-normal-form}) and then building an equivalent CEA from $\varphi'$. We show that there is an exponential blow-up for converting $\varphi$ into LP-normal form.
Furthermore, we show that the output of the second step is of linear size if $\varphi'$ is safe, and of exponential size otherwise, suggesting that restricting the language to safe formulas allows for more efficient evaluation.

So far we have described the compilation process without considering selection strategies. To include them, we need to extend our notation and allow selection strategies to be applied directly over CEA.
Given a CEA $\cA$, a selection strategy $\SEL$ in $\{\sSTRICT,\sNEXT,\LAST,\sMAX\}$ and stream $S$, the set of outputs $\sem{\SEL(\cA)}(S)$ is defined analogously to $\sem{\SEL(\varphi)}(S)$ for a formula $\varphi$. Then, we say that a CEA $\cA_1$ is equivalent to $\SEL(\cA_2)$ if $\sem{\cA_1}(S)=\sem{\SEL(\cA_2)}(S)$ for every stream $S$.

\begin{theorem}\label{theo:selectors-compilation}
	Let $\SEL$ be a selection strategy. For any CEA $\cA$, there is a CEA $\cA_{\SEL}$ equivalent to $\SEL(\cA)$. Furthermore, the size of $\cA_{\SEL}$ is, w.r.t. the size of $\cA$, at most linear if $\SEL=\sSTRICT$, and at most exponential otherwise.
\end{theorem}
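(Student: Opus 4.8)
The plan is to treat $\sSTRICT$ separately, since it is the only one that should stay linear. I would build $\cA_{\sSTRICT}$ as a two‑phase copy of $\cA$: keep two copies of every state, a \emph{skipping} copy and a \emph{marking} copy. In the skipping copies I retain only the $\umark$‑transitions of $\cA$, together with, for each $\amark$‑transition $(q,P,\amark,q')$ of $\cA$, an edge $P\mid\amark$ that jumps to the marking copy of $q'$; in the marking copies I retain only the $\amark$‑transitions (staying among marking copies). Initial states live in the skipping copies and final states in the marking copies. An accepting run then reads a prefix of pure skips and afterwards marks a contiguous block, so $\rmatch$ of any run is an interval, and conversely every interval output of $\cA$ is reproduced (an interval output forces all‑$\umark$ before its minimum and all‑$\amark$ throughout). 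This at most doubles $|Q|$, giving the linear bound.

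\textbf{NEXT, LAST, MAX.} For the other three I would use a single \emph{competitor‑tracking} subset construction that at once determinizes $\cA$ along a fixed mark sequence and detects domination. The key observation is that fixing the sequence of $\amark/\umark$ marks pins down deterministically the set $R\subseteq Q$ of states reachable by accepting‑run prefixes carrying exactly those marks; writing $\delta_m(R,t)$ for the set of states reached from $R$ by an $m$‑transition whose predicate holds at $t$, a complex event $C$ is an output of $\cA$ iff, after reading the indicator sequence of $C$, one has $\delta_\amark(R,t_{\max(C)})\cap F\neq\emptyset$. On top of $R$ I maintain a second family of subsets of $Q$ recording the states of all runs witnessing a complex event that would displace $C$ under $\SEL$; call these the \emph{competitors} $W$. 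Since $R$ and the competitor sets are all subsets of $Q$, the whole state space is a constant number of copies of $2^{Q}\times 2^{Q}$, i.e. single‑exponential; crucially I do \emph{not} I/O‑determinize first and then add competitors (which would cost a double exponential), but fold the determinization (the $R$‑component) and the competitor tracking into one subset construction. The marks emitted by $\cA_{\SEL}$ are exactly those of the candidate $C$, so $\rmatch$ of a surviving run equals $C$.

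It then remains to specify the competitor update and the acceptance test, which have the same shape in all three cases. For $\sNEXT$ a competitor is a run agreeing with $C$ below some position $j$ and marking $j$ while $C$ does not (marking earlier is $\leqnext$‑preferred, as in $\{1,8\}$ versus $\{5,8\}$): competitors always evolve freely, $W\mapsto \delta_\amark(W,t)\cup\delta_\umark(W,t)$, and at every position that $C$ leaves unmarked I additionally spawn $\delta_\amark(R,t)$ into $W$. For $\sMAX$ a competitor is a run marking a strict superset of $C$; the rule is identical except that competitors are \emph{forced} to mark whenever $C$ marks, via $W\mapsto\delta_\amark(W,t)$, so as to remain supersets. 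For $\sLAST$, where the decisive difference is the highest one, I split the competitor states into an \emph{ahead} set and a \emph{behind} set according to the most recent position at which they differ from $C$, let a competitor flip class exactly when a new difference or agreement occurs as dictated by $\leqlast$, and spawn from $R$ into the class determined by the current mark of $C$. In every case $\cA_{\SEL}$ accepts $C$ at $n=\max(C)$ iff $\delta_\amark(R,t_n)\cap F\neq\emptyset$ and no (ahead‑)competitor reaches $F$ by marking $n$, i.e. $\delta_\amark(W,t_n)\cap F=\emptyset$. This last test depends on $t_n$ only through predicates, so I encode it by replacing each marking transition with two transitions guarded by a predicate expressing $\delta_\amark(W,t_n)\cap F=\emptyset$ and by its complement; such guards exist because $\pset$ is closed under Boolean operations and remain in $\uset$, and a duplicated accepting copy of each target state carries the success branch. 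This multiplies the state count by a constant and keeps the single‑exponential bound.

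\textbf{Main obstacle.} The real work is proving that the competitor sets capture \emph{all and only} the threatening runs, so that the acceptance test is sound and complete for each $\SEL$. The delicate case is $\sLAST$: because the decisive difference is the last one, the ahead/behind status of a competitor is not monotone and can flip repeatedly, so I must establish the invariant that at every prefix the two tracked subsets record exactly the $\leqlast$‑comparison that will hold at the (still unknown) final position, and check that spawning from $R$ neither misses a competitor nor manufactures a spurious one. Once this invariant is in place, $\sNEXT$ and $\sMAX$ follow from the same template with simpler, monotone update rules. I would close by verifying that none of the three constructions exceeds a single exponential while $\sSTRICT$ is linear, matching the stated bounds.
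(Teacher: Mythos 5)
Your proposal is correct and its overall architecture matches the paper's for two of the four cases: the $\sSTRICT$ construction (two copies of $\cA$, a skipping phase feeding into a marking phase, giving a factor-2 blow-up) is exactly the paper's, and your $\sMAX$ construction is also the paper's almost verbatim --- the paper's states are pairs $(S,T)$ with $S$ the states of runs producing the candidate $C$ and $T$ the states of runs producing strict supersets, updated by $T'=\Delta(T,P,\amark)$, $S'=\Delta(S,P,\amark)\setminus T'$ on marking and $T'=\Delta(T,P)\cup\Delta(S,P,\amark)$, $S'=\Delta(S,P,\umark)\setminus T'$ on skipping, with acceptance $S\cap F\neq\emptyset$ and $T\cap F=\emptyset$; this is precisely your ``forced-to-mark'' versus ``evolve freely and spawn'' competitor rule. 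Where you genuinely diverge is $\sNEXT$ and $\sLAST$: the paper does not track pairwise competitors against the candidate but instead maintains, as the automaton state, a \emph{total preorder} $T_1\ldots T_k$ of pairwise disjoint subsets of $Q$ together with a distinguished simulated state $p$, updated by a $\TPO$ operation that keeps the groups sorted consistently with $\leqnext$ (respectively $\leqlast$, by reordering the concatenation of the $\amark$- and $\umark$-images), and it proves a lemma that group index faithfully reflects the order on $\rmatch$ of simultaneous runs; acceptance then asks that no earlier group contains a final state. Your construction instead fixes $C$ through the emitted mark sequence, determinizes the $R$-component along it, and keeps only the binary ahead/behind comparison (with the extra care you correctly identify for $\sLAST$, where the comparison is non-monotone, and for routing the final $\delta_\amark(W,t_n)\cap F=\emptyset$ test into guarded transitions). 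What each buys: your version is easier to verify locally (one pairwise invariant per strategy, a clean $4^{|Q|}$ or $8^{|Q|}$ state bound, and I/O-determinism essentially for free), whereas the paper's global preorder is heavier to set up but is reused directly in its linear-update evaluation algorithms for $\sNEXT$ and $\sLAST$, which walk the ordered queue of disjoint state sets. I see no step in your plan that would fail; the one piece you must still supply in full is the $\sLAST$ ahead/behind invariant you flagged, which does go through with the flip rules you describe.
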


At first this result might seem unintuitive, specially in the case of $\sNEXT$, $\LAST$ and $\sMAX$. It is not immediate (and rather involved) to show that there exists a CEA for these strategies because they need to \emph{track} an unbounded number of complex events using finite memory. Still, this can be done with an exponential blow-up in the number of states.\par

Theorem~\ref{theo:selectors-compilation} concludes our study of the compilation of unary CEL into CEA. We have shown that CEA is not only able to evaluate CEL formulas, but also that it can be further exploited to evaluate selections strategies. We finish by introducing the notion of I/O-determinism that will be crucial for our evaluation algorithms in the next section.

\smallskip
\noindent {\bf I/O-deterministic CEA.} To evaluate CEA in practice we will focus on the class of the so-called \emph{I/O-deterministic} CEA (for Input/Output deterministic). We say that a CEA $\cA$ is I/O-deterministic if $|I| = 1$ and for any two transitions $(p,P_1,m_1,q_1)$ and $(p,P_2,m_2,q_2)$, either $P_1$ and $P_2$ are mutually exclusive (i.e. $P_1 \cap P_2 = \emptyset$), or $m_1 \neq m_2$. Intuitively, this notion imposes that given a stream $S$ and a complex event $C$, there is at most one run over $S$ that generates $C$ (thus the name referencing the input and the output). In contrast, the classical notion of determinism would require that there is at most one run over the entire stream.

I/O-deterministic CEA are important because they allow for a simple and efficient evaluation algorithm (discussed in Sections~\ref{sec:evaluation}~and~\ref{sec:experiments}). But for this algorithm to be useful, we need to make sure that every CEA can be I/O determinized. 
Formally, we say that two CEA $\cA_1$ and $\cA_2$ are equivalent (denoted $\cA_1\equiv\cA_2$) if for every stream $S$ we have $\sem{\cA_1}(S)=\sem{\cA_2}(S)$. 
Then we say that CEA are closed under I/O determinism if for every CEA $\cA$ there is an I/O-deterministic CEA $\cA'$ such that $\cA\equiv\cA'$.

\begin{proposition} \label{prop:CEA_closure}
	CEA are closed under I/O-determinism.
\end{proposition}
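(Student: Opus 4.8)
The plan is to carry out a subset (powerset) construction in the spirit of the standard determinization of symbolic automata, but adapted so that the output marks are treated as \emph{visible} and are never merged away. The key observation driving this is that a complex event completely determines the mark sequence of any run producing it: for a run $\rho$ of length $n$ we have $\rmatch(\rho) = \{i : m_i = \amark\}$, so two runs yielding the same complex event must use the same marks at every position. Hence I/O-determinism amounts to ordinary determinism once the mark is fixed, and the construction should determinize only the nondeterminism coming from overlapping predicates, one mark at a time. Concretely, given $\cA = (Q,\Delta,I,F)$ I would build $\cA' = (2^Q, \Delta', \{I\}, F')$ with a single initial state $\{I\}$ and $F' = \{S \subseteq Q : S \cap F \neq \emptyset\}$.

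The transitions $\Delta'$ are defined per mark using minterms. Fixing a subset-state $S \subseteq Q$ and a mark $m \in \{\amark,\umark\}$, let $\Pi^m_S$ be the set of predicates labeling $m$-marked transitions out of states of $S$. For each $\Pi \subseteq \Pi^m_S$ I form the minterm $\beta_\Pi = \bigcap_{P \in \Pi} P \cap \bigcap_{P \in \Pi^m_S \setminus \Pi} \overline{P}$ and, when $\beta_\Pi$ is satisfiable and its target set $T = \{q : \exists p \in S,\ P \in \Pi,\ (p,P,m,q) \in \Delta\}$ is nonempty, add the transition $(S, \beta_\Pi, m, T)$ to $\Delta'$. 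Since $\pset$ is closed under intersection, union and complement, each $\beta_\Pi$ is again a unary predicate in $\uset$, so $\cA'$ is a legal CEA. I/O-determinism then follows immediately: distinct minterms of $\Pi^m_S$ are mutually exclusive, so any two transitions out of $S$ carrying the \emph{same} mark have disjoint predicates, while transitions with different marks are allowed to overlap; together with $|\{I\}| = 1$ this is exactly the required condition.

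Correctness would be established through the usual subset invariant, proved by induction on the length of the prefix read. The invariant states that for every stream $S$, every $i$, and every mark prefix $m_0 \cdots m_{i-1}$, the unique run of $\cA'$ over $S$ using those marks reaches the subset-state $S_i = \{q \in Q : \text{some run of } \cA \text{ over } S \text{ reading } t_0\cdots t_{i-1} \text{ with marks } m_0\cdots m_{i-1} \text{ ends at } q\}$; uniqueness of this $\cA'$-run holds because at each step $t_i$ lies in exactly one minterm of the current partition. Applying the invariant at length $n$, a run of $\cA'$ is accepting iff it ends in $F'$ \emph{and} $m_n = \amark$ (the latter forced by the CEA acceptance definition), which by the invariant is equivalent to the existence of an accepting run of $\cA$ of length $n$ with the same marks; since both runs carry identical marks they have identical $\rmatch$. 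Thus $\sem{\cA'}_n(S) = \sem{\cA}_n(S)$ for every $n$, giving $\sem{\cA'}(S) = \sem{\cA}(S)$ and hence $\cA \equiv \cA'$. The construction costs at most $2^{|Q|}$ states.

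The main obstacle, and the point deserving the most care, is precisely the interaction between the marks and the acceptance condition. Unlike classical determinization, one must not collapse runs that differ only in their marks, since those marks are the output; the fix is to keep the mark on the label and determinize each mark class separately, which is what makes the result I/O-deterministic rather than fully deterministic. A related delicate point is the acceptance requirement $m_n = \amark$: rather than encoding the last mark into the states, I rely on the CEA acceptance semantics to enforce it automatically, so that $F'$ can be the plain ``contains a final state'' set while still producing exactly the complex events of $\cA$. Verifying that these two design choices jointly preserve $\rmatch$ on the nose is the crux of the argument; the remaining minterm bookkeeping and the inductive invariant are routine.
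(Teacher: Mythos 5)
Your construction is correct and is essentially the paper's: a powerset construction with single initial state $\{I\}$, final states $\{S : S\cap F\neq\emptyset\}$, and transitions labeled by mutually exclusive minterm predicates so that nondeterminism is resolved per mark, with the $m_n=\amark$ acceptance requirement inherited automatically. The only cosmetic difference is that you form minterms locally per subset-state and mark whereas the paper uses a single global partition $\types{\cP}$ over all predicates of $\Delta$, and you spell out the subset-invariant correctness argument that the paper leaves implicit.
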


This result and the compilation process allow us to evaluate CEL formulas by means of I/O-deterministic CEA without loss of generality. In the next section we present an algorithm to perform this evaluation efficiently.






\section{Algorithms for evaluating CEA}\label{sec:evaluation}

In this section we show how to efficiently evaluate a complex event automaton (CEA).
We first formalize the notion of an efficient evaluation in the context of CEP and then provide algorithms to evaluate CEA efficiently.

\subsection{Efficiency in CEP}
Defining a notion of efficiency for CEP is challenging since we would like to compute complex events in one pass and using a restricted amount of resources.
Streaming algorithms~\cite{ikonomovska2013algorithmic,golab2003issues} are a natural starting point as they usually restrict the time allowed to process each tuple and the space needed to process the first $n$ items of a stream (e.g., constant or logarithmic in $n$).
However, an important difference is that in CEP the arrival of a single event might generate an exponential number of complex events as output. Therefore no algorithm producing this output could guarantee any sort of efficiency, because there are particular examples in which only generating the outputs take exponential time in size of the processed sub-stream. To overcome this problem, we propose to divide the evaluation in two parts: (1) consuming new events and updating the internal memory of the system and (2) generating complex events from the internal memory of the system. We require both parts to be as efficient as possible. First, (1) should process each event in a time that does not depend on the number of events seen in the past. Second, (2) should not spend any time \emph{processing} and instead it should be completely devoted to generating the output. To formalize this notion, we assume that there is a special instruction ${\tt yield}_S$ that returns the next element of a stream~$S$. Then, given a function $f:\bbN\rightarrow\bbN$, a \emph{CEP evaluation algorithm} with $f$-update time is an algorithm that evaluates a CEA $\cA$ over a stream $S$ such that:
\begin{enumerate} 
	\item between any two calls to ${\tt yield}_S$, the time spent is bounded by $\cO(f(|\cA|)\cdot|t|)$, where $t$ is the tuple returned by the first of such calls, and
	\item maintains a data structure $D$ in memory, such that after calling ${\tt yield}_S$ $n$ times, the set $\sem{\cA}_n(S)$ can be enumerated from $D$ with constant delay. 
\end{enumerate}
The notion of constant-delay enumeration was defined in the database community~\cite{Segoufin13,Bagan:2007} precisely for defining efficiency whenever the output might be larger than the input. Formally, it requires the existence of a routine $\textproc{Enumerate}$ that receives $D$ as input and outputs all complex events in $\sem{\cA}_n(S)$ without repetitions, while spending a constant amount of time before and after each output. Naturally, the time to generate a complex event $C$ must be linear in $|C|$. We remark that (1) is a natural restriction imposed in the streaming literature~\cite{ikonomovska2013algorithmic}, while (2) is the minimum requirement if an arbitrarily large set of arbitrarily large outputs must be produced~\cite{Segoufin13}.

Note that the update time $\cO(f(|\cA|)\cdot|t|)$ is linear in $|t|$ if we consider that $\cA$ is fixed. Since this is the case in practice (i.e. the automaton is generally small with respect to the stream, and does not change during evaluation), this amounts to constant update time when measured under data complexity (tuples can also be considered of constant size).

\subsection{Evaluation of I/O-deterministic CEA} \label{sec:evaluation-iodet}
We describe a CEP evaluation algorithm with $f(n)=n$ update time for I/O-deterministic CEA.
We define the algorithm's underlying data structure, then show how to update this data structure upon new events, and finally how to enumerate the resulting complex events with constant delay.

\smallskip
\noindent {\bf Data structure.}
The atomic element in our data structure is the \emph{node}.
A node is defined as a pair $(p,l)$, where $p \in \bbN$ represents a position in the stream and $l$ is a list of nodes.
A node is initialized by calling $\mNode(p,l)$, and the methods $\mposition$ and $\mlist$ return $p$ and $l$, respectively.

The data structure maintained by our algorithm is composed by linked-lists of nodes.
For operating a linked-list $l$ we use the methods $\madd$, $\mappend$ and $\mlazycopy$.
Specifically, $\madd(n)$ adds the node $n$ at the beginning of $l$, and $\mappend(l')$ appends a list $l'$ at the end of $l$.
An important property of the data structure is that no element is ever removed from the lists, only adding nodes or appending lists is allowed.
This allows us to represent a list as a pair $l = (s,e)$, where $s$ is its starting node and $e$ its ending node.
Then, $\mlazycopy$ returns a copy of $l$, defined by the pointers $(s,e)$, and the generated \emph{copy} of the list is not affected by future changes on $l$. Furthermore, it is trivial to see that $\mlazycopy$ runs in constant time (i.e. $\cO(1)$). The methods used for navigating the list are $\mbegin$ and $\mnext$. $\mbegin$ gives a pointer to the first node of the list, and $\mnext$ returns the next element of the list and ${\tt false}$ when it reaches the end.

\begin{algorithm}[t]
	\caption{Evaluate $\cA$ over a stream $S$}\label{alg:cea-eval}
	\begin{algorithmic}[1]
		\Require{An I/O deterministic CEA $\cA = (Q,\delta,q_0,F)$}
		\Procedure{Evaluate}{$S$}
		\ForAll{$q \in Q \setminus \{q_0\}$}
		\State $\alist_q \gets \epsilon$
		\EndFor
		\State $\alist_{q_0} \gets [\bot]$
		\While {$t \gets \myield_S$}
			\ForAll{$q \in Q$}  \label{ma-eval-line:initlist1}
				\State $\alist^\aold_q \gets \alist_q\!.\mlazycopy, \; \alist_q \gets \epsilon$ \label{ma-eval-line:initlist2}
			\EndFor
			\ForAll{$q \in Q \textbf{ with } \alist_q^\aold \neq \epsilon$} \label{ma-eval-line:main1}
				\If{$p^\amark \gets \delta(q,t,\amark)$}\label{ma-eval-line:upd1}
					\State $\alist_{p^\amark}\!.\madd(\mNode(t.\mposition,\alist_q^\aold))$\label{ma-eval-line:upd2}
				\EndIf
				\If{$p^\umark \gets \delta(q,t,\umark)$}\label{ma-eval-line:upd3}
					\State $\alist_{p^\umark}\!.\mappend(\alist_q^\aold)$\label{ma-eval-line:upd4}
				\EndIf
			\EndFor
			\State $\textproc{Enumerate}(\{\alist_q\}_{q \in Q},F,t.\mposition)$
		\EndWhile
		\EndProcedure
	\end{algorithmic}
\end{algorithm}

\begin{algorithm}[t]
	\caption{Enumerate all mappings}\label{alg:cea-enum}
	\begin{algorithmic}[1]
		\Procedure{{Enumerate}}{$\{\alist_q\}_{q \in Q}$, $F$, $now$}
			\ForAll{$q \in F$ \textbf{with} $\alist_q \neq \epsilon$}
				\State $\alist_q\!.\mbegin$
				\While{$n \gets \alist\!.\mnext$ \textbf{and} $n.\mposition = now$}
					\State{$\textproc{EnumAll}(n.\mlist, \{n.\mposition\})$}
				\EndWhile
			\EndFor
		\EndProcedure
		
		\smallskip
		
		\Procedure{{EnumAll}}{$\alist, C$}
			\State $\alist\!.\mbegin$
			\While{$n \gets \alist\!.\mnext$}
				\If{$n = \bot$}
					\State ${\tt Output}(C)$
				\Else
					\State $\textsc{EnumAll}(n.\mlist, \,C \cup \{n.\mposition\})$
				\EndIf			
			\EndWhile
		\EndProcedure
	\end{algorithmic}
\end{algorithm}

\paragraph*{\bf Evaluation}
The CEP evaluation algorithm for an I/O-deterministic CEA $\cA=(Q,\delta,q_0,F)$ is given in Algorithms~\ref{alg:cea-eval} and~\ref{alg:cea-enum}.
To ease the notation, we extend $\delta$ as a function $\delta(q,t,m)$ that retrieves the (unique) state $p = \delta(q,P,m)$ for some predicate $P$ such that $t \in P$; if there is no such $P$, it returns ${\tt false}$.
Basically, if a run is in state $q$, then $p$ is the state it moves when reading $t$ and marking $m$.

The procedure $\textproc{Evaluate}$ keeps the evaluation of $\cA$ by simulating all its possible runs, and has a list $\alist_q$ for each state $q$ to keep track on the complex events.
Intuitively, each $\alist_q$ keeps the information of the partial complex events generated by the partial runs currently ending at $q$. Each node $n$ in $\alist_q$ represents (through its $n.\mlist$) a subset of these complex events, all of them having $n.\mposition$ as their last position. These sets are pairwise disjoint (which is an important property for constant-delay enumeration of the output).
Each $\alist_q$ is initialized as the empty list, represented by $\epsilon$, except for $\alist_{q_0}$, which begins with only the sink node $\bot$ in it.
The algorithm then reads $S$ using ${\tt yield}_S$ to get each new event.
For each new event $t$, the procedure updates the data structure as follows.
It starts by creating a copy of each $\alist_q$, and storing it in $\alist_q^\aold$ (lines \ref{ma-eval-line:initlist1}-\ref{ma-eval-line:initlist2}).
Then, for each $q$ with non-empty $\alist_q$ it extends the runs that are currently at $q$ by simulating the possible outgoing transitions satisfied by $t$ (lines \ref{ma-eval-line:main1}-\ref{ma-eval-line:upd4}).
After doing this for all $q$, it calls the $\textproc{Enumerate}$ procedure to enumerate all output complex events generated by $t$.

The core processing of Algorithm~\ref{alg:cea-eval} is in updating the structure by extending the runs currently at $q$ (lines \ref{ma-eval-line:upd1}-\ref{ma-eval-line:upd4}).
Specifically, line~\ref{ma-eval-line:upd2} considers the $\amark$-transition and line~\ref{ma-eval-line:upd4} the $\umark$-transition (recall that $\cA$ is I/O-deterministic).
As we said before, each $\alist_q$ represents the complex events of runs currently at $q$.
To extend these runs with a $\amark$-transition, line~\ref{ma-eval-line:upd2} creates a new node $n^*$ with the current position in $S$ (i.e. $t.\mposition$) as its position, and the old value of $\alist_q$ as its predecessors list.
Then, $n^*$ is added at the top of the new list of $p^\amark = \delta(q,t,\amark)$.
On the other hand, to extend the runs with a $\umark$-transition, it only needs to append the old list of $q$ to the list of $p^\umark=\delta(q,t,\umark)$ (line~\ref{ma-eval-line:upd4}).

By looking at Algorithm~\ref{alg:cea-eval}, one can see that the update of each $\alist_q$ takes time $\cO(|t|)$, and therefore $\cO(|Q|\cdot |t|)$ for the whole update procedure.
This, added to the $\cO(|Q|)$ of the lazy copying of the lists, gives us an overall $\cO(|\cA| \cdot |t|)$ bound on the time between each call to $\myield_S$, satisfying condition~$(1)$ with $f(|\cA|) = |\cA|$.

\smallskip
\noindent{\bf Enumeration.}
One can consider the data structure maintained by $\textproc{Evaluate}$ as a directed acyclic graph: vertices are nodes and there is an outgoing edge from node $n$ to node $n'$ if $n'$ appears in $n.\mlist$.
By following Algorithm~\ref{alg:cea-eval}, one can easily check that the sink node $\bot$ is reachable from every node in this directed acyclic graph, namely, for any $q$ and any node $n$ in $\alist_q$ there exists a path $n=n_1 \ldots, n_k, \bot$.
Furthermore, each of this path represents a complex event $n_k.\mposition, \ldots, n_1.\mposition$ outputted by some run of $\cA$ over $S$ that ends at $q$.

Given the previous discussion, the $\textproc{Enumerate}$ procedure in Algorithm~\ref{alg:cea-enum} is straightforward: it simply traverses the directed acyclic graph in a depth-first manner, computing a complex event for each path. 
To ensure that all outputs are enumerated, it needs to do this for each node $n$ in an accepting state and whose position is equal to the current position (i.e. $now$).
Because new nodes are added on top, it iterates over each accepting list from the beginning, stopping whenever it finds a node with a position different from $now$.

It is important to note that $\textproc{Enumerate}$ does not satisfy condition (2) of a CEP evaluation algorithm, namely, taking a constant delay between two outputs. The problem relies in the depth-first search traversal of the acyclic graph: there can be an unbounded number of backtracking steps, creating a delay that is not constant between outputs. To solve this, one can use a stack with a smart policy to avoid these unbounded backtracking steps. Given space restrictions, we present this modification of Algorithm~\ref{alg:cea-enum} in the appendix.

\subsection{CEA and selection strategies}

Given that any CEA can be I/O-determinized (Proposition~\ref{prop:CEA_closure}), we can use Algorithms~\ref{alg:cea-eval} and~\ref{alg:cea-enum} to evaluate any CEA. Unfortunately, the determinization procedure has an exponential blow-up in the size of the automaton.
\begin{theorem}\label{theo:ndet-evaluation}
	For every CEA $\cA$, there is an CEP evaluation algorithm with $2^{|\cA|}$-update time.
\end{theorem}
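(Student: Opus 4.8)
The plan is to reduce the general case to the I/O-deterministic case, which we already know how to evaluate efficiently. Given an arbitrary CEA $\cA$, I would first apply Proposition~\ref{prop:CEA_closure} to obtain an I/O-deterministic CEA $\cA'$ with $\cA'\equiv\cA$; in particular $\sem{\cA'}_n(S)=\sem{\cA}_n(S)$ for every stream $S$ and every $n$. This determinization is done once, as a preprocessing step before any event is read, so its cost does not enter the per-event update time. The CEP evaluation algorithm for $\cA$ then simply runs the I/O-deterministic procedure of Section~\ref{sec:evaluation-iodet} (Algorithm~\ref{alg:cea-eval} together with the stack-based variant of Algorithm~\ref{alg:cea-enum} that guarantees constant delay) on $\cA'$.

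Correctness is inherited directly. Since $\cA'$ is I/O-deterministic, the analysis of Section~\ref{sec:evaluation-iodet} shows that after $n$ calls to $\myield_S$ the maintained data structure $D$ encodes exactly $\sem{\cA'}_n(S)$ and supports constant-delay enumeration of this set. Because $\sem{\cA'}_n(S)=\sem{\cA}_n(S)$, condition~$(2)$ of a CEP evaluation algorithm holds for $\cA$ as well.

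It remains to bound the update time in terms of $|\cA|$ rather than $|\cA'|$. The algorithm of Section~\ref{sec:evaluation-iodet} has $f(n)=n$ update time, i.e. it spends $\cO(|\cA'|\cdot|t|)$ between consecutive calls to $\myield_S$. Hence I need the central quantitative ingredient: the I/O-determinization of Proposition~\ref{prop:CEA_closure} incurs only a single-exponential blow-up, $|\cA'|=2^{\cO(|\cA|)}$. Plugging this in gives an update time $\cO(2^{\cO(|\cA|)}\cdot|t|)$, which is of the form $\cO(f(|\cA|)\cdot|t|)$ with $f$ single-exponential, matching the claimed $2^{|\cA|}$-update time up to the constant in the exponent, and establishing condition~$(1)$.

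The hard part will be extracting exactly this size bound from the construction behind Proposition~\ref{prop:CEA_closure}. The determinization follows a subset-style construction: states of $\cA'$ are subsets of the state set $Q$ of $\cA$, which already yields the $2^{|Q|}$ factor. The extra difficulty compared with classical NFA determinization is twofold: transitions are labelled by predicates drawn from $\uset$, and I/O-determinism distinguishes transitions by their mark in $\{\amark,\umark\}$ as well as by predicate disjointness. To respect both constraints one refines the predicate labels into the Boolean combinations (minterms) of the predicates occurring in $\cA$; since $\pset$ is closed under intersection, union and complement, each such combination is again a unary predicate in $\uset$, and the number of minterms is at most $2^{\cO(|\cA|)}$. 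Each determinized state then carries one outgoing transition per minterm and per mark, so both the number of states and the number of transitions of $\cA'$ are bounded by $2^{\cO(|\cA|)}$, giving $|\cA'|=2^{\cO(|\cA|)}$. The point requiring care is verifying that splitting transitions by mark does not break the subset bookkeeping, so that the result is genuinely I/O-deterministic while the exponential bound stays uniform in $|\cA|$.
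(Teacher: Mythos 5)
Your proof is correct, and the paper itself explicitly acknowledges it as a valid alternative, but it is not the route the paper takes as its primary construction. The paper proves the theorem by giving a concrete procedure (\textproc{NDetEvaluate}, Algorithm~\ref{alg:ndet-eval}) that performs the subset construction \emph{on the fly}: it keeps one list $\alist_T$ per subset $T\subseteq Q$, extends $\Delta$ to a function $\Delta(T,t,m)$ on sets of states, and maintains an $\aactive$ set so that in each step it only touches the subsets that are actually reachable on the given stream. Your approach instead determinizes offline via Proposition~\ref{prop:CEA_closure} (with the minterm refinement of the predicates, which you correctly identify as the source of the single-exponential bound on both states and transitions) and then runs the I/O-deterministic algorithm of Section~\ref{sec:evaluation-iodet} as a black box. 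Both yield the same worst-case $\cO(2^{|\cA|}\cdot|t|)$ update time, and the paper notes this equivalence at the end of its proof. The trade-off: your modular reduction is cleaner and reuses existing results without new code, but it always pays the exponential cost up front in preprocessing time and in the space needed to materialize $\cA'$; the paper's on-the-fly version defers that cost and, thanks to $\aactive$, only ever instantiates the subsets that arise on the actual input, which is typically far fewer than $2^{|Q|}$. One small caveat on your side is that the definition of a CEP evaluation algorithm only bounds the time between calls to $\myield_S$, so pushing the determinization into preprocessing is legitimate but worth stating explicitly, as you did.
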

We can further extend the CEP evaluation algorithm for I/O-deterministic CEA to any selection strategies by using the results of Theorem~\ref{theo:selectors-compilation}. However, by naively applying Theorem~\ref{theo:selectors-compilation} and then I/O-determinizing the resulting automaton, we will have a double exponential blow-up in the update time. By doing the compilation of the selection strategies and the I/O-determinization together, we can lower the update time. Moreover, and rather surprisingly, we can evaluate $\NEXT$ and $\LAST$ without determinizing the automaton, and therefore with linear update time.
\begin{theorem}\label{theo:selectors-evaluation}
	Let $\SEL$ be a selection strategy. For any CEA $\cA$, there is an CEP evaluation algorithm for $\SEL(\cA)$. Furthermore, the update time is $|\cA|$ if $\SEL \in \{\NEXT, \LAST\}$, $2^{|\cA|}$ if $\SEL = \sSTRICT$ and $4^{|\cA|}$ if $\SEL = \MAX$.
\end{theorem}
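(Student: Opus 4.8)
The plan is to reduce each case, as far as possible, to the I/O-deterministic evaluation algorithm of Section~\ref{sec:evaluation-iodet} (Algorithms~\ref{alg:cea-eval}--\ref{alg:cea-enum}), whose update time is linear in the size of the automaton it is run on and which already yields constant-delay enumeration. For $\SEL=\sSTRICT$ this is immediate: by Theorem~\ref{theo:selectors-compilation} there is a CEA equivalent to $\STRICT(\cA)$ of size $\cO(|\cA|)$, and by Proposition~\ref{prop:CEA_closure} we may I/O-determinize it with a single-exponential blow-up. Feeding the resulting I/O-deterministic automaton of size $2^{\cO(|\cA|)}$ into the evaluation algorithm yields update time $2^{|\cA|}$; this is just Theorem~\ref{theo:ndet-evaluation} applied after a linear-size compilation.

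For $\NEXT$ and $\LAST$ the goal is to avoid determinization altogether. The crucial structural fact is a congruence property of the orders from Section~\ref{sec:selectors}: if two accepting runs of equal length share the same final state and are continued by the same sequence of transitions, their complex events change by the same strictly larger positions, so their symmetric difference is unchanged and the $\leqnext$- (resp.\ $\leqlast$-) order between them is preserved. Together with the totality from Lemma~\ref{lemma:next-order}, this shows that among all partial runs ending in a fixed state $q$, the selected one dominates every other run under every common continuation, so it is sound to keep exactly one run per state. I would therefore simulate $\cA$ non-deterministically while storing, for each state, a single node representing its currently selected partial complex event, merging the incoming $\amark$- and $\umark$-contributions by keeping the $\leqnext$-optimal one. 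Since at most one complex event is selected per final position, enumeration is a single walk of length $|C|$ down the node chain, and the per-event work is $\cO(|\cA|)$: both orders are decided by one position of the symmetric difference---its minimum for $\NEXT$ and its maximum for $\LAST$---which is already available when two runs meet, so the tie-breaks can be kept in a list sorted consistently with the order and refreshed in $\cO(|Q|)$ per event. This gives the claimed $|\cA|$ update time.

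For $\MAX$ the naive route---compile via Theorem~\ref{theo:selectors-compilation} and then I/O-determinize---costs a double exponential, so I would fuse the two constructions. Inclusion-maximality and I/O-determinism are both subset constructions over $Q$, and I would run them simultaneously: a state records, for each original state, both whether it is reachable for the current complex event (the determinization component) and whether that complex event is still inclusion-maximal among those reaching the same configuration (the $\MAX$ component). This places the state space inside $2^{Q}\times 2^{Q}$ (equivalently $2^{Q\times\{\amark,\umark\}}$), of size $4^{|\cA|}$; the aim is to show that the fused automaton is I/O-deterministic and equivalent to $\MAX(\cA)$, after which the evaluation algorithm gives the $4^{|\cA|}$ update time.

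The main obstacle is twofold. For $\MAX$, the delicate point is proving that the fused subset construction is \emph{simultaneously} I/O-deterministic and correctly discards exactly the non-maximal complex events per final position, without the maximality bookkeeping demanding a second independent power of two (which would reinstate the double exponential). For $\NEXT$/$\LAST$, the crux is the soundness of collapsing to one run per state---that a locally optimal early choice is never overturned by a later suffix---together with scheduling the comparisons within the $\cO(|\cA|)$ budget; the congruence property above is precisely what makes both arguments go through.
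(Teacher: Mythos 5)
Your proposal follows essentially the same route as the paper: for $\NEXT$/$\LAST$ it keeps one selected run per state and maintains the $\leqnext$/$\leqlast$ order incrementally, justified by the same continuation-invariance of the symmetric difference that underlies the paper's ordered state-partition lemma, and for $\MAX$ it fuses maximality tracking with the subset construction into states in $2^Q\times 2^Q$, which is exactly the paper's $(T,U)$ automaton evaluated on the fly. The only divergence is cosmetic: for $\sSTRICT$ the paper determinizes $\cA$ first and then runs a specialized algorithm that confines $\umark$-moves to the empty run, whereas you compile $\STRICT(\cA)$ to a linear-size CEA and then I/O-determinize; both yield the same single-exponential update time.
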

Due to space limitations, the constructions and algorithms in Theorem~\ref{theo:selectors-evaluation} are deferred to the appendix.

\section{Experimental evaluation}\label{sec:experiments}

Having all the building blocks, we proceed to show how to evaluate a unary CEL formula in practice. We present an experimental evaluation that validates the simplicity and efficiency of the presented framework.

\begin{figure}
	\small
	\begin{center}
		\begin{tikzpicture}[>=stealth, node distance=0.9cm, every text node part/.style={align=center},
		component/.style={rectangle, rounded corners, draw, minimum width=3.8cm},
		marrow/.style={->, thick, bulletcolor, draw=bulletcolor, >=stealth, line width=1.5pt}, 
		darrow/.style={->, dashed, bulletcolor, draw=bulletcolor, >=stealth, line width=1.5pt}, 
		res/.style={rounded corners, minimum height=5mm, minimum width=2.5cm, dashed, node distance=3.4cm},
		tuples/.style={minimum height=0mm, minimum width=2.5cm}]

		\node [component]  (Parser) {\textbf{Parser} (Th. 1)};
		\node [component, below of=Parser]  (Preprocessor) {\textbf{Query Rewrite} (Th. 2)};
		\node [component, below of=Preprocessor] (Rewrite) {\textbf{Compilation} (Th. 3, 4)};
		\node [component, below of=Rewrite] (Optimizer)  {\textbf{Evaluation} (Th. 5, 6)};
		\node [tuples] (outp) at ($(Optimizer) + (2.5,-0.7)$) {Output (complex events)};
		
		\node [node distance=2.9cm, left of=Parser] (query) {CEL};
		\node [node distance=2.9cm, left of=Optimizer] (stream) {Stream};
		
		\draw [marrow] (query) -- (Parser);
		\draw [marrow] (Parser) edge node (o1) {} (Preprocessor);
		\draw [marrow] (Preprocessor) edge node (o2) {} (Rewrite);
		\draw [marrow] (Rewrite) edge node (o3) {} (Optimizer);
		\draw [marrow] (Optimizer) -- ($(Optimizer) + (0,-0.7)$) -- (outp);
		\draw [marrow] (stream) -- (Optimizer);
		
		\node [res, right of=o1]  (tree) {WF and safe};
		\node [res, right of=o2]  (pretree) {LP-normal form};
		\node [res, right of=o3]  (logical) {CE automaton};
		
		
		\draw [darrow] (o1) -- (tree);
		\draw [darrow] (o2) -- (pretree);
		\draw [darrow] (o3) -- (logical);
		\end{tikzpicture}
		\vspace{-.3cm}
		\caption{Evaluation framework for CEL.}\label{fig:framework}
		\vspace{-.6cm}
	\end{center}
\end{figure}
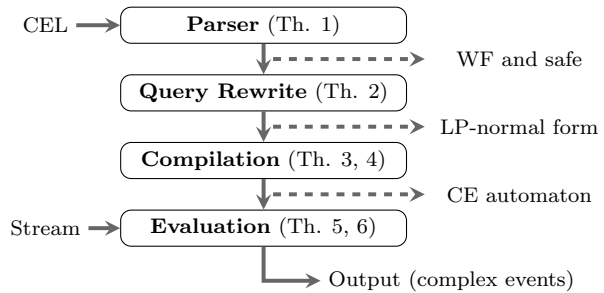

\subsection{Framework}

In Figure~\ref{fig:framework}, we show the evaluation cycle of a CEL formula in our framework and how all the results and theorems fit together. To explain this framework, consider a unary CEL formula $\varphi$ (possibly with selection strategies). 
The process starts in the parser module, where we check if $\varphi$ is well-formed and safe. These conditions are important to ensure that $\varphi$ is satisfiable and make a correct use of variables. Note that a CEP system could translate unsafe formulas (Theorem~\ref{theo:safe}), incurring however in an exponential blow-up.

The next module rewrites a well-formed and safe formula $\varphi$ into LP-normal form by using the rewriting process of Theorem~\ref{theo:LP-normal-form} which, in the worst case, can produce an exponentially larger formula. To avoid this cost, in many cases one can apply \emph{local rewriting rules}~\cite{abiteboul1995foundations,Ramakrishnan2003}.
For example, in Section~\ref{sec:in_action} we converted $\varphi_1$ into $\varphi_1'$ by applying a \emph{filter push} on filters, avoiding the exponential blow-up of Theorem~\ref{theo:LP-normal-form}. Unfortunately, we cannot apply this technique over formulas like $\varphi_6$ in Section~\ref{sec:language-prop}, maintaining the exponential blow-up. Nevertheless, formulas like $\varphi_6$ are rather uncommon in practice and local rewriting rules will usually produce LP-formulas of polynomial size. 

The third module receives a formula in LP-normal form and builds a complex event automaton $\cA_\varphi$ of polynomial size. 
Then, the last module runs $\cA_\varphi$ over the stream by using our CEP evaluation procedure for I/O deterministic CEA (Algorithms~\ref{alg:cea-eval}~and~\ref{alg:cea-enum}). 
If there is no selection strategy, $\cA_\varphi$ must be determinized before running the CEP evaluation algorithm. In the worst case, this determinization is exponential in $\cA_\varphi$, nevertheless, in practice the size of $\cA_\varphi$ is rather small (see the experiments below).
If a selection strategy $\SEL$ is used, we can use the algorithms of Theorem~\ref{theo:selectors-evaluation} for evaluating $\SEL(\cA_\varphi)$, having a similar update time than evaluating $\cA_\varphi$ alone. 
As we show next, evaluating $\MAX(\cA_\varphi)$ or $\LAST(\cA_\varphi)$ has even better performance than evaluating $\cA_\varphi$ directly.

\subsection{Experiments}

To validate our results, we implemented our complete framework (automatized from parsing to evaluation) and compared it against two of the most relevant actors in CEP systems: EsperTech~\cite{EsperTech}, an industrial CEP Stream processing system, and SASE~\cite{SASEwebsite}, an academic prototype. We use the Java-based open version of EsperTech~\cite{EsperTechVersion} and the Java open-source version of SASE\cite{SASEgithub}. Our implementation~\cite{CELGithub} is also written in Java to have a fair comparison.
\smallskip

\noindent \textbf{Setup.} We run our experiments on a server equipped with an 8-core Intel(R) Xeon(R) E5-2609v4 processor running at 1.7GHz, 16GB of RAM and the GNU operating system with Linux kernel 4.4.0-109-generic, distributed under Ubuntu 16.04.02. All experiments are performed using Java 1.8.0\_131 and the Java HotSpot(TM) 64-Bit Server Virtual Machine, build 25.131-b11. The reported measurements are the average over ten runs of the same experiment. The Virtual Machine is restarted with 8GB of freshly allocated memory before each repetition of each experiment. Experiments were stopped after one hour or when the allocated memory is exceeded; this is reported accordingly in the experimental results. Memory usage is measured using the JVM System call and after calling the Garbage Collector. For the sake of consistency, we verified that the three systems produced exactly the same set of output. There were a few cases in which this was not the case because SASE and EsperTech do not have well-defined semantics for selection strategies and use heuristic that affect the generated output. 
\begin{table}[tb]\label{tab:patterns}
	\centering
	\begin{tabular}{|c|l|}
		\hline
		$Q_1$ & $A \as x; B \as y; C \as z$ \\\hline 
		$Q_2$ & $A \as x; B \as y; C \as z; D \as w$ \\\hline 
		$Q_3$ & $((A \as x \cor B \as y) \cor C \as z); D \as w$ \\\hline 
		$Q_4$ & $(A \as x)+; (B \as y)$ \\\hline 
		$Q_5$ & $(A \as x)+; (B \as y)+; C \as z$ \\\hline 
		$Q_6$ & $((A \as x)+; (B \as y))+; C \as z$ \\\hline
	\end{tabular}
	\caption{Queries used in the experiments}
	\vspace{-4mm}
\end{table}

\smallskip
\noindent \textbf{Queries.} As there are no standard benchmarks for CEP patterns, we developed a small set of patterns for understanding how efficiently each system handles the basic operators. The used queries are denoted $Q_1$-$Q_6$ and are depicted in Table~\ref{tab:patterns}.
Despite their simplicity these queries are particularly important and already show the difference in performance between previous CEP systems and our framework.
Queries $Q_1$ and $Q_2$ measure how well a system handles concatenation. 
Queries $Q_3$ and $Q_4$ are intended to measure the efficiency with which a system handles disjunction and a single iteration, respectively. Query $Q_5$ contains two concatenated iterations, and finally $Q_6$ is a more complex query with nested iterations. This last query not only tests the efficiency of the systems under a slightly more complex query, but also the consistency of their semantics. It is important to mention that we do not test unary filters because they do not add complexity to our system. Indeed, we tested similar queries with unary filters and the performance of our framework was not affected while that of EsperTech and SASE were degraded. 
In order to test these queries over EsperTech and SASE we needed to translate them; we present these translations in our appendix.
\begin{figure*}[t!]
\centering
\includegraphics[width=1\textwidth]{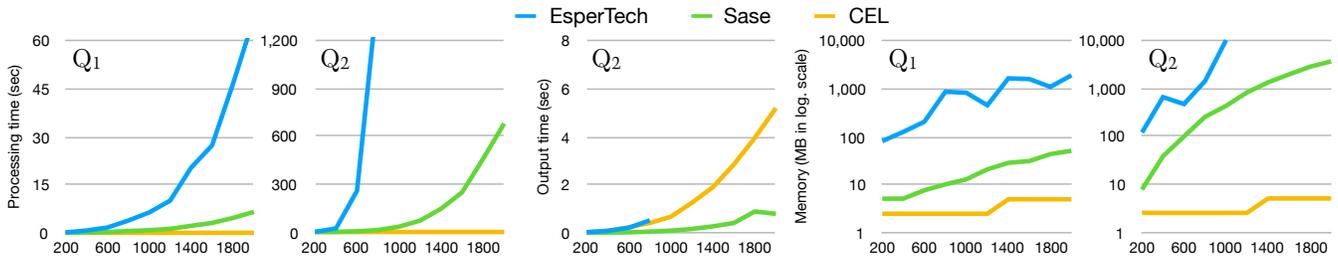}
\caption{Evaluation of $Q_1$ and $Q_2$ over streams of length $200,\ 400,\ \ldots,\ 2000$. We depict the processing times, then the time spent generating the output for $Q_2$ and finally the memory consumption in logarithmic scale.}
\label{fig:exp1}
\end{figure*}

\smallskip
\noindent \textbf{Stress Experiment.} We start by measuring how well each system manages partial complex events. For this, we do a ``stress experiment'': we evaluate queries $Q_1$ and $Q_2$ over a stream where all events are randomly generated with uniform distribution, except for the last event that fires the results (i.e. the last event is $C$ for $Q_1$ and $D$ for $Q_2$). 
This implies that no output is generated until the last event.
We run both queries over streams of increasing length 200, 400, \ldots, 2.000, and measure the processing time, the enumeration time, and the memory consumption.
Note that although 2.000 events (less than 1 MB) is a small amount of data in CEP, the number of outputs is big: around 200,000 for $Q_1$ and 20,000,000 for $Q_2$.
Figure~\ref{fig:exp1} depicts the processing time for both queries, the time required to enumerate the output for $Q_2$ and the memory used by each system for each query (in logarithmic scale).

Let us first analyze the processing times (first and second charts in Figure~\ref{fig:exp1}). An interesting remark is that EsperTech and SASE slow down in a non-linear fashion w.r.t. the stream size. This suggests that they are building partial outputs (e.g. all pairs $(A,B)$ for $Q_1$) in memory, waiting for an event that triggers a complex event. In contrast, our framework provides constant-time per processed event, and therefore degrades linearly at worst. While we process $Q_2$ over a stream of 2,000 events in less than $0,01$ seconds with a throughput of more than 200,000 events per second, our strongest competitor, SASE, takes 678.3 seconds with a throughput of less than 3 events per second. Moreover, EsperTech is not capable of processing $Q_2$ over a stream of 1,000 events in an hour.
Regarding the memory consumption, we can see in the last two charts of Figure~\ref{fig:exp1} the memory consumption (in logarithmic scale) of each system before reading the last event.
The difference is again notorious; for $Q_2$ we only use 5MB of memory, SASE uses 1GB and EsperTech uses more than 10GB. It is important to say that the amount of memory used by both EsperTech and SASE is highly correlated with the amount of partial complex events, suggesting again that they are materializing elements while processing the stream. In contrast, our implementation efficiently updates a highly compressed version of the output that depends linearly in the size of the stream.
Regarding the enumeration of complex events after the last event is seen, in the third chart of Figure~\ref{fig:exp1} we draw the enumeration time. We measure this by writing all results to a freshly created native Java ArrayList for $Q_2$ at different stream lengths. Note that we did not measure this time for $Q_1$ because it was negligible, taking less than 0,2 seconds for all systems.
We can see that our framework takes $5,2$ seconds in enumerating 20,000,000 outputs (produced by 2,000 events) while SASE takes less than one second. Although SASE is more efficient in enumerating all outputs (they are already materialized in an ArrayList), our framework can still enumerate a huge number of outputs in a reasonable amount of time, specially considering that 5 seconds is irrelevant with respect to the 680 seconds that SASE requires to materialize the output while processing the stream.

\smallskip

\noindent \textbf{Using consumption policies.} In the previous experiments we detected a high correlation between the number of (partial) complex events and the amount of time and memory consumed by EsperTech and SASE. Although this does not invalidates the experiments, one could argue that generating large numbers of results is not realistic. In fact, both EsperTech and SASE have ways to speed up their matching algorithms by reducing the number of outputs. A first strategy is to use a so-called \emph{consumption policy}, a way to disregard all partial complex events whenever a complete complex event is found. Note that this does not correspond to the semantics of any of the operators or selection strategies presented in this paper: this is just a heuristic to reduce the number of results. We implemented this strategy and compared experimentally against EsperTech and SASE. For the sake of space we only present the results regarding processing time, but the memory consumption and enumeration time were also measured. Memory consumption was similar to the measurements in Figure~\ref{fig:exp1}, while enumeration was completely negligible given the reduced number of outputs.
The results are depicted in Table~\ref{tab:exp2-proc}; we present them in tabular form because the differences are too big to be appreciated visually. These experiments are performed over streams of $1,000,000$ events in which event types are uniformly distributed. The average number of complex events generated every time that the partial complex events were discarded is also depicted in Table~\ref{tab:exp2-proc}.
Since the query language of SASE does not support queries $Q_3$ and $Q_6$, the corresponding entries are left empty. We can see in these experiments that EsperTech and SASE slow down rapidly with the complexity of the query, while our system is minimally affected. Again, this occurs because our system guarantees constant time per event irrespective of the number of outputs and the complexity of the query.\par

\begin{table}\label{tab:exp2-proc}
	\small
	\centering
	\begin{tabular}{|l|c|c|c|c|c|c|}
		\hline
		& $Q_1$ & $Q_2$ & $Q_3$ & $Q_4$ & $Q_5$ & $Q_6$ \\ \hline
		Outp. avg. & $5$ & $14$ & $4$ & $19$ & $516$ & $2.064$ \\ \hline
		EsperTech & 1.84 & 3.91 & 2.21 & 166.09 & 3600* & 3600* \\ \hline
		SASE & 0.88 & 1.48 & - & 4.08 & 177.02 & - \\ \hline
		CEL & 0.27 & 0.28 & 0.27 & 0.21 & 0.26 & 0.24 \\ \hline
		
	\end{tabular}
	\caption{Processing times (sec) on a stream of 1 million events for $Q_1$-$Q_6$ using a consumption policy.}
	\vspace{-5mm}
\end{table}

\smallskip

\noindent\textbf{Selection strategies.} Although the previous experiment shows that our framework outperforms EsperTech and SASE, the number of outputs might still seem large. A way of reducing the number of outputs even more while producing meaningful results is by means of selection strategies (see Section~\ref{sec:selectors}). Selection strategies produce at most one output per event, and therefore keeping (partial) complex events in memory should not be a bottleneck anymore. To do this experiment we produce two different streams, called $S_1$ and $S_2$: $S_1$ is generated by choosing event types uniformly at random and $S_2$ is generated with distribution $\text{P}(A) = \frac{4}{10}$, $\text{P}(B) = \frac{3}{10}$, $\text{P}(C) = \frac{2}{10}$, $\text{P}(D) = \frac{1}{10}$ and $\text{P}(E) = \frac{2}{10}$ to vary the number of outputs. In this experiment we measure throughput (the number of events that each system can process per second). We leave each system processing a dynamically-generated stream for one minute. In SASE we use the \emph{skip-till-next-match} selector and in EsperTech the default matching strategy (which already generates at most one complex event per event). For our system we test the $\LAST$ and $\MAX$ selection strategies. It is important to mention that in this experiment the produced outputs were not the same. While our system follows the semantics described in Section~\ref{sec:selectors}, the competing systems behave erratically: they follow a \emph{greedy} procedure, missing some complex events that should be in the output. The results are depicted in Figure~\ref{fig:exp-3-throughput}. Note that $\MAX$ runs faster than $\NEXT$ in this case, which is because the automaton has less states (validating that the blow-up in the number of states does not always materialize).  Although the results in this case are more comparable, our system still outperforms consistently the competition while producing meaningful outputs.
\begin{figure}[tb]
\centering
\includegraphics[width=\linewidth]{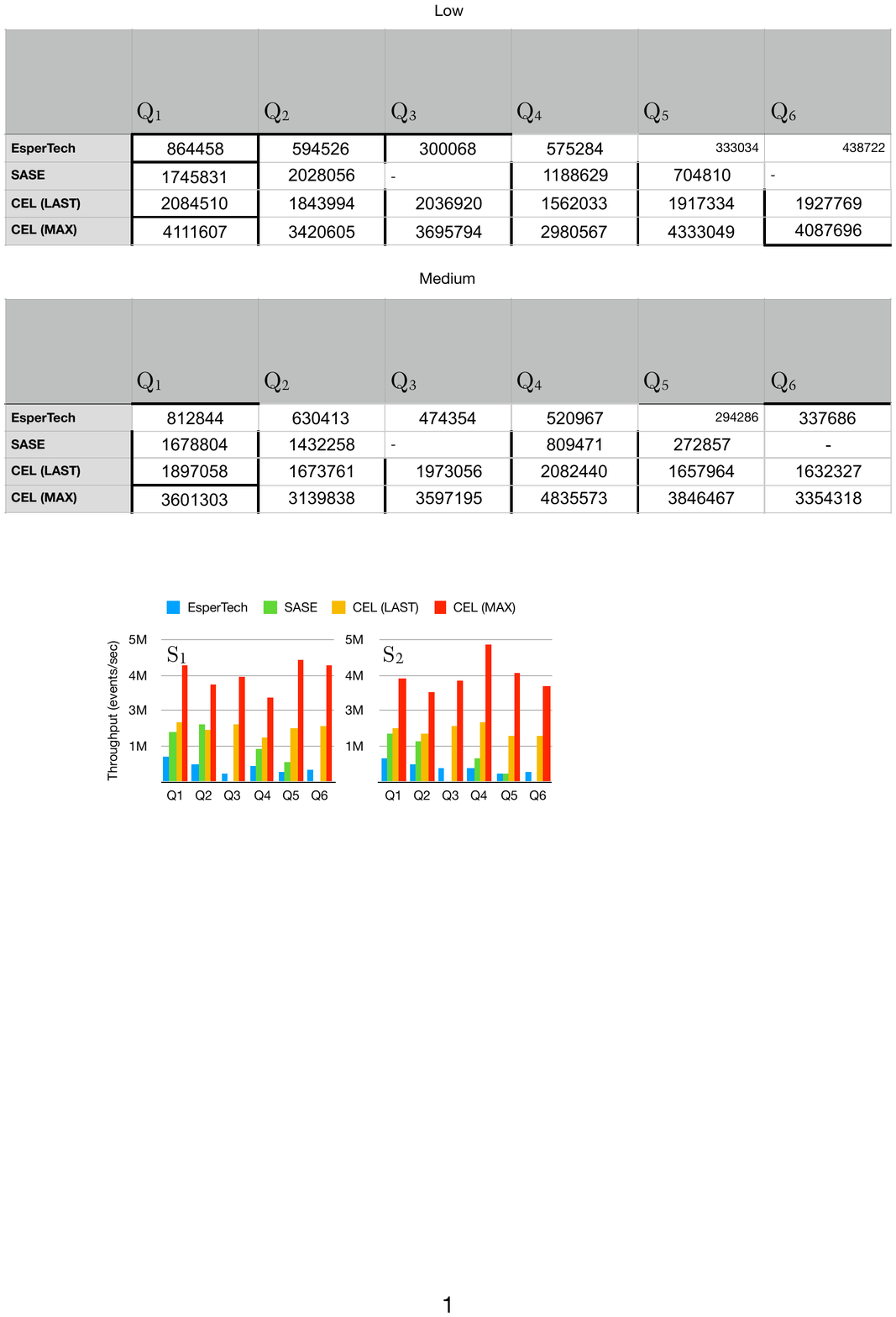}
\caption{Throughput using selection strategies over two different streams. }
\label{fig:exp-3-throughput}
\vspace*{-4mm}
\end{figure}

\section{Future work}\label{sec:conclusions}


This paper settles new foundations for CEP systems, stimulating new research directions. In particular, a natural next step is to study the evaluation of non-unary CEL formulas. This requires new insight in rewriting formulas and a more powerful computational models with CEP evaluation algorithms. Another relevant problem is to understand the expressive power of different fragments of CEL and the relationship between the different operators. In this same direction, we envision as future work a generalization of the concept behind selection strategies, together with a thorough study of their expressive power.

Finally, we have focused on the fundamental features of CEP languages, leaving other features outside to keep the language and analysis simple.
These features include correlation, time windows, aggregation, consumption policies, among others (see~\cite{cugola2012} for a more exhaustive list). 
We believe that CEL can be extended with these features to establish a more complete framework for CEP. 


%


\bibliographystyle{abbrv}
\bibliography{references}

%
\onecolumn
\appendix

\section{Proofs of Section~\ref{sec:selectors}}
\subsection{Proof of Lemma~\ref{lemma:next-order}}

For $\leqnext$ to be a total order between complex events, it has to be \textit{reflexive} (trivial),  \textit{anti-symmetric}, \textit{transitive}, and \textit{total}. The proof for each property is given next.
\medskip

\noindent \emph{Anti-symmetric.}
Consider any two complex events $C_1$ and $C_2$ such that $C_1 \leqnext C_2$ and $C_2 \leqnext C_1$.
$C_2 \leqnext C_1$ means that either $C_1 = C_2$ or $(1) \enspace \min(C_1 \triangle C_2)\in C_1$, and $C_1 \leqnext C_2$ that either $C_2 = C_1$ or $(2) \enspace \min(C_1 \triangle C_2)\in C_2$.
If $(1)$ were true, it would mean that $(2)$ could not be true, so $C_2 = C_1$ would have to be true, becoming a contradiction.
So, the only possible scenario is that $C_1=C_2$.
\medskip

\noindent\emph{Transitivity.}
Consider any three complex events $C_1$, $C_2$ and $C_3$ such that $C_1 \leqnext C_2$ and $C_2 \leqnext C_3$.
Because $C_1 \leqnext C_2$ holds, then either $C_1 = C_2$ or $(1) \enspace \min(C_1 \triangle C_2)\in C_2$.
If $C_1 = C_2$, then $C_1 \leqnext C_3$ because $C_2 \leqnext C_3$.
Now, if $C_1 \neq C_2$, then $(1)$ must hold, which implies that the lowest element that is either in $C_1$ or $C_2$, but not in both, has to be in $C_2$.
Let's call this element $l_1$.
Because $C_2 \leqnext C_3$, then either $C_2 = C_3$ or $(2) \enspace \min(C_2 \triangle C_3)\in C_3$.
Again, if $C_2 = C_3$, then $C_1 \leqnext C_3$ because $C_1 \leqnext C_2$.
Now, if $C_2 \neq C_3$, then $(2)$ must hold, so the lowest element that is either in $C_2$ or $C_3$, but not in both, has to be in $C_3$. Let's call this element $l_2$.
	
Given that $C_1 \neq C_2$ and $C_2 \neq C_3$, define for every $i \in \{1,2,3\}$ and $j \in \{1,2\}$ the set $C_i^{<l_j}$ as the set of elements of $C_i$ which are lower than $l_j$, i.e., $C_i^{<l_j}=\{x \mid x\in C_i \land x < l_j \}$.
It is clear that $C_1^{<l_1}=C_2^{<l_1}$ and $C_2^{<l_2}=C_3^{<l_2}$, because of  $(1)$ and $(2)$, respectively.
Also, because of $(2)$ it holds that $l_2 \notin C_2$, so $l_1 \neq l_2$.
	
Consider first the case where $l_1 < l_2$.
This means that $(3) \enspace C_1^{<l_1} = C_3^{<l_1}$. Moreover, if $l_1$ were not in $C_3$, it would contradict $(2)$, so $(4) \enspace l_1 \in C_3$ must hold.
With $(3)$ and $(4)$, it follows that $l_1$ is the lowest element that is either in $C_1$ or $C_3$ but not in both, and it is in $C_3$.
This proves that $\min(C_1 \triangle C_3)\in C_3$, and thus $C_1 \leqnext C_3$.
	
Now consider the case where $l_2 < l_1$.
Then, $(5) \enspace C_1^{<l_2} = C_3^{<l_2}$ must hold. Because $l_2$ is not in $C_2$, it cannot be in $C_1$, otherwise it would contradict $(1)$, so $(6) \enspace l_2 \notin C_1$ must hold.
Also, because of $(2)$ we know that $(7) \enspace l_2 \in C_3$ must hold.
With $(5)$, $(6)$ and $(7)$, it follows that $l_2$ is the lowest element that is either in $C_1$ or $C_3$ but not in both, and it is in $C_3$.
This proves that $\min(C_1 \triangle C_3)\in C_3$, and thus $C_1 \leqnext C_3$.
\medskip

\noindent \emph{Total.}
Consider any two complex events $C_1$ and $C_2$. If $C_1=C_2$, then $C_1 \leqnext C_2$ holds.
Consider now the case where $C_1 \neq C_2$.
Define the set $C = (C_1\cup C_2) \backslash (C_1\cap C_2)$ which is the set of all elements either in $C_1$ or $C_2$, but not in both.
Because $C_1 \neq C_2$, there must be at least one element in $C$.
In particular, this implies that there is a minimum element $l$ in $C$.
If $l$ is in $C_2$, then $C_1 \leqnext C_2$ holds, and if $l$ is in $C_1$, then $C_2 \leqnext C_1$ holds.
\hfill \qed 

\section{Proofs of Section~\ref{sec:language-prop}}

\subsection{Proof of Theorem~\ref{theo:safe}}

To prove this theorem, we first show that one can push disjunction (by means of $\scor$) to the top-most level of every core-CEL formula. Formally, we say that a CEL formula $\varphi$ is in disjunctive-normal form if $\varphi=(\varphi_1\cor\cdots\cor\varphi_n)$, where for each $i\in\{1,\ldots,n\}$, it is the case that:
\begin{itemize}
	\item Every $\scor$ operator in $\varphi_i$ occurs in the scope of a $+$ operator.
	\item For every subformula of $\varphi_i$ of the form $(\varphi_i')+$, it is the case that $\varphi_i'$ is in disjunctive normal form.
\end{itemize}

Now we show that every formula can be translated into disjunctive normal form.
\begin{lemma}\label{lem:dnf}
	Every formula $\varphi$ in core-CEL can be translated into disjunctive-normal form in time at most exponential $|\varphi|$.
\end{lemma}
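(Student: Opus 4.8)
The plan is to give an explicit recursive rewriting procedure $\operatorname{dnf}(\cdot)$ and argue its correctness and size by structural induction. On atoms set $\operatorname{dnf}(R \as x) = R \as x$. For disjunction, normalize each side and concatenate the resulting disjuncts. For the remaining operators, push $\cor$ outward using the distribution laws
\begin{align*}
(\rho_1 \cor \rho_2)\FILTER P(\bar{x}) &\equiv (\rho_1 \FILTER P(\bar{x})) \cor (\rho_2 \FILTER P(\bar{x})), \\
(\rho_1 \cor \rho_2) \sq \sigma &\equiv (\rho_1 \sq \sigma)\cor(\rho_2 \sq \sigma), \\
\sigma \sq (\rho_1 \cor \rho_2) &\equiv (\sigma \sq \rho_1)\cor(\sigma \sq \rho_2).
\end{align*}
Concretely, if $\operatorname{dnf}(\rho)=\alpha_1\cor\cdots\cor\alpha_k$ then $\operatorname{dnf}(\rho \FILTER P)$ is $(\alpha_1 \FILTER P)\cor\cdots\cor(\alpha_k \FILTER P)$; if $\operatorname{dnf}(\rho_1)=\alpha_1\cor\cdots\cor\alpha_k$ and $\operatorname{dnf}(\rho_2)=\beta_1\cor\cdots\cor\beta_m$ then $\operatorname{dnf}(\rho_1 \sq \rho_2)$ is the disjunction of all $\alpha_i \sq \beta_j$; and crucially $\operatorname{dnf}(\rho\ks)=(\operatorname{dnf}(\rho))\ks$, i.e.\ we do \emph{not} push $\cor$ out of a $\ks$ but only normalize its body. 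A routine structural check shows the output meets both clauses of the definition of disjunctive normal form: the top level is a disjunction of formulas whose every $\cor$ sits under some $\ks$, and the body of each $\ks$ is itself in disjunctive normal form.

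The correctness (that $\operatorname{dnf}(\varphi)\equiv\varphi$) follows by induction, unfolding the denotational semantics of Section~\ref{sec:prelim}. Each distribution law is verified directly: for instance $C \in \sem{(\rho_1 \cor \rho_2)\sq\sigma}(S,\nu)$ holds iff there are $C_1,C_2$ with $C=C_1\cdot C_2$, $C_1\in\sem{\rho_1}(S,\nu)\cup\sem{\rho_2}(S,\nu)$ and $C_2\in\sem{\sigma}(S,\nu)$, which after splitting the union is exactly $C\in\sem{(\rho_1\sq\sigma)\cor(\rho_2\sq\sigma)}(S,\nu)$; here the fact that the \emph{same} valuation $\nu$ governs both sides of the sequence is what makes the law sound. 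The only delicate case is $\ks$: I would argue that $\equiv$ is a congruence for $\ks$, so that $\operatorname{dnf}(\rho)\equiv\rho$ yields $(\operatorname{dnf}(\rho))\ks\equiv\rho\ks$. This needs two observations. First, since the semantics of $\rho\ks$ uses $U=\vdefplus(\rho)$ to carry variables across iterations, the transformation must preserve $\vdefplus$; this holds because $\operatorname{dnf}$ only duplicates and rearranges subformulas of the shape $R \as x$ without moving any of them across a $\ks$ boundary, so the syntactic sets $\vdef$ and $\vdefplus$ are unchanged. Second, because $\sem{\rho\ks}$ is the least solution of its recursive defining equation and that equation depends on $\rho$ only through $\sem{\rho}$ and $\vdefplus(\rho)$, equality of these data forces $\sem{\rho\ks}=\sem{(\operatorname{dnf}(\rho))\ks}$ on every $S,\nu$.

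The main obstacle is the size bound, where I must show the blow-up stays \emph{singly} exponential despite the multiplicative growth of sequencing together with arbitrary nesting of $\ks$. Let $N(\varphi)$ be the number of top-level disjuncts and $S(\varphi)$ the total size of $\operatorname{dnf}(\varphi)$. One shows by induction that $N(R \as x)=N(\rho\ks)=1$, $N(\rho\FILTER P)=N(\rho)$, $N(\rho_1\cor\rho_2)=N(\rho_1)+N(\rho_2)$, and $N(\rho_1\sq\rho_2)=N(\rho_1)\cdot N(\rho_2)$, whence $N(\varphi)\le 2^{|\varphi|}$. The key structural point is that $\ks$ does \emph{not} compound the blow-up: since its body is kept as a single disjunct, $S(\rho\ks)=S(\rho)+O(1)$, so nested iterations contribute only additively. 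The only source of genuine multiplication is sequencing, for which $S(\rho_1\sq\rho_2)\le N(\rho_2)\,S(\rho_1)+N(\rho_1)\,S(\rho_2)+O(N(\rho_1)N(\rho_2))$. Plugging in $N(\cdot)\le 2^{|\cdot|}$ and proving the invariant $S(\varphi)\le 2^{2|\varphi|}$ (up to constants) by induction — the sequencing step being the binding one — gives $S(\varphi)=2^{O(|\varphi|)}$. Since the procedure's running time is polynomial in the size of its output, this yields the claimed at most exponential construction time, completing the proof of the statement.
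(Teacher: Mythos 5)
Your proposal is correct and follows essentially the same route as the paper's proof of this lemma: a bottom-up structural induction that distributes $\tFILTER$ and $\sq$ over $\tcor$ while keeping the body of each $\ks$ as a single (recursively normalized) disjunct, with the semantic verification of the distribution laws done exactly as in the paper. Your treatment is in fact somewhat more explicit than the paper's on two points it leaves implicit — the congruence argument for $\ks$ (including preservation of $\vdefplus$) and the quantitative size recurrence establishing that the blow-up is only singly exponential.
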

\begin{proof}
	We proceed by induction over the structure of $\varphi$.
	\begin{itemize}
		\item If $\varphi=R\as x$, then $\varphi$ is already free of $\scor$.
		
		\item If $\varphi=\varphi_1\cor\varphi_2$, the result readily follows from the induction hypothesis.

		\item If $\varphi=(\varphi')+$, by induction hypothesis $\varphi$ can be translated into disjunctive normal form.
		
		\item If $\varphi=\varphi'\FILTER P(\bar{x})$ with $\bar{x} = (x_1,\ldots,x_k)$, we know by induction hypothesis that $\varphi'$ is equivalent to a formula $(\varphi_1\cor\cdots\cor\varphi_n)$. Therefore, $\varphi$ is equivalent to $(\varphi_1\cor\cdots\cor\varphi_n)\FILTER P(\bar{x})$. We show that this latter formula is equivalent to $(\varphi_1\FILTER P(\bar{x}))\cor\cdots\cor(\varphi_n\FILTER P(\bar{x}))$. Let $S$ be a stream and assume $C\in\sem{(\varphi_1\cor\cdots\cor\varphi_n)\FILTER P(\bar{x})}(S)$. Then, there is some $\nu$ such that $C\in\sem{(\varphi_1\cor\cdots\cor\varphi_n)}(S,\nu)$ and $(S[\nu(x_1)],\ldots,S[\nu(x_k)]) \in P(\bar{x})$. By definition of $\scor$, this implies that there is an $i\in\{1,\ldots,n\}$ such that $C\in\sem{(\varphi_i)}(S,\nu)$. As $(S[\nu(x_1)],\ldots,S[\nu(x_k)])\in P$, we have $C\in\sem{(\varphi_i)\FILTER P(\bar{x})}(S,\nu)$. We can then immediately conclude that $C\in\sem{(\varphi_1\FILTER P(\bar{x}))\cor\cdots\cor(\varphi_n\FILTER P(\bar{x}))}(S,\nu)$, and thus $C\in\sem{(\varphi_1\FILTER P(\bar{x}))\cor\cdots\cor(\varphi_n\FILTER P(\bar{x}))}(S)$. The converse follows from an analogous argument.
		
		\item If $\varphi=(\varphi_1\sq\varphi_2)$, by induction hypothesis we know that $\varphi_1$ is equivalent to a formula $(\varphi^1_1\cor\cdots\cor\varphi^1_n)$ and $\varphi_2$ is equivalent to a formula $(\varphi^2_1\cor\cdots\cor\varphi^2_m)$. Let $\varphi'$ be defined by
		\[
		\varphi'= (\varphi^1_1\sq\varphi^2_1)\cor (\varphi^1_1\sq\varphi^2_2)\cor\cdots\cor(\varphi^1_1\sq\varphi^2_m)\cor(\varphi^1_2\sq\varphi^2_1)\cor\cdots\cor (\varphi^1_2\sq\varphi^2_m)\cor\cdots\cor(\varphi^1_n\sq\varphi^2_1)\cor
		\cdots\cor (\varphi^1_n\sq\varphi^2_m).
		\]
		We show that $\varphi\equiv\varphi'$. Let $S$ be a stream and let $C$ be a complex event. If $C\in\sem{\varphi}(S)$, then there is a valuation $\nu$ and two complex events $C_1$ and $C_2$ such that $C=C_1\cdot C_2$, $C_1\in\sem{\varphi_1}(S,\nu)$ and $C_2\in\sem{\varphi_2}(S,\nu)$. Then, there are two numbers $i$ and $j$ such that $C_1\in\sem{\varphi^1_i}(S,\nu)$ and $C_2\in\sem{\varphi^2_j}(S,\nu)$. As $C=C_1\cdot C_2$, it immediately follows that $C\in\sem{\varphi^1_i\sq\varphi^2_j}(S)$, and thus $C\in\sem{\varphi'}(S)$.\par
		For the converse assume $C\in\sem{\varphi'}(S)$. Then, there is a valuation $\nu$, a complex event $C$ and two numbers $i$ and $j$ such that $C\in\sem{\varphi^1_i\sq\varphi^2_j}(S,\nu)$. Therefore there are two complex events $C_1$ and $C_2$ such that $C=C_1\cdot C_2$, $C_1\in\sem{\varphi^1_i}(S,\nu)$ and $C_2\in\sem{\varphi^2_j}(S,\nu)$. By semantics of $\scor$, we have $C_1\in\sem{\varphi_1}(S,\nu)$ and $C_2\in\sem{\varphi_2}(S,\nu)$. As $C=C_1\cdot C_2$, it readily follows that $C\in\sem{\varphi_1\sq\varphi_2}(S)=\sem{\varphi}(S)$.
	\end{itemize}
\end{proof}

Having this result, we proceed to show that a core-CEL formula in disjunctive normal form can be translated into a safe formula. To this end, we need to show the following two lemmas.

\begin{lemma}\label{lem:var-in-ce}
	Let $\varphi$ be a core-CEL formula in which every $\scor$ occurs inside the scope of a $+$ operator, and let $x\in\vdefplus(\varphi)$. Then, for every complex event $C$, valuation $\nu$ and stream $S$ such that $C\in\sem{\varphi}(S,\nu)$, it is the case that $x\in\dom(\nu)$ and $\nu(x)\in C$.
\end{lemma}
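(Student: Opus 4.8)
The plan is to proceed by structural induction on $\varphi$, exploiting the fact that the hypothesis ``every $\scor$ occurs inside the scope of a $+$'' together with $x\in\vdefplus(\varphi)$ severely constrains which syntactic forms $\varphi$ can take at the top level. Two observations make the induction go through. First, the top-level connective of $\varphi$ cannot be $\scor$: such an occurrence would not lie in the scope of any $+$, contradicting the hypothesis, so the disjunction case is ruled out entirely. Second, if $\varphi=\rho\ks$ then $\vdefplus(\varphi)=\emptyset$ by definition of $\vdefplus$, since the top-level $+$ scopes all of $\rho$; hence the premise $x\in\vdefplus(\varphi)$ is never satisfied and the statement holds vacuously. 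Thus only the atomic, $\sFILTER$, and $\sq$ cases carry real content.

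Before treating these cases I would record that the hypothesis is inherited by the relevant immediate subformulas: when the top operator of $\varphi$ is $\sFILTER$ or $\sq$, no new $+$ is introduced above the subformulas, so any $\scor$ appearing in a subformula $\rho$ is still guarded by a $+$ that lies within $\rho$ itself, and $\rho$ again satisfies the hypothesis. With this invariant the cases are routine. For $\varphi=R\as x'$ with $x\in\vdefplus(\varphi)=\{x'\}$, we have $x=x'$ and $C=\{\nu(x)\}$, giving $x\in\dom(\nu)$ and $\nu(x)\in C$ directly. For $\varphi=\rho\FILTER P(\bar{y})$ we have $\vdefplus(\varphi)=\vdefplus(\rho)$ and $C\in\sem{\rho}(S,\nu)$, so the claim follows from the induction hypothesis applied to $\rho$. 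For $\varphi=\rho_1\sq\rho_2$ we have $\vdefplus(\varphi)=\vdefplus(\rho_1)\cup\vdefplus(\rho_2)$; writing $C=C_1\cdot C_2$ with $C_i\in\sem{\rho_i}(S,\nu)$, if $x\in\vdefplus(\rho_1)$ the induction hypothesis yields $\nu(x)\in C_1\subseteq C$, and symmetrically if $x\in\vdefplus(\rho_2)$.

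The main point requiring care is precisely the preservation of the hypothesis under descent into subformulas: one must argue that a $\scor$ occurring in, say, $\rho_1$ (inside $\rho_1\sq\rho_2$) remains in the scope of a $+$ that itself occurs inside $\rho_1$, rather than one located elsewhere in $\varphi$. This uses the scoping definition of ``inside the scope of a $+$'' — since every syntactic ancestor of the $\scor$ is either within $\rho_1$ or is the root $\sq$ node, the guarding $+$ must lie within $\rho_1$ — and it is the crux that lets the inductive invariant close. The concatenation case is the only one that splits on which side defines $x$, but both branches are immediate once we recall that $C_1\cdot C_2=C_1\cup C_2$, so $C_1,C_2\subseteq C$.
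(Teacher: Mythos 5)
Your proposal is correct and follows essentially the same structural induction as the paper's proof: the atomic, $\sFILTER$, and $\sq$ cases are handled identically, and the $+$ case is vacuous because $\vdefplus(\rho\ks)=\emptyset$. The only difference is that you make explicit why the top-level $\scor$ case cannot arise and why the hypothesis is inherited by subformulas, points the paper leaves implicit.
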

\begin{proof}
	We proceed by induction on the structure of $\varphi$. Let $\nu$ be a valuation, $S$ a stream and $C$ a complex event.
	\begin{itemize}
		\item Assume $\varphi=R\as x$ and that $C\in\sem{\varphi}(S,\nu)$. By definition, we have $C=\{\nu(x)\}$.
		\item Assume $\varphi=\varphi'\FILTER P(\bar{x})$ and that $C\in\sem{\varphi}(S,\nu)$. Let $x\in\vdefplus(\varphi)$. By definition, we have that $C\in\sem{\varphi'}(S,\nu)$. Since $x\in\vdefplus(\varphi')$, by induction hypothesis we have $x\in\dom(\nu)$ and $\nu(x)\in C$.
		\item If $\varphi=(\varphi')+$ the condition trivially holds as $\vdefplus(\varphi)=\emptyset$.
		\item If $\varphi=\varphi_1\sq\varphi_2$, then $x\in\vdefplus(\varphi_1)$ or $x\in\vdefplus(\varphi_2)$. Assume w.l.o.g. that $x\in\vdefplus(\varphi_1)$. If $C\in\sem{\varphi}(S,\nu)$, then $C=C_1\cdot C_2$, where $C_1\in\sem{\varphi_1}(S,\nu)$. As $x\in\vdefplus(\varphi_1)$, by induction hypothesis we have that $x\in\dom(\nu)$ and $\nu(x)\in C_1\subseteq C$, concluding the proof.
	\end{itemize}
\end{proof}

\begin{lemma}\label{lem:dnf-satis}
	Let $\varphi$ be a core-CEL formula in which every $\scor$ occurs inside the scope of a $+$ operator, and let $S$ be a stream. If $\varphi$ has a subformula $\varphi'$ that is not under the scope of a $+$ operator such that $\sem{\varphi'}(S)=\emptyset$, then $\sem{\varphi}(S)=\emptyset$.
\end{lemma}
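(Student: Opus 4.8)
The plan is to prove the statement by structural induction on $\varphi$. The crucial observation is that, because every $\scor$ in $\varphi$ lies inside the scope of some $+$, the part of $\varphi$ that is not under any $+$ --- its \emph{skeleton} --- is built exclusively from atoms $R \as x$, filters $\FILTER P(\bar{x})$, and the sequencing operator $\sq$. The subformula $\varphi'$ lives entirely within this skeleton. Since emptiness of the output propagates upward through both filtering and sequencing, it should suffice to show that unsatisfiability of any skeleton node forces unsatisfiability of the whole skeleton.

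First I would dispose of the trivial case $\varphi' = \varphi$, where the conclusion $\sem{\varphi}(S)=\emptyset$ is immediate. For the inductive step I would split on the top-most operator of $\varphi$, assuming now that $\varphi'$ is a proper subformula. The cases $\varphi = R \as x$, $\varphi = \rho_1 \cor \rho_2$, and $\varphi = (\rho)+$ are all vacuous: the first has no proper subformula; the second would place a $\scor$ outside every $+$, contradicting the hypothesis on $\varphi$; and in the third, every proper subformula of $(\rho)+$ sits under the outermost $+$, contradicting that $\varphi'$ is not under any $+$. This leaves only the filter and sequencing cases, and in each the induction hypothesis applies because the relevant subpart inherits the ``every $\scor$ under $+$'' property and because a node not under any $+$ in $\varphi$ remains not under any $+$ in that subpart.

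The two substantive cases reduce to elementary semantic facts read directly off the denotational definition. For $\varphi = \rho \FILTER P(\bar{x})$, the subformula $\varphi'$ must lie in $\rho$; by the induction hypothesis $\sem{\rho}(S) = \emptyset$, and since $\sem{\rho \FILTER P(\bar{x})}(S, \nu) \subseteq \sem{\rho}(S, \nu)$ for every $\nu$, we conclude $\sem{\varphi}(S) = \emptyset$. For $\varphi = \rho_1 \sq \rho_2$, the subformula $\varphi'$ lies in some $\rho_i$; by the induction hypothesis $\sem{\rho_i}(S) = \emptyset$, and since any $C \in \sem{\rho_1 \sq \rho_2}(S, \nu)$ requires witnesses from both $\sem{\rho_1}(S, \nu)$ and $\sem{\rho_2}(S, \nu)$, emptiness of one factor forces $\sem{\varphi}(S, \nu) = \emptyset$ for all $\nu$. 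The main obstacle is not any single case but the careful bookkeeping of the ``scope of $+$'' relation: one must verify that restricting attention to $\rho$ or to $\rho_i$ preserves both invariants (every $\scor$ under $+$, and $\varphi'$ free of an enclosing $+$), so that the induction hypothesis is legitimately applicable to the chosen subpart.
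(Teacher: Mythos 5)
Your proof is correct and follows essentially the same route as the paper's: structural induction, with the base and $+$ and $\tcor$ cases dismissed as vacuous and the substantive work confined to the $\tFILTER$ and $\sq$ cases, where emptiness propagates upward directly from the semantics. Your treatment is in fact slightly more explicit than the paper's about why the $\tcor$ case cannot arise and why the induction hypothesis legitimately applies to the chosen subformula.
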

\begin{proof}
	We proceed by induction on the structure of $\varphi$. Let $S$ a stream and assume $\varphi'$ is a subformula of $\varphi$ such that $\sem{\varphi'}(S)=\emptyset$. We assume that $\varphi'$ is a proper subformula, as otherwise the result immediately follows. For this reason, we can trivially skip the case when $\varphi=R\as x$ or $\varphi=(\varphi_1)+$.
	\begin{itemize}
		\item If $\varphi=\varphi_1\sq\varphi_2$, then $\varphi'$ is a subformula of $\varphi_1$ or of $\varphi_2$. Assume w.l.o.g. that $\varphi'$ is a subformula of $\varphi_1$. By induction hypothesis, as $\sem{\varphi'}(S)=\emptyset$ we have that $\sem{\varphi_1}(S)=\emptyset$, which immediately implies that $\sem{\varphi}(S)=\emptyset$.
		\item If $\varphi=\varphi_1\FILTER P(\bar{x})$, we know that $\varphi'$ is a subformula of $\varphi_1$. By induction hypothesis we have $\sem{\varphi'}(S)=\emptyset$ and by definition of $\FILTER$ we obtain $\sem{\varphi}(S)=\emptyset$.
	\end{itemize}
\end{proof}

Now we are ready to show that any core-CEL formula in disjunctive-normal form can be translated into a safe formula, and moreover, this can be done in linear time.

\begin{lemma}\label{lem:dnf-to-safe}
	Let $\varphi$ be a core-CEL formula in disjunctive-normal form. Then $\varphi$ can be translated in linear time into a safe core-CEL formula $\varphi'$.
\end{lemma}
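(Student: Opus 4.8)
The plan is to exploit a simple dichotomy that holds in disjunctive-normal form: every sequencing subformula that violates safeness is in fact \emph{unsatisfiable}, and so it can be replaced by a fixed safe unsatisfiable formula without changing the semantics. Concretely, I would fix a core-CEL formula $B$ that has no $\sq$ and is unsatisfiable under every valuation, e.g.\ $B = R_0 \as x_0 \FILTER P$ where $P \in \pset$ is the empty predicate (which exists since $\pset$ is closed under intersection and complement). Then $\sem{B}(S,\nu) = \emptyset$ for every $S$ and $\nu$, and $B$ is trivially safe. The output will be $\varphi' := g(\varphi)$, where $g$ rewrites $\varphi$ by detecting the offending $\sq$'s and splicing in copies of $B$.

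First I would establish the key claim: if $\rho_1 \sq \rho_2$ is a subformula of a disjunct (so every $\scor$ in it lies under a $+$) and $\vdefplus(\rho_1) \cap \vdefplus(\rho_2) \neq \emptyset$, then $\sem{\rho_1 \sq \rho_2}(S,\nu) = \emptyset$ for all $S,\nu$. Indeed, any $C \in \sem{\rho_1 \sq \rho_2}(S,\nu)$ decomposes as $C = C_1 \cdot C_2$ with $C_i \in \sem{\rho_i}(S,\nu)$ and $\max(C_1) < \min(C_2)$, so $C_1 \cap C_2 = \emptyset$; but for $x \in \vdefplus(\rho_1) \cap \vdefplus(\rho_2)$, Lemma~\ref{lem:var-in-ce} (applicable since $\rho_1,\rho_2$ are $\scor$-guarded) forces $\nu(x) \in C_1$ and $\nu(x) \in C_2$, a contradiction. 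Combined with Lemma~\ref{lem:dnf-satis}, this shows that any disjunct containing such a $\sq$ outside all $+$'s is unsatisfiable under every valuation, since $\sem{\cdot}(S) = \emptyset$ is equivalent to $\sem{\cdot}(S,\nu) = \emptyset$ for all $\nu$.

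With this in hand I would define the rewriting by mutual recursion: $g$ acts on a DNF formula $\varphi = D_1 \cor \cdots \cor D_n$ by $g(\varphi) = f(D_1) \cor \cdots \cor f(D_n)$, and $f$ acts on a single $\scor$-guarded disjunct $D$ as follows. Scan the \emph{top-level} variable definitions of $D$, i.e.\ the occurrences $R \as x$ not under any $+$. If some variable is defined twice at top level, then (as $\sq$ is the only branching operator above the $+$'s) the lowest common ancestor of the two occurrences is a top-level $\sq$ whose two sides share that variable in $\vdefplus$; by the claim above $D$ is unsatisfiable, so set $f(D) = B$. Otherwise no top-level $\sq$ of $D$ is unsafe, and I keep the top-level skeleton of $D$ intact while recursively rewriting each top-level subformula $(\rho')+$ to $(g(\rho'))+$ (legal because $\rho'$ is itself in DNF). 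Correctness follows by induction: $f$ and $g$ preserve $\sem{\cdot}(S,\nu)$ at every valuation — replacing an unsatisfiable $D$ by $B$ is sound since both denote the empty set at every valuation, and rewriting inside a $+$ is sound by compositionality — hence $g(\varphi) \equiv \varphi$. Safeness also follows: the only $\sq$'s surviving in $f(D)$ are the top-level ones of a $D$ with no duplicated top-level variable (hence pairwise $\vdefplus$-disjoint, and their operands' $\vdefplus$ are unchanged because rewriting only touches material under $+$), together with the $\sq$'s produced recursively inside the $+$'s; $B$ contributes none.

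The step I would be most careful about is that replacement is performed at the granularity of a whole unsatisfiable disjunct rather than at the individual offending $\sq$: splicing $B$ in place of a deeply nested unsafe $\sq$ would alter the $\vdefplus$ of enclosing sequences and could spuriously turn a sibling $\sq$ unsafe, whereas replacing the entire (already unsatisfiable) disjunct sidesteps this and keeps intact the argument that $\vdefplus$ is preserved wherever a $\sq$ is retained. For the linear-time bound, note that two top-level definitions of the same variable in the same $+$-free region are always separated by a $\sq$, so unsafeness of a region reduces to a duplicate-detection test; a single traversal that, for each maximal $+$-free region, counts top-level variable definitions (recursing through each $+$ into a fresh region) detects all unsatisfiable disjuncts and performs all rewrites, visiting each node a constant number of times and thus running in $\cO(|\varphi|)$.
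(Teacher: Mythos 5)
Your proof is correct and follows essentially the same route as the paper's: both hinge on showing (via Lemma~\ref{lem:var-in-ce} and Lemma~\ref{lem:dnf-satis}) that any disjunct containing an unsafe $\sq$ outside all $+$ operators is unsatisfiable, and then eliminating such disjuncts bottom-up through the nested $+$'s. The only difference is that you splice in a canonical safe unsatisfiable formula $B$ where the paper simply deletes the offending disjunct --- a cosmetic variation that incidentally also covers the corner case in which every disjunct is unsafe.
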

\begin{proof}
	Assume that $\varphi=\varphi_1\cor\cdots\cor\varphi_n$ is a core-CEL formula in disjunctive-normal form. By induction, we assume that every sub-formula of the form $(\varphi')+$ is already safe. Now we show that every unsafe $\varphi_i$ is unsatisfiable, and therefore it can be safely removed from the disjunction. Proceed by contradiction and assume $\varphi_i$ is unsafe and satisfiable. Then, it must contain a subformula of the form $\psi_1\sq\psi_2$ occurring outside the scope of all $+$ operators, and such that $\vdefplus(\psi_1)\cap\vdefplus(\psi_2)\neq\emptyset$. Let $x\in\vdefplus(\psi_1)\cap\vdefplus(\psi_2)$. By Lemma~\ref{lem:dnf-satis}, we know that $\psi_1\sq\psi_2$ must be satisfiable. Therefore, there is a stream $S$, a valuation $\nu$ and a mapping $C$ such that $C\in\sem{\psi_1\sq\psi_2}(S,\nu)$. This implies the existence of two complex events $C_1$ and $C_2$ such that $C_1\in\sem{\psi_1}(S,\nu)$ and $C_2\in\sem{\psi_2}(S,\nu)$. Since $x\in\vdefplus(\psi_1)$ and $\psi_1$ can only mention $\scor$ inside a $+$ operator, by Lemma~\ref{lem:var-in-ce} we obtain that $\nu(x)\in C_1$. Similarly, as $x\in\vdefplus(\psi_2)$, we have $\nu(x)\in C_2$. But as $C=C_1\cdot C_2$, we have that $C_1\cap C_2=\emptyset$, contradicting the facts that $\nu(x)\in C_1$ and $\nu(x)\in C_2$.\par

	We have obtained that if any disjunct is unsafe, it cannot produce any results. Therefore, as safeness is easily verifiable, the result readily follows by removing the unsafe disjuncts of $\varphi$. Notice that this need to be done in a bottom-up fashion, starting from the subformulas of the form $(\varphi')+$.
\end{proof}

Theorem~\ref{theo:safe} occurs as a corollary of Lemmas~\ref{lem:dnf} and \ref{lem:dnf-to-safe}. Indeed, given a core-CEL formula $\varphi$, one can construct in exponential time an equivalent core-CEL formula $\varphi'$ in disjunctive normal form. Then, from $\varphi'$ one can construct in linear time a safe formula in core-CEL $\psi$ that is equivalent to $\varphi$, which is exactly what we wanted to show.
\hfill \qed

\subsection{Proof of Theorem~\ref{theo:LP-normal-form}}

Without lost of generality, in the proof we consider only unary predicates, since these are the ones that we need to modify in order for the formula to be in LP-normal form.
Indeed, if the formula contains non-unary filters, this can be treated as normal operators, similar than $\scor$ or $\sq$ operators. 
Consider a well-formed core-CEL formula $\varphi$ with unary predicates.
We first provide a construction for a core-CEL formula in LP normal form and then prove that it is equivalent to $\varphi$.
The construction consists of two steps: $(1)$~pop predicates up, and $(2)$~push predicates down.
\begin{enumerate}
	\item The first step is focused on rewriting the formula in a way that for every subformula of the form $\varphi' \FILTER P(x)$ it holds that $x \in \vdefplus(\varphi')$.
	Recall that a well-formed formula could still have a subformula $\varphi' \FILTER P(x)$ such that $x \in \vdefplus(\varphi')$.
	The construction we provide to achieve this is the following.
	For every subformula of the form $\varphi'\FILTER P(x)$ and every predicate, let $\varphi_x$ be the lowest subformula of $\varphi$ where $x$ is defined and that has $\varphi'$ as a subformula.
	Here we use the fact that $\varphi$ is well-formed, which ensures that $\varphi_x$ must exist.
	Then, we rewrite the subformula $\varphi_x$ inside $\varphi$ as $\varphi_x^t \FILTER P(x) \cor \varphi_x^f \FILTER \lnot P(x)$, where $\varphi_x^t$ and $\varphi_x^f$ are the same as $\varphi_x$ but replacing the inside $P(x)$ with $\TRUE$ and $\FALSE$, respectively.
	\item Now that we moved each predicate up to a level where all its variables are defined, the next step is to move each one down to its variable's definition.
	This is done straightforward: for every subformula of the form $\varphi' \FILTER P(x)$, the $P(x)$ filter is removed from $\varphi'$ and instead applied over every subformula of $\varphi'$ with the form $R \as x$, rewriting it as $R \as x \FILTER P(x)$.
	After all predicates was moved to the lowest possible, each assignment $R \as x$ now has a sequence of filters applied to it, e.g. $R \as x \FILTER P_1(x) \ldots \FILTER P_k(x)$, and moreover, all filters appear in this form.
	Because the predicate set $\pset$ is closed under intersection, we know there is some $P \in \pset$ that equals $P_1 \cap \ldots \cap P_k$.
	Then, we replace each sequence of filters $R \as x \FILTER P_1(x) \ldots \FILTER P_k(x)$ with $R \as x \FILTER P(x)$, thus resulting in a formula in LP-normal form.
\end{enumerate}

Now we prove that the construction above satisfies the lemma, i.e., $\sem{\varphi_{lp}}(S) = \sem{\varphi}(S)$ for every stream $S$, where $\varphi_{lp}$ is the resulting formula after the construction.
To prove that the first step does not change the semantics, we show that it stays the same after each iteration.
Consider a subformula $\varphi' \FILTER P(x)$ of $\varphi$ such that $x \notin \vdefplus(\varphi')$.
In particular, the only part of $\varphi$ that is modified by the algorithm is $\varphi_x$, so it suffices to prove that $C \in \sem{\varphi_x}(S,\nu)$ holds iff $C \in \sem{\varphi_x^t \FILTER P(x) \cor \varphi_x^f \FILTER \lnot P(x)}(S,\nu)$.
\begin{itemize}
	\item For the only-if direction, let $S$, $C$, $\nu$ be any stream, complex event and valuation, respectively, such that $C \in \sem{\varphi_x}(S,\nu)$.
	If $S[\nu(x)] \in P$, then it is enough to prove that $C \in \sem{\varphi_x^t}(S,\nu)$.
	In a similar way, the only part in which $\varphi_x^t$ differs with $\varphi_x$ is that in the former the atom $P(x)$ was set to $\TRUE$.
	Therefore, it is enough to prove that, for any $C'$ and $\varphi'$, if $S[\nu'(x)] \in P$ holds, then $C' \in \sem{\varphi' \FILTER P(x)}(S,\nu')$ iff $C' \in \sem{\varphi' \FILTER \TRUE}(S,\nu')$, which is trivially true.
	Notice that we can assure $S[\nu'(x)] \in P$ holds because $S[\nu(x)] \in P$ holds and, when evaluating this part of the formula, the mapping for $x$ must stay the same, otherwise $x$ must have been inside a $\ks$-operator, which cannot be the case because $x \in \vi(\varphi_x)$.
	Moreover, $\nu'$ has to be equal to $\nu$.
	The proof for the case $S[\nu(x_1)] \in \lnot P$ is similar considering $\varphi_x^f$ instead of $\varphi_x^t$, thus $C \in \sem{\varphi_x^t \FILTER P(x) \cor \varphi_x^f \FILTER \lnot P(x)}(S,\nu)$.
	\item For the if direction, let $S$, $C$, $\nu$ be some arbitrary stream, complex event and valuation, respectively, such that $C \in \sem{\varphi_x^t \FILTER P(x) \cor \varphi_x^f \FILTER \lnot P(x)}(S, \nu)$.
	Then, by definition the complex event $C$ is either in $\sem{\varphi_x^t\FILTER P(x)}(S, \nu)$ or in $\sem{\varphi_x^f\FILTER \lnot P(x)}(S, \nu)$.
	Without loss of generality, consider the former case, which implies that $S[\nu(x)] \in P$.
	Then, because $C' \in \sem{\varphi' \FILTER P(x)}(S, \nu')$ iff $C' \in \sem{\varphi' \FILTER \TRUE}(S, \nu')$, it holds that $C \in \sem{\varphi_x}(S, \nu)$.
	It is the same for $S[\nu(x)] \in \lnot P$, thus $C \in \sem{\varphi_x}(S, \nu)$ iff $C \in \sem{\varphi_x^t \FILTER P(x) \cor \varphi_x^f \FILTER \lnot P(x)}(S, \nu)$.
\end{itemize}
Therefore, if we name $\varphi_1$ as the result of applying the first part, we get that $\varphi_1 \equiv \varphi$.
%

Now, we prove that moving the predicates to their definitions does not affect the semantics either, for which we show that it stays the same after each iteration.
Consider a subformula of $\varphi_1$ of the form $\varphi' \FILTER P(x)$.
The same way as before, we focus on the modified part, i.e., we need to prove that $C \in \sem{\varphi' \FILTER P(x)}(S, \nu)$ iff $C \in \sem{\varphi'_P}(S, \nu)$, where $\varphi'_P$ is the result of adding the filter $P(x)$ for each definition of $x$ inside $\varphi'$, i.e., replace $R \as x$ with $R \as x \FILTER P(x)$ where $R$ is any relation.
\begin{itemize}
	\item First, we show the only-if direction.
	Let $S$, $C$, $\nu$ be any stream, complex event and valuation, respectively, such that $C \in \sem{\varphi' \FILTER P(x)}(S, \nu)$, which implies that $S[\nu(x)] \in P$.
	We know that, when evaluating every subformula $R \as x$ of $\varphi'$, the valuation $\nu$ must stay the same, because $x \in \vi(\varphi')$, and thus its definition cannot be inside a $\ks$-operator (notice that if it appears inside a $\ks$, it represents a value different to $x$, thus the $\ks$ subformula can be rewritten using a new variable $x'$).
	Similarly to the reasoning above, it holds that for any $C'$ and $\varphi'$, if $S[\nu'(x)] \in P$, then $C' \in \sem{R \as x \FILTER P(x)}(S, \nu')$ iff $C' \in \sem{R \as x}(S, \nu')$.
	Then, because every subformula $R \as x$ behaves the same, $C \in \sem{\varphi'_P}(S, \nu)$ holds.
	\item We now show the if direction.
	Let $S$, $C$, $\nu$ be any stream, complex event and valuation, respectively, such that $C \in \sem{\varphi'_P}(S, \nu)$.
	We prove that $S[\nu(x)] \in P$ must hold, thus proving that $C \in \sem{\varphi' \FILTER P(x)}(S, \nu)$ holds using the same argument as above.
	By contradiction, assume that $S[\nu(x)] \notin P$.
	Because we showed that when evaluating every $R \as x \FILTER P(x)$ in $\varphi'_P$, the valuation $\nu$ must be the same, the only possible way for $C \in \sem{\varphi'_P}(S, \nu)$ to hold is if all $R \as x$ appear at one side of an $\cor$-operator.
	However, this would contradict the fact that $x \in \vi(\varphi')$, thus $S[\nu(x)] \in P$, and also $C \in \sem{\varphi' \FILTER P(x)}(S, \nu)$.
\end{itemize}
Then, $\varphi' \FILTER P(x)$ and $\varphi'_P$ are equivalent, therefore, if we name $\varphi_{lp}$ the result of applying step~2, we get that $\varphi_{lp} \equiv \varphi_1 \equiv \varphi$.

Finally, it is easy to check that the size of $\varphi_{lp}$ will be at most exponential in the size of $\varphi$.
Each iteration of step~1 could duplicate the size of the formula in the worst case, thus $|\varphi_1| = \cO(2^{|\varphi|})$.
Then, step~2 does not really increase the size of the formula due to the final replacing of predicates $P_1,\ldots, P_k$ with $P$.
The size of $\varphi_{lp}$ w.r.t. $\varphi$ is then $\cO(2^{|\varphi|})$.
However, in our framework (Section~\ref{sec:prelim}) we assumed that $\varphi$ did not use the syntactic sugar $\land$ and $\lor$ inside its filters.
If so, we argue that this would not turn into an extra exponential growth (turning the result to a double-exponential).
To explain why, consider that $\varphi$ uses the $\land$ and $\lor$ syntactic sugar.
Then, if we apply step~1 to each predicate, the resulting formula $\varphi_1$ would still be equivalent to $\varphi$ and of size at most exponential w.r.t. $|\varphi|$, avoiding the double-exponential blow-up mentioned above.
Finally, we have that $|\varphi_{lp}| = \cO(2^{|\varphi|})$, even if $\varphi$ uses $\land$ and $\lor$.
\hfill \qed

\section{Proofs of Section~\ref{sec:computational-models}}
\medskip

\subsection{Proof of Proposition~\ref{prop:CEA_closure}}

For the following proof consider any two CEA $\cA_1 = (Q_1,\Delta_1,I_1,F_1)$, $\cA_2 = (Q_2,\Delta_2,I_2,F_2)$ and assume, without loss of generality, that they have disjoint sets of states, i.e., $Q_1 \cap Q_2 = \emptyset$.
We begin by proving closure under union, which is exactly the same as the proof for FSA closure under union.
We define the CEA $\cA_1\cup\cA_2=(Q,\Delta,I,F)$ as follows. The set of states is $Q=Q_1\cup Q_2$, the transition relation is $\Delta=\Delta_1 \cup \Delta_2$; the set of initial states is $I=I_1 \cup I_2$ and the set of final states is $F=F_1 \cup F_2$.

Next we prove closure under intersection.
We define the CEA $\cA_1\cap\cA_2=(Q,\Delta,I,F)$ as follows.
The set of states is the Cartesian product $Q= Q_1 \times Q_2$; the transition relation is $\Delta=\{((p_1,p_2),(P_1 \land P_2,m),(q_1,q_2)) \mid \text{$(p_i,(P_i,m),q_i) \in \Delta_i$ for $i\in \{1,2\}$}\}$, that is, the incoming tuple must me inside both predicates $P_1$ and $P_2$ in order to simulate both transitions with the same mark $m$ from $p_1$ to $q_1$ and from $p_2$ to $q_2$ of $\cA_1$ and $\cA_2$, respectively; the set of initial states is $I=I_1 \times I_2$ and the set of final states is $F = F_1 \times F_2$.

Now we prove closure under I/O-determinization.
Define the CEA $\cA_d = (Q_d, \delta_d, I_d, F_d)$ component by component.
First, the set of states is $Q_d = 2^Q$, that is, each state in $Q_d$ represents a different subset of $Q$.
Second, the transition relation is: 
$$
\delta_d = \{(T,(P,m),U) \mid \text{$P \in \types{\cP}$, and $q \in U$ iff there is a $p \in T$ and $P' \in \uset$ such that $(p, (P',m), q) \in \Delta$ and $P \subseteq P'$}\}.
$$
Here, $\cP$ is the set of all predicates in the transitions of $\Delta$ and we use the notion of $\types{\cP}$ defined in the proof of Theorem~\ref{theo:selectors-compilation} (see Section~\ref{proof:next} for the definition).
Finally, the sets of initial and final states are $I_d = \{I\}$ and $F_d = \{T \mid T \in Q_d \land T \cap F \neq \emptyset\}$.
The key notion here is the one of $\types{\cP}$, which partitions the set of all tuples in a way that if a tuple $t$ satisfies a predicate $P_t \in \types{\cP}$, then $P_t$ is a subset of the predicates of all transition that a run of $\cA$ could take when reading $t$.
This allows us to then apply a determinization algorithm similar to the one for FSA.
Notice that $P_1 \cap P_2 = \emptyset$ for every two different predicates $P_1, P_2 \in \types{\cP}$, so the resulting CEA $\cA_d$ is I/O-deterministic.

Finally, we prove closure under complementation.
Basically, the complementation of a CEA is no more than determinizing it and complementing the set of final states.
Formally, we define the CEA $\cA_1^c=(Q,\delta,I,F)$ as follows.
Consider the I/O-deterministic CEA $\det(\cA_1) = (Q_d,\delta_d,I_d,F_d)$.
Then, the set of states, the transition relation and the set of initial states are the same as of $\det(\cA_1)$, i.e., $Q=Q_d$, $\delta = \delta_d$ and $I = I_d$, and the set of final states is $F = Q \setminus F_d$.
\hfill \qed

\medskip

\subsection{Proof of Theorem~\ref{theo:core-automata}}

So simplify the proof, we will add to the model of CEA the ability to have $\epsilon$-transitions.
Formally, now a transition relation has the structure $\Delta \subseteq Q \times ((\uset \times \{\amark, \umark\}) \cup \{\epsilon\}) \times Q$.
This basically means the automaton can have transitions of the form $(p,\epsilon,q)$ that can be part of a run and, if so, the automaton passes from state $p$ to $q$ without reading nor marking any new tuple.
This does not give any additional power to CEA, since any $\epsilon$-transition $(p,\epsilon,q)$ can be removed by adding, for each incoming transition of $p$, an equivalent incoming one to $q$, and for each outgoing transition of $q$ an equivalent outgoing one from $p$.

The results of Theorem~\ref{theo:LP-normal-form} and Theorem~\ref{theo:safe} show that we can rewrite every core-CEL formula as a safe formula in LP-normal form.
We consider that, if $\varphi$ is not in LP-normal form, then it is first turned into one that is, adding an exponential growth from the beginning.
Furthermore, if it is not safe the it is turned into a safe one, adding another exponential growth.
We now give a construction that, for every safe core-CEL formula $\varphi$ in LP-normal form, defines a CEA $\cA$ such that for every complex event $C$, $C \in \sem{\cA}(S)$ iff $C \in \sem{\varphi}(S)$.
This construction is done recursively in a bottom-up fashion such that, for every subformula, an equivalent CEA is built from the CEA of its subformulas.
Moreover, we assume that the CEA for each subformula has one initial state and one final state, since each recursive construction defines a CEA with those properties.
Let $\psi$ be a subformula of $\varphi$.
Then, the CEA $\cA$ is defined as~follows:
\begin{itemize}
	\item If $\psi = R \as x \FILTER P(x)$ then $\cA = (Q, \Delta, \{q^i\}, \{q^f\})$ with the set of states $Q=\{q^i,q^f\}$ and the transitions $\Delta = \{(q^i,(\TRUE,\umark),q^i), (q^i,(P',\amark),q^f)\}$, where $P'(x) = (\type(x)=R) \land P(x)$.
	Graphically, the automaton is:
	\begin{center}
		\begin{tikzpicture}[->,>=stealth, 
		semithick, 
		auto,
		initial text= {},
		initial distance= {3mm},
		accepting distance= {4mm},node distance=2.5cm, semithick]
		\tikzstyle{every state}=[draw=black,text=black,inner sep=0pt, minimum size=8mm]
		
		\node[initial,state]	(1) 				{$q^i$};
		\node[accepting,state]	(2) [right of=1]	{$q^f$};
		
		\path
		(1) edge 				node {$P'(x) \mid \amark$} (2)
		edge [loop above] 		node {$\TRUE \mid \umark$} (1);
		\end{tikzpicture}
	\end{center}
	If $\psi$ has no $\FILTER$ the automaton is the same but with $P'(x)= (\type(x)=R)$.
	
	\item If $\psi = \psi_1 \cor \psi_2$, and $\cA_1 = (Q_1, \Delta_1, \{q^i_1\}, \{q^f_1\})$ and $\cA_2 = (Q_2, \Delta_2, \{q^i_2\}, \{q^f_2\})$ are the CEA for $\psi_1$ and $\psi_2$, respectively, then $\cA = (Q, \Delta, \{q^i\}, \{q^f\})$ where $Q$ is the union of the states of $\cA_1$ and $\cA_2$ plus the new initial and final states $q^i, q^f$, and $\Delta$ is the union of $\Delta_1$ and $\Delta_2$ plus the empty transitions from $q^i$ to the initial states of $\cA_1$ and $\cA_2$, and from the final states of $\cA_1$ and $\cA_2$ to $q^f$.
	Formally, $Q = Q_1 \cup Q_2 \cup \{q^i, q^f\}$ and $\Delta = \Delta_1 \cup \Delta_2 \cup \{(q^i, \epsilon, q^i_1), (q^i, \epsilon, q^i_2), (q^f_1, \epsilon, q^f), (q^f_2, \epsilon, q^f) \}$.
	
	\item If $\psi = \psi_1 \sq \psi_2$, consider that $\cA_1 = (Q_1, \Delta_1, \{q^i_1\}, \{q^f_1\})$ and $\cA_2 = (Q_2, \Delta_2, \{q^i_2\}, \{q^f_2\})$ are the CEA for $\psi_1$ and $\psi_2$, respectively.
	Then, we define $\cA = (Q, \Delta, \{q^i_1\}, \{q^f_2\})$, where the set of states is $Q = Q_1 \cup Q_2$ and the transition relation is $\Delta = \Delta_1 \cup \Delta_2 \cup \{(q^f_1,\epsilon,q^i_2)\}$.
	
	\item If $\psi = \psi_1 \ks$, consider that $\cA_1 = (Q_1, \Delta_1, \{q^i_1\}, \{q^f_1\})$ is the automaton for $\psi_1$.
	Then, we define $\cA = (Q_1, \Delta, \{q^i_1\}, \{q^f_1\})$ where $\Delta = \Delta_1 \cup \{(q^f_1, \epsilon, q^i_1) \}$.
	Basically, is the same automaton for $\psi_1$ with an $\epsilon$-transition from the final to the initial state.
\end{itemize}

Now, we need to prove that the previous construction satisfies Theorem~\ref{theo:core-automata}.
We will prove this by induction over the subformulas of $\varphi$, i.e., assume as induction hypothesis that the theorem holds for any subformula $\psi$ and its respective CEA~$\cA$.

First, consider the base case $\psi = R \as x \FILTER P(x)$.
If $C \in \sem{\cA}(S)$ then there is a run $\rho$ that gets to the accepting state such that $\rmatch(\rho) = C$.
Moreover, $\rho$ must pass through the transition $(q^i,(\type(x) = R \land P(x),q^f)$ while reading a tuple $t_j$ at some position $j$.
Then, consider a valuation $\nu$ such that $\nu(x) = j$.
Clearly, $C = \{\nu(x)\}$, $\type(t_j) = R$, and $S[\nu(x)] \in P$, thus $C \in \sem{\psi}(S,\nu)$.
For the other direction, consider that $C \in \sem{\psi}(S, \nu)$ for some valuation $\nu$.
Then $C$ must contain only one position $j = \nu(x)$ such that $\type(S[j]) = R$ and $S[j] \in P$ hold.
Then $\rho= (q^i, (\TRUE, \umark), q^i)^{j} \cdot (q^i, (P'(x), \amark), q^f)$ is an accepting run of $\cA$ over $S$, where $(q^i, (\TRUE, \umark), q^i)^{j}$ means that it takes the initial loop transition $j$ times.
Because $\rmatch(\rho) = \{j\} = C$, then $C \in \sem{\cA}(S)$.

Now, consider the case $\psi = \psi_1 \cor \psi_2$.
If $C \in \sem{\cA}(S)$, then there is an accepting run $\rho$ that also represents either an accepting run of $\cA_1$ or $\cA_2$ (removing the $\epsilon$ transitions at the beginning and end).
Assume w.l.o.g. that it is the former case.
Then, by induction hypothesis, there is a valuation $\nu$ such that $C \in \sem{\psi_1}(S,\nu)$.
By definition this means that $C \in \sem{\psi}(S,\nu)$.
For the other direction, consider that $C \in \sem{\psi}(S, \nu)$ for some valuation $\nu$.
Then, either $C \in \sem{\psi_1}(S, \nu)$ or $C \in \sem{\psi_2}(S, \nu)$ holds.
Without loss of generality, consider the former case.
By induction hypothesis, it means that $C \in \sem{\cA_1}(S)$, so there is an accepting run $\rho'$ of $\cA_1$ over $S$ such that $\rmatch(\rho')=C$.
Because $\Delta$ contains $\Delta_1$ then the run $\rho=(q^i, \epsilon, q^i_1) \cdot \rho' \cdot (q^f_1, \epsilon, q^f)$ is an accepting run of $\cA$ over $S$.

Next, consider the case $\psi = \psi_1 \sq \psi_2$.
If $C \in \sem{\cA}(S)$, then there is an accepting run $\rho$ of the form $\rho: \rho_1 \cdot (q_1^f,\epsilon,q_2^i) \cdot \rho_2$ and, because of the construction, $C_1 = \rmatch(\rho_1) \in \sem{\cA_1}(S)$ and $C_2 = \rmatch(\rho_2) \in \sem{\cA_2}(S_j)$, with $j = \max(C_1) + 1$.
Then by induction hypothesis there are valuations $\nu_1$ and $\nu_2$ such that $C_1 \in \sem{\psi_1}(S,\nu_1)$, $C_2 \in \sem{\psi_2}(S,\nu_2)$.
Moreover, because $\varphi$ is safe, we know that $\vdefplus(\psi_1) \cap \vdefplus(\psi_2) = \emptyset$.
Therefore, we can define $\nu$ such that $\nu(x) = \nu_1(x)$ if $x \in \vdefplus(\psi_1)$ and $\nu(x) = \nu_2(x)$ if $x \in \vdefplus(\psi_2)$.
Clearly, because $\nu$ represents both $\nu_1$ and $\nu_2$, it holds that $C \in \sem{\psi}(S,\nu)$.
For the other direction, consider a complex event $C$ such that $C \in \sem{\psi}(S,\nu)$ for some valuation $\nu$.
Then there exist complex events $C_1$ and $C_2$ such that $C_1 \in \sem{\psi_1}(S,\nu)$, $C_2 \in \sem{\psi_2}(S,\nu)$ and $C = C_1 \cdot C_2$.
By induction hypothesis, there exist an accepting run $\rho_1$ of $\cA_1$ over $S$ such that $\rmatch(\rho_1) = C_1$.
Similarly, there exist an accepting run $\rho_2$ of $\cA_2$ over $S_j$ with $j = \max(C_1)+1$ such that $\rmatch(\rho_2) = C_2$.
Then, the run of $\cA$ that simulates $\rho_1$ ends at a state $q_1^f$, thus it can continue by simulating $\rho_2$ and reaching a final state.
Therefore, such run $\rho$ is an accepting run of $\cA$.
Notice that $\rmatch(\rho) = C$, thus $C \in \sem{\cA}(S)$.

Finally, consider the case $\psi = \psi_1 \ks$.
If $C \in \sem{\cA}(S)$, it means that there is an accepting run $\rho$ of $\cA$ over $S$.
We define $k$ to be the number of times that $\rho$ passes through the final state $q^f$, and prove by induction over $k$ that $C \in \sem{\psi}(S,\nu)$.
If $k=1$, it means that $\rho$ is also an accepting run of $\cA_1$, thus $C \in \sem{\cA_1}(S)$ and, by (the first) induction hypothesis, there exists some valuation $\nu$ such that $C \in \sem{\psi_1}(S,\nu)$, which implies $C \in \sem{\psi}(S,\nu)$.
Now, consider the case $k > 1$.
It means that $\rho$ has the form $\rho = \rho_1 \cdot (q^f, \epsilon, q^i) \cdot \rho_2$ where $\rho_2$ passes through $q^f$ $k-1$ times.
Then, $C_1 = \rmatch(\rho_1)$ is an accepting run of $\cA_1$, hence $C_1 \in \sem{\psi_1}(S,\nu)$ for some $\nu$.
Furthermore, $\rho_2$ is an accepting run of $\cA$, thus if $C_2 = \rmatch(\rho_2)$ then by induction hypothesis $C_2 \in \sem{\psi}(S_j,\nu)$ for some $\nu$, where $j = \max(C_1)+1$.
If $C = C_1 \cdot C_2$ then $C \in \sem{\psi_1\sq\psi_1\ks}(S, \nu)$, thus $C \in \sem{\psi}(S, \nu)$.
Note that we do not care about $\nu$ because the $\ks$-operator of $\psi$ overwrites it.
For the other direction, consider a complex event $C$ such that $C \in \sem{\psi}(S,\nu)$ for some valuation $\nu$.
Then there exists $\nu'$ such that either $C \in \sem{\psi_1}(S,\nu[\nu' \rightarrow U])$ or $C \in \sem{\psi_1\sq\psi_1\ks}(S,\nu[\nu' \rightarrow U])$ where $U = \vdefplus(\psi_1)$.
We now prove, by induction over the number of iterations, that $C \in \sem{\cA}(S)$.
If there is just one iteration, then $C \in \sem{\psi_1}(S,\nu[\nu' \rightarrow U])$ and, by induction hypothesis, $C \in \sem{\cA_1}(S)$, so there is an accepting run $\rho$ of $\cA_1$ over $S$ such that $\rmatch(\rho) = C$.
Because $\Delta_1 \subseteq \Delta$, then $\rho$ is also an accepting run of $\cA$, thus $C \in \sem{\cA}(S)$.
If there are $k$ iterations with $k > 1$, it means that $C \in \sem{\psi_1\sq\psi_1\ks}(S,\nu[\nu' \rightarrow U])$.
Therefore, there exist complex events $C_1$ and $C_2$ such that $C = C_1 \cdot C_2$, $C_1 \in \sem{\psi_1}(S,\sigma[\sigma' \rightarrow U])$ and $C_2 \in \sem{\psi_1 \ks}(S_j,\nu[\nu' \rightarrow U])$, where $j = \max(C_1)+1$.
Then, by induction hypothesis, there exist accepting runs $\rho_1$ of $\cA_1$ over $S$ and $\rho_2$ of $\cA$ over $S_j$ such that $\rmatch(\rho_1) = C_1$ and $\rmatch(\rho_2) = C_2$ and, because $\Delta_1 \subseteq \Delta$, $\rho_1$ is also an accepting run of $\cA$.
Then, the run $\rho = \rho_1 \cdot (q^f, \epsilon, q^i) \cdot \rho_2$ is an accepting run of $\cA$ over $S$.
Furthermore, $\rmatch(\rho) = C_1 \cdot C_2 = C$ thus $C \in \sem{\cA}(S)$.

Finally, it is clear that the size of $\cA$ is linear with respect to the size of $\varphi$ if $\varphi$ is already safe and in LP-normal form.
As stated at the beginning, if $\varphi$ is not safe and/or in LP-normal form, it first has to be turned into an equivalent $\psi$ that is, and such that $|\psi| = \cO(\exp^2(|\varphi|))$ in the worst-case scenario, where $\exp(x) = 2^x$.
Then, $|\cA| = \cO(|\varphi|)$ if $\varphi$ is safe and in LP-normal form, and $|\cA| = \cO(|\psi|) = \cO(\exp^2(|\varphi|))$ otherwise.
\hfill \qed
\newpage

\subsection{Proof of Theorem~\ref{theo:selectors-compilation}}\label{proof:selectors}

\subsubsection{$\sSTRICT$ operator}

Consider a CEA $\cA = (Q,\Delta,I,F)$.
We will first define a CEA $\cA_\sSTRICT =(Q_\sSTRICT,\Delta_\sSTRICT,I_\sSTRICT,F_\sSTRICT)$ and then prove that it is equivalent to $\sSTRICT(\cA)$.
The set of states is defined as $Q_\sSTRICT= \{q^m \mid q \in Q \text{ and } m \in \{\amark,\umark\}\}$, the transition relation is $\Delta_\sSTRICT = \{(p^m,(P,m),q^m) \mid (p,(P,m),q) \in \Delta)\} \cup \{(p^\umark,(P,\amark),q^\amark) \mid (p,(P,\amark),q) \in \Delta\}$, the initial states are $I_\sSTRICT = \{q^\umark \mid q \in I\}$ and the final states are $F_\sSTRICT = \{q^\amark \mid q \in F\}$.
Basically, there are two copies of $\cA$, the first one which only have the $\umark$ transitions, and the second one which only have the $\amark$ ones, and at any $\amark$ transition it can move from the first on to the second.
On an execution, $\cA_\sSTRICT$ starts in the first copy of $\cA$, moving only through transitions that do not mark the positions, until it decides to mark one.
At that point it moves to the second copy of $\cA$, and from there on it moves only using transitions with $\amark$ until it reaches an accepting state.

Now, we prove that the construction is correct, that is, $\sem{\cA_\sSTRICT}(S) = \sem{\sSTRICT(\cA)}(S)$ for every $S$.
Let $S$ be any stream.
First, consider a complex event $C \in \sem{\sSTRICT(\cA)}(S)$.
This means that $C \in \sem{\cA}(S)$ and that $C$ has the form $C = \{m_0,m_1, \ldots,m_k\}$ with $m_i = m_{i-1}+1$.
Therefore, there is an accepting run of $\cA$ of the form:
$$
\rho: q_0 \ \trans{P_1 / \umark} \ q_1 \  \trans{P_2 / \umark} \ \cdots \ \trans{P_{m_1-1} / \umark} q_{m_1-1} \ \trans{P_{m_1} / \amark} \ q_{m_1} \  \trans{P_{m_2} / \amark} \ \cdots \ \trans{P_{m_k} / \amark} \ q_{m_k}
$$
Such that $\rmatch(\rho)=C$.
Consider now the run over $\cA_\sSTRICT$ of the form:
$$
\rho': q_0^\umark \ \trans{P_1 / \umark} \ q_1^\umark \  \trans{P_2 / \umark} \ \cdots \ \trans{P_{m_1-1} / \umark} q_{m_1-1}^\umark \ \trans{P_{m_1} / \amark} \ q_{m_1}^\amark \  \trans{P_{m_2} / \amark} \ \cdots \ \trans{P_{m_k} / \amark} \ q_{m_k}^\amark
$$
It is clear that all transitions of $\rho'$ are in $\Delta_\sSTRICT$, because the ones with $\umark$ are in the first copy of $\cA$, the first one with $\amark$ passes from the first copy to the second, and the following ones with $\amark$ are in the second copy.
Therefore $\rho'$ is indeed run of $\cA_\sSTRICT$ over $S$, and because $q_{m_k} \in F$, then $q_{m_k}^\amark \in F$ and $\rho'$ is an accepting run.
Moreover, $\rmatch(\rho')=C$, thus $C \in \sem{A_\sSTRICT}(S)$.

Now, consider a complex event $C \in \sem{A_\sSTRICT}(S)$, of the form $C = \{m_0,m_1, \ldots,m_k\}$.
It means that there is an accepting run of $\cA_\sSTRICT$ of the form:
$$
\rho: q_0^\umark \ \trans{P_1 / \umark} \ \cdots \ \trans{P_{m_1-1} / \umark} q_{m_1-1}^\umark \ \trans{P_{m_1} / \amark} \ q_{m_1}^\amark \  \trans{P_{m_2} / \amark} \ \cdots \ \trans{P_{m_k} / \amark} \ q_{m_k}^\amark
$$
Such that $\rmatch(\rho)=C$.
Notice that $\rho$ must have this form because of the structure of $\cA_\sSTRICT$, which force $\rho$ to have $\umark$ transitions at the beginning and $\amark$ ones at the end.
Consider then the run of $\cA$ of the form:
$$
\rho': q_0 \ \trans{P_1 / \umark} \ \cdots \ \trans{P_{m_1-1} / \umark} q_{m_1-1} \ \trans{P_{m_1} / \amark} \ q_{m_1} \  \trans{P_{m_2} / \amark} \ \cdots \ \trans{P_{m_k} / \amark} \ q_{m_k}
$$
Similar to the converse case, it is clear that all transitions in $\rho'$ are in $\Delta$.
Therefore $\rho'$ is an accepting run of $\cA$ over $S$, and because $\rmatch(\rho')=C$, it holds that $C \in \sem{\sSTRICT(\cA)}(S)$.

Finally, notice that $\cA_\sSTRICT$ consists in duplicating $\cA$, thus the size of $\cA_\sSTRICT$ is two times the size of $\cA$.
\hfill \qed

\subsubsection{$\sNEXT$ operator}\label{proof:next}

Let $\cR$ be a schema and $\cA = (Q, \Delta, I, F)$ be a CEA over $\cR$.
In order to define the new CEA $\cA_\sNEXT = (Q_\sNEXT, \Delta_\sNEXT, I_\sNEXT, F_\sNEXT)$ we first need to introduce some notation.
We begin by imposing an arbitrary linear order $<$ between the states of $Q$, i.e., for every two different states $p,q \in Q$, either $p < q$ or $q < p$.
Let $T_1 \ldots T_k$ be a sequence of sets of states such that $T_i \subseteq Q$. 
We say that a sequence $T_1 \ldots T_k$ is a \emph{total preorder} over $Q$ if $T_i \cap T_j = \emptyset$ for every $i \neq j$. Notice that the sequence is not necessarily a partition, i.e., it does not need to include all states of $Q$.
A total preorder naturally defines a preorder between states where ``$p$ is less than $q$'' whenever $p \in T_i$, $q \in T_j$, and $i < j$. 
To simplify notation, we define the concatenation between set of states such that $T \cdot T' = T T'$ whenever $T$ and $T'$ are non-empty and $T \cdot T' = T \cup T'$ otherwise.
The concatenation between sets will help to remove empty sets during the final construction. 
Now, given any sequence $T_1 \ldots T_k$ (not necessarily a total preorder), one can convert $T_1 \ldots T_k$ into a total preorder by applying the operation \textit{Total Pre-Ordering} ($\TPO$) defined as follows:
\[
\TPO(T_1 \ldots T_k) = U_1 \cdot \ldots \cdot U_k \ \ \text{ where }  \ 
U_i \;\; = \;\; T_i - \bigcup_{j=1}^{i-1} T_j.
\]
Let $\cP =\{P_1,P_2,\ldots,P_n\}$ be the set of all predicates in the transitions of $\Delta$.
Define the equivalence relation $=_\cP$ between tuples such that, for every pair of tuples $t_1$ and $t_2$, $t_1=_\cP t_2$ holds if, and only if, both satisfy the same predicates, i.e., $t_1 \in P_i$ holds iff $t_2 \in P_i$ holds, for every $i$.
Moreover, for every tuple $t$ let $[t]_\cP$ represent the equivalence class of $t$ defined by $=_\cP$, that is, $[t]_\cP = \{t' \mid t =_\cP t'\}$.
Notice that, even though there are infinitely many tuples, there is a finite amount of equivalence classes which is bounded by all possible combinations of predicates in $\cP$, i.e., $2^{|\cP|}$.
Now, for every $t$, define the predicate:
\[
P_t = (\bigwedge\limits_{t \in P_i} P_i) \wedge (\bigwedge\limits_{t \notin P_i} \lnot P_i)
\]
and define the new set of predicates $\types{\cP} = \{P_t \mid t \in \tuples(\cR)\}$.
Notice that for every tuple $t$ there is exactly one predicate in $\types{\cP}$ that is satisfied by $t$, and that predicate is precisely $P_t$.
Finally, we extend the transition relation $\Delta$ as a function such that: 
$$\Delta(T, P, m) = \{q \in Q \mid \text{exist $p \in T$ and $P' \in \cP$ such that $P \subseteq P'$ and $(p, (P', m), q) \in \Delta$}\}
$$
for every $T \subseteq Q$, $P \in \types{\cP}$, and $m \in \{\amark, \umark\}$.

In the sequel, we define the CEA $\cA_\sNEXT = (Q_\sNEXT, \Delta_\sNEXT, I_\sNEXT, F_\sNEXT)$ component by component. 
First, the set of states $Q_\sNEXT$ is defined as
\[
Q_\sNEXT \; = \; \{(T_1 \ldots T_k, p) \ \mid\ \text{$T_1 \ldots T_k$ is a total preorder over $Q$} \text{ and } \text{$p \in T_i$ for some $i \leq k$}  \}
\]
Intuitively, the state $p$ is the current state of the `simulation' of $\cA$ and the sets $T_1\ldots T_k$ contain the states in which the automaton could be, considering the prefix of the word read until the current moment. Furthermore, the sets are ordered consistently with respect to $\leqnext$, e.g., if a run $\rho_1$ reach the state $(\{1,2\}\{3\},1)$ and other run $\rho_2$ reach the state $(\{1,2\}\{3\},3)$, then $\rmatch(\rho_2)\lnext\rmatch(\rho_1)$. This property is proven later in Lemma \ref{lemnext}.

Secondly, the transition relation is defined as follows. 
Consider $P \in \types{\cP}$, $m \in \{\amark, \umark\}$ and $(\cT, p), (\cU, q) \in Q_\sNEXT$ where $\cT=T_1 \ldots T_k$ and $p \in T_i$ for some $i \leq k$. 
Then we have that $((\cT, p), P, m, (\cU, q)) \in \Delta_\sNEXT$ if, and only if,
\begin{enumerate}
	\item $(p, P',m,q) \in \Delta$ for some $P'$ such that $P \subseteq P'$,
	\item  $q \notin \Delta(T_j, P, m')$ for every $m' \in \{\amark, \umark\}$ and $j < i$, 
	\item $\cU = \TPO(U_1^\amark \cdot U_1^\umark \cdot \ldots \cdot U_k^\amark \cdot U_k^\umark)$ where $U_j^\amark = \Delta(T_j,P, \amark)$ and $U_j^\umark = \Delta(T_j, P, \umark)$ for $1 \leq j \leq k$,
	\item $q \notin \Delta(T_i, P, \amark)$ when $m = \umark$, and
	\item $(p', P', m, q) \notin \Delta$ for every $p' \in T_i$ such that $p'<p$ and every $P'$ such that $P \subseteq P'$.
\end{enumerate}
Intuitively, the first condition ensures that the `simulation' respects the transitions of $\Delta$, the second checks that the next state could not have been reached from a `higher' run, the third ensures that the sequence is updated correctly and the fourth restricts that if the next state can be reached either marking the letter or not, it always choose to mark it.
The last condition is not strictly necessary, and removing it will not change the semantics of the automaton, but is needed to ensure that there are no two runs $\rho_1$ and $\rho_2$ that end in the same state such that $\rmatch(\rho_1) = \rmatch(\rho_2)$.

Finally, the initial set $I_\sNEXT$ is defined as all states of the form $(I, q)$ where $q \in I$ and the final set $F_\sNEXT$ as all states of the form $(T_1 \ldots T_k, p)$ such that $p \in F$ and  there exists $i \leq k$ such that $p \in T_i$ and $T_j \cap F = \emptyset$ for all $j < i$.

Let $S = t_1 t_2 \ldots$ be any stream.
To prove that the construction is correct, we will need the following lemma.

\begin{lemma}\label{lemnext}
	Consider a CEA $\cA=(Q,\Delta,I,F)$, a stream $S$, two states $(\cT,p),(\cT,q)\in Q_\sNEXT$ with the same sequence $\cT=T_1\ldots T_k$ such that $p\in T_i$, $q \in T_j$ for some $i$ and $j$, and two runs $\rho_1,\rho_2$ of $\cA_\sNEXT$ over $S$ that have the same length and reach the states $(\cT,p)$ and $(\cT,q)$, respectively. Then, $i<j$ if, and only if:
	\[
	\rmatch(\rho_2) \lnext \rmatch(\rho_1)
	\]
\end{lemma}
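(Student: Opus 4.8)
The plan is to prove the biconditional by induction on the common length $n$ of the two runs, exploiting the fact that the \emph{sequence} component of $\cA_\sNEXT$ behaves deterministically. First I would record this determinism: the unique initial sequence is $I$, and condition~(3) computes the successor sequence $\cU$ as a function of the current sequence alone together with the predicate $P = P_{t_n} \in \types{\cP}$ satisfied by the incoming tuple. Hence every run of length $n$ over $S$ carries exactly the same sequence $\cT$, so comparing two runs that share $\cT$ is legitimate and the only freedom lies in the reached state. The base case $n=0$ is immediate: both runs sit in an initial state with sequence $I$, so $i=j=1$ and $\rmatch(\rho_1)=\rmatch(\rho_2)=\emptyset$, making both sides of the biconditional false.

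For the inductive step I would peel off the last transition of each run, writing $\rho_1 = \rho_1^{-}\cdot\tau_1$ and $\rho_2 = \rho_2^{-}\cdot\tau_2$. By the determinism just noted, the length-$(n{-}1)$ prefixes share a common sequence $\cT^{-} = T_1^{-}\ldots T_\ell^{-}$; say $\rho_1^{-}$ reaches a state in $T_a^{-}$ and $\rho_2^{-}$ reaches one in $T_b^{-}$, while $\tau_1,\tau_2$ carry marks $m_1,m_2$. Set $C_1^{-}=\rmatch(\rho_1^{-})$ and $C_2^{-}=\rmatch(\rho_2^{-})$, so that $C_i = C_i^{-}$ or $C_i = C_i^{-}\cup\{n\}$ according to whether $m_i=\umark$ or $m_i=\amark$.

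The heart of the argument is a \emph{block-order correspondence}. Order the pairs $(r,m)$ with $r\le\ell$ and $m\in\{\amark,\umark\}$ by $(r,\amark)\prec(r,\umark)\prec(r{+}1,\amark)$, matching the order in which the blocks $\Delta(T_r^{-},P,\amark)$ and $\Delta(T_r^{-},P,\umark)$ occur in the list fed to $\TPO$ in condition~(3). Using conditions~(2) and~(4), I would show that the reached state $p$ of $\rho_1$ makes its \emph{first} appearance in that list precisely in block $(a,m_1)$: condition~(2) forbids $p$ from any block with source index below $a$, and condition~(4) forbids the $\umark$ case from also appearing in $(a,\amark)$. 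Since $\TPO$ keeps only the first occurrence of each state and drops empty blocks while preserving the order of first occurrences, the index $i$ of $p$ in $\cT$ is the $\prec$-rank of $(a,m_1)$, and symmetrically $j$ is the $\prec$-rank of $(b,m_2)$. Thus $i<j$ holds if and only if $(a,m_1)\prec(b,m_2)$.

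It then remains to match the block order against $\lnext$. From the induction hypothesis I get $a<b$ iff $C_2^{-}\lnext C_1^{-}$, and (applying the hypothesis to the swapped pair together with totality, Lemma~\ref{lemma:next-order}) its consequence $a=b$ iff $C_1^{-}=C_2^{-}$. Because the new matches differ from the old ones only possibly at the maximal position $n$, adding $n$ never changes $\min(C_1\triangle C_2)$ once the old matches already differ, so for $a\neq b$ the sign of $a$ versus $b$, the block order, and $\rmatch(\rho_2)\lnext\rmatch(\rho_1)$ all coincide. When $a=b$ we have $C_1^{-}=C_2^{-}$, and a short case analysis on $m_1,m_2$ — using that a marked position $n$ makes its complex event strictly $\leqnext$-larger — gives $(a,m_1)\prec(b,m_2)$ iff $\rmatch(\rho_2)\lnext\rmatch(\rho_1)$, closing the induction. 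I expect the main obstacle to be the bookkeeping in the block-order step: verifying that conditions~(2) and~(4) genuinely pin down the first-occurrence block and that $\TPO$ preserves first-occurrence order, since the correctness of the entire correspondence — and hence of the $\sNEXT$ construction — rests on it.
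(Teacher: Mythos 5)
Your proposal is correct and follows essentially the same route as the paper's proof: induction on the common run length, peeling off the last transition, using determinism of the sequence component to compare prefixes, and then a case analysis on whether the prefix blocks coincide or are strictly ordered. Your ``block-order correspondence'' merely makes explicit (via conditions~(2) and~(4) and the first-occurrence behaviour of $\TPO$) the step that the paper asserts with ``because of the construction, if $r<s$ then $i<j$''.
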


\begin{proof}
	We will prove it by induction over the length of the runs.
	Let $q_0,q_0'\in I$ be any two initial states of $\cA$, not necessarily different.
	First, assume that both runs consist of reading a single tuple $t$.
	Then, the runs are of the form:
	\[
	\rho_1: (I,q_0) \ \trans{P_t / m_1} \ (\cT,p)
	\qquad\text{and}\qquad
	\rho_2: (I,q_0') \ \trans{P_t / m_2} \ (\cT,q)
	\]
	where $\cT=T_1 T_2=\TPO(\Delta(I,P_t,\amark) \Delta(I,P_t,\umark))$ and neither $T_1$ nor $T_2$ can be empty because $p$ and $q$ are in different sets.
	For the if direction, the only option is that $\rmatch(\rho_1)=\{1\}$ and $\rmatch(\rho_2)=\{\}$, which implies that $m_1=\amark$ and $m_2=\umark$.
	Then $i < j$ because $p \in T_1$ and $q \in T_2$.
	For the only-if direction, because $i < j$ then $p\in T_1$ and $q \in T_2$, so necessarily $m_1=\amark$ and $m_2=\umark$.
	Because of this, $\rmatch(\rho_1)=\{1\}$ and $\rmatch(\rho_2)=\{\}$, therefore $\rmatch(\rho_2) \lnext \rmatch(\rho_1)$.
	Now, let $S = t_1 t_2 \ldots t_n \ldots$ and consider that the runs are of the form:
	\begin{align*}
		\rho_1: (I,q_0) \ \trans{P_{t_1} / m_1} \ (\cT_1,q_1) \  \trans{P_{t_2} / m_2} \ \cdots \ \trans{P_{t_{n-1}} / m_{n-1}} (\cT_{n-1},q_{n-1}) \trans{P_{t_n} / m_n} (\cT,p)\\
		\rho_2: (I,q_0') \ \trans{P_{t_1} / m_1'} \ (\cT_1,q_1') \  \trans{P_{t_2} / m_2'} \ \cdots \ \trans{P_{t_{n-1}} / m_{n-1}'} (\cT_{n-1},q_{n-1}') \trans{P_{t_n} / m_n'} (\cT,q)
	\end{align*}
	Notice that both runs have the same sequences $\cT_1,\ldots,\cT_{n-1}$ because each sequence $\cT_i$ is defined only by the previous sequence $\cT_{i-1}$ and the tuple $t_i$ which implicitly defines the predicate $P_{t_i}$.
	Furthermore, all the runs over the same word must have the same sequences.
	Define the runs $\rho_1'$ and $\rho_2'$, respectively, as the runs $\rho_1$ and $\rho_2$ without the last transition.
	Consider that $\cT_{n-1}$ has the form $\cT_{n-1}=U_1 U_2\ldots U_k$, and that $q_{n-1}\in U_r$ and $q_{n-1}'\in U_s$ for some $r$ and $s$.
	Notice that, because of the construction, if it is the case that $r<s$ ($r>s$), then $i<j$ ($i>j$ resp.) must hold.
	For the if direction, consider that $\rmatch(\rho_2) \lnext \rmatch(\rho_1)$.
	If $\rmatch(\rho_1') = \rmatch(\rho_2')$, by induction hypothesis it means that $r=s$.
	Moreover, the only option is that $m_n = \amark$ and $m_n' = \umark$, therefore, by the construction it holds that $i < j$.
	If $\rmatch(\rho_2') \lnext \rmatch(\rho_1')$, by induction hypothesis it means that $r<s$ and because of the construction, $i<j$.
	Notice that $\rmatch(\rho_1') \lnext \rmatch(\rho_2')$ cannot occur because the lower element of $\rmatch(\rho_2')$ not in $\rmatch(\rho_1')$ would still be the lower element of $\rmatch(\rho_2)$ not in $\rmatch(\rho_1)$, thus contradicting $\rmatch(\rho_2) \lnext \rmatch(\rho_1)$.
	For the only-if direction, consider that $i<j$.
	It is easy to see that, if $r > s$, then $i$ cannot be lower than $j$, thus we do not consider this case.
	Now, consider the case that $r=s$.
	Because $i < j$, it must occur that $m_n=\amark$ and $m_n'=\umark$, so $\rmatch(\rho_1) = \rmatch(\rho_1') \cup \{n\}$ and $\rmatch(\rho_2)=\rmatch(\rho_2')$.
	By induction hypothesis, $\rmatch(\rho_1') = \rmatch(\rho_2')$, therefore $\rmatch(\rho_2) \lnext \rmatch(\rho_1)$.
	Consider now the case that $r < s$. By induction hypothesis, $\rmatch(\rho_2') \lnext \rmatch(\rho_1')$ and, because the last transition can only add $n$ to both complex events, it follows that $\rmatch(\rho_2) \lnext \rmatch(\rho_1)$.		 
\end{proof}

Now, we need to prove that if $C\in \sem{\sNEXT(\cA)}(S)$, then $C\in \sem{\cA_\sNEXT}(S)$ and vice versa. First, consider a complex event $C \in \sem{\cA_\sNEXT}(S)$.
To prove that $C\in \sem{\sNEXT(\cA)}(S)$, we need to show that $C\in\sem{\cA}(S)$ and that for all complex events $C'$ such that $C\leq_\sNEXT C'$ and $\max(C) = \max(C')$, $C'\notin\sem{\cA}(S)$.
Assume that the run associated to $C$ is:
$$
\rho: (\cU_0,q_0) \ \trans{P_{t_1} / m_1} \ (\cU_1,q_1) \  \trans{P_{t_2} / m_2} \ \cdots \ \trans{P_{t_n} / m_n} (\cU_n,q_n)
$$
Because of the construction of $\Delta$ (in particular, the first condition), for every $i$ it holds that $(q_{i-1},P_i,m_i,q_i)\in\Delta$ for some $P_i$ such that $P_{t_i} \subseteq P_i$.
Because $t_i \in P_{t_i}$, then $t_i \in P_i$, thus the run:
$$
\rho': q_0 \ \trans{P_1 / m_1} \ q_1 \  \trans{P_2 / m_2} \ \cdots \ \trans{P_n / m_n} q_n
$$
is an accepting run of $\cA$ over $S$, and thus $C\in \sem{\cA}(S)$. Now, recall from construction of $F_\sNEXT$ that there exists $i \leq k$ such that $q_n \in T_i$ and $T_j \cap F = \emptyset$ for all $j < i$, where $T_1\ldots T_k=\cU_n$. Then, because of Lemma \ref{lemnext}, $C'\lnext C$ for every other $C'\in \sem{\cA}(S)$ such that $\max(C) = \max(C')$, otherwise the run of $C'$ would end in a state inside a $T_j$ such that $j < i$ which cannot happen.
Therefore, $C\in \sem{\sNEXT(\cA)}(S)$.

Now, consider a complex event $C \in \sem{\sNEXT(\cA)}(S)$.
Assume that the run associated to $C$ is:
$$
\rho: q_0 \ \trans{P_1 / m_1} \ q_1 \  \trans{P_2 / m_2} \ \cdots \ \trans{P_n / m_n} q_n
$$
To prove that $C \in \sem{\cA_\sNEXT}(S)$ we will prove that there exists an accepting run on $\cA_\sNEXT$.
Based on $\rho$, consider now the run:
$$
\rho': (\cU_0,p_0) \ \trans{P_{t_1} / m_1} \ (\cU_1,p_1) \  \trans{P_{t_2} / m_2} \ \cdots \ \trans{P_{t_n} / m_n} (\cU_n,p_n)
$$
Where the complex events $m_1, \ldots, m_n$ are the same, each condition $P_{t_i}$ is defined by $t_i$ and each $\cU_i$ is the result of applying the function $\TPO$ based on $\cU_{i-1}$ and $P_{t_i}$
Moreover, each $p_i$ is defined as follows.
As notation, consider that $\cU_i=T^i_1\ldots T^i_{k_i}$ and that every $q_i$ is in the $r_i$-th set of $\cU_i$, i.e., $q_i\in T^i_{r_i}$.
Then, $p_i$ is the lower state in $T^i_{r_i}$ such that $(p_i, (P_{t_{i+1}}, m_{i+1}),p_{i+1}) \in \Delta$, and $p_n = q_n$.
Notice that $\rho'$ is completely defined by $\rho$ and $S$.
We will prove that $\rho'$ is an accepting run by checking that all transitions meet the conditions of the transition relation $\Delta_\sNEXT$.
Now, it is clear that the first condition is satisfied by all transitions, i.e., for every $i$ it holds that $(p_{i-1},(P',m_i),p_i)\in \Delta$ for some $P'$ such that $P_{t_i} \subseteq P'$ (just consider $P' = P_i$).
For the second condition, by contradiction suppose that it is not satisfied by $\rho'$.
It means that for some $i$, $p_i \in \Delta(T^{i-1}_j, P_i, m')$ for some $m' \in \{\amark, \umark\}$ and $j < r_i$.
In particular, consider that the state $p'\in T^{i-1}_j$ is the one for which $(p',a_i,m',p_i) \in \Delta$. Recall that every state inside a sequence is reachable considering the prefix of the word read until that moment.
This means that there exist the accepting runs:
\begin{gather*}
	\sigma: q_0' \ \trans{P_1' / m_1'} \ q_1' \  \trans{P_2' / m_2'} \ \cdots \ \trans{P_{i-1}' / m_{i-1}'} q' \  \trans{P_i' / m_i'} q_i \  \trans{P_{i+1} / m_{i+1}} \ \cdots \ \trans{P_n / m_n} q_n\\
	\sigma': (\cU_0,p_0') \ \trans{P_{t_1} / m_1'} \ (\cU_1,p_1') \  \trans{P_{t_1} / m_2'} \ \cdots \ \trans{P_{t_{i-1}} / m_{i-1}'} (\cU_{i-1},p') \  \trans{P_{t_i} / m_i'} (\cU_i,p_i) \  \trans{P_{t_{i+1}} / m_{i+1}} \ \cdots \ \trans{P_{t_n} / m_n} (\cU_n,p_n)
\end{gather*}
Where $p_i'$ are defined in a similar way to $p_i$.
Define for every run $\gamma$ and every $i$ the run $\gamma_i$ as $\gamma$ until the $i$-th transition.
For example, $\rho_i$ is equal to the run $\rho$ until the state $q_i$.
Then, by Lemma \ref{lemnext}, $\rmatch(\rho'_{i-1}) < \rmatch(\sigma'_{i-1})$, but $\rmatch(\rho') = \rmatch(\sigma')$. This is a contradiction, since $\rmatch(\rho')$ and $\rmatch(\sigma')$ differ from $\rmatch(\rho'_{i-1})$ and $\rmatch(\sigma'_{i-1})$ in that the latters can contain additional positions from $i$ to $n$, but the minimum position remains in $\rmatch(\sigma'_{i-1})$, and therefore in $\rmatch(\sigma')$.
The fourth condition is proven by contradiction too.
Suppose that it is not satisfied by $\rho'$, which means that for some $i$, $p_i \in \Delta(T^{i-1}_{r_{i-1}}, P_{t_i}, \amark)$ when $m_i = \umark$. Then, the run:
$$
\sigma: p_0 \ \trans{P_{t_1} / m_1} \ p_1 \  \trans{P_{t_2} / m_2} \ \cdots \ \trans{P_{t_{i-1}} / m_{i-1}} p_{i-1} \  \trans{P_{t_i} / \amark} p_i \  \trans{P_{t_{i+1}} / m_{i+1}} \ \cdots \ \trans{P_{t_n} / m_n} p_n
$$
is an accepting run such that $\rmatch(\rho) < \rmatch(\sigma)$, which is a contradiction, since $C \in \sem{\sNEXT(\cA)}(S)$.
The third and last conditions are trivially proven because of the construction of the run.
Therefore, $\rho'$ is a valid run of $\cA_\sNEXT$ over $S$.
Moreover, because $p_n = q_n \in F$ then $\rho'$ is an accepting run, therefore $\rmatch(\rho) = C \in \sem{\cA_\sNEXT}(S)$.

Now, we analyze the properties of the automaton $\cA_\sNEXT$.
First, we show that $|\cA_\sNEXT|$ is at most exponential over $|\cA|$.
Notice that each state in $Q_\sNEXT$ represents a sequence of subsets of $Q$, thus each state has at most $|Q|$ subsets.
Moreover, for each one of the subsets there are at most $2^{|Q|}$ possible combinations.
Therefore, there are no more than $2^{|Q|}|Q|$ possible states in $Q_\sNEXT$, thus $|\cA_\sNEXT| \in \cO(2^{|\cA|})$.
\hfill \qed

\subsubsection{$\LAST$ operator}

The $\LAST$ case is done in the same way as the $\sNEXT$ one, with some minor changes.
We now define the CEA $\cA_\LAST = (Q_\LAST, \Delta_\LAST, I_\LAST, F_\LAST)$ component by component. 
First, the set of states $Q_\LAST$ is defined exactly like $Q_\sNEXT$:
\[
Q_\LAST \; = \; \{(T_1 \ldots T_k, p) \ \mid\ \text{$T_1 \ldots T_k$ is a total preorder over $Q$} \text{ and } \text{$p \in T_i$ for some $i \leq k$}  \}
\]
The intuition in this case is that the sets will be ordered consistently with respect to $\leqlast$, e.g., if a run $\rho_1$ reach the state $(\{1,2\}\{3\},1)$ and other run $\rho_2$ reach the state $(\{1,2\}\{3\},3)$, then $\rmatch(\rho_2)\llast\rmatch(\rho_1)$.
This property can be proven in the same way as Lemma \ref{lemnext}.

Secondly, the transition relation is defined as follows. 
Consider $P \in \types{\cP}$, $m \in \{\amark, \umark\}$ and $(\cT, p), (\cU, q) \in Q_\LAST$ where $\cT=T_1 \ldots T_k$ and $p \in T_i$ for some $i \leq k$. 
Then we have that $((\cT, p), P, m, (\cU, q)) \in \Delta_\LAST$ if, and only if,
\begin{enumerate}
	\item $(p, P',m,q) \in \Delta$ for some $P'$ such that $P \subseteq P'$,
	\item  $q \notin \Delta(T_j, P, m')$ for every $m' \in \{\amark, \umark\}$ and $j < i$, 
	\item $\cU = \TPO(U_1^\amark \cdot \ldots \cdot U_k^\amark \cdot U_1^\umark \cdot \ldots \cdot U_k^\umark)$ where $U_j^\amark = \Delta(T_j,P, \amark)$ and $U_j^\umark = \Delta(T_j, P, \umark)$ for $1 \leq j \leq k$, and
	\item $q \notin \Delta(T_i, P, \amark)$ when $m = \umark$,
	\item $(p', P', m, q) \notin \Delta$ for every $p' \in T_i$ such that $p'<p$ and every $P'$ such that $P \subseteq P'$.
\end{enumerate}
The only condition that changes with respect to the $\sNEXT$ case is the third one.
In this case, it ensures that the sequence is updated correctly according to the last-order.
In particular, it changes $U_1^\amark \cdot U_1^\umark \cdot \ldots \cdot U_k^\amark \cdot U_k^\umark$ in the $\sNEXT$ case with $U_1^\amark \cdot \cdot \ldots \cdot U_k^\amark \cdot U_1^\umark \cdot \ldots \cdot U_k^\umark$.
Intuitively, in the first one, if the run reads event at position $i$ and adds it to its complex event $C$ (turning into $C\cup \{i\}$), then it ``wins" over the case in which that same run did not mark it.
On the other hand, in the second case, if the run reads event at position $i$ and adds it to its complex event $C$ (turning into $C\cup \{i\}$), then it ``wins" over every other run that did not mark it.

Finally, the initial and final sets are defined like for the $\sNEXT$: $I_\LAST$ is defined as all states of the form $(I, q)$ where $q \in I$ and $F_\LAST$ as all states of the form $(T_1 \ldots T_k, p)$ such that $p \in F$ and  there exists $i \leq k$ such that $p \in T_i$ and $T_j \cap F = \emptyset$ for all $j < i$.

The proof of correctness is a direct replicate of the one for the $\sNEXT$ case, changing only the notation from $\sNEXT$ to $\LAST$.

\subsubsection{$\sMAX$ operator}

Let $\cA = (Q, \Delta, I, F)$ be a CEA.
Similarly to the construction of CEA for the $\sNEXT$, we define the set $\types{\cP}$ such that for every tuple $t$ there is exactly one predicate $P_t$ in $\types{\cP}$ that is satisfied by $t$, and extend the transition relation $\Delta$ as a function $\Delta(T, P, m)$ for every $T \subseteq Q$, $P \in \types{\cP}$, and $m \in \{\amark, \umark\}$.
Further, we overload the notation of $\Delta$ as a function such that $\Delta(T, P) = \Delta(T,P,\amark) \cup \Delta(T,P,\umark)$.

We now define the CEA $\cA_\sMAX = (Q_\sMAX, \Delta_\sMAX, I_\sMAX, F_\sMAX)$ component by component.
First, the set of states is $Q_\MAX = \{(S,T) \mid S,T \subseteq Q,\, S \neq \emptyset \text{ and } S \cap T = \emptyset\}$.
At each $(S,T) \in Q_\MAX$, $S$ will keep track of the states of $\cA$ that are reached by runs that define the same complex event $C$ (and are not in $T$), and $T$ will keep track of the states that are reached by runs that define a complex event $C'$ such that $C \subset C'$.
The transition relation $\Delta_\MAX$ as $\Delta = \Delta_\MAX^\amark \cup \Delta_\MAX^\umark$, with
\begin{gather*}
\Delta_\MAX^\amark \ = \ \{((S_1,T_1),(P,\amark),(S_2,T_2)) \mid T_2 = \Delta(T_1,P,\amark) \ \text{ and } \ S_2 = \Delta(S_1, P,\amark) \setminus T_2\}\\
\Delta_\MAX^\umark \ = \ \{((S_1,T_1),(P,\umark),(S_2,T_2)) \mid T_2 = \Delta(T_1,P) \cup \Delta(S_1,P,\amark) \ \text{ and } S_2 \ = \Delta(S_1,P,\umark) \setminus T_2\}
\end{gather*}
The former updates $T_1$ to $T_2$ using $\amark$-transitions from $T_1$, and $S_1$ to $S_2$ the same way but removing the ones from $T_2$.
The latter updates $T_1$ to $T_2$ using all transitions from $T_1$ plus the $\amark$-transitions from $S_1$, while it updates $S_1$ to $S_2$ using $\umark$-transitions from $S_1$.
Finally, $I_\sMAX = \{(I, \emptyset)\}$, and $F_\sMAX = \{(S, T) \in Q_\sMAX \mid S \cap F \neq \emptyset \text{ and } T \cap F = \emptyset \}$.

Next, we prove the above, i.e., $C \in \sem{\sMAX(\cA)}(S)$ iff $C \in \sem{\cA_\sMAX}(S)$.
To simplify the proof, we assume that $\cA$ is I/O-deterministic, therefore each state of $Q_\MAX$ now has the form $(q,T)$.
The proof can easily be extended for non I/O-deterministic $\cA$.
First, we prove the if direction.
Consider a complex event $C$ such that $C \in \sem{\cA_\sMAX}(S)$.
To prove that $C \in \sem{\sMAX(\cA)}(S)$, we first prove that $C \in \sem{\cA}(S)$ by giving an accepting run of $\cA$ associated to $C$.
Assume that the run of $\cA_\sMAX$ over $S$ associated to $C$ is:
$$
\rho: (q_0,T_0) \ \trans{P_{t_1} / m_1} \ (q_1,T_1) \  \trans{P_{t_2} / m_2} \ \cdots \ \trans{P_{t_n} / m_n} \ (q_n,T_n)
$$
Where $T_0 = \emptyset$, $T_n \cap F = \emptyset$ and $((q_{i-1}, T_{i-1}), (P_{t_i}, m_i), (q_i,T_i)) \in \Delta_\sMAX$.
Furthermore, $q_0 \in I$ and $q_n \in F$.
Also, from the construction of $\Delta_\sMAX$, we deduce that for every $i$ there is a predicate $P_i$ such that $(q_{i-1},(P_i, m_i), q_i) \in \Delta$.
This means that the run:
$$
q_0 \ \trans{P_1 / m_1} \ q_1 \  \trans{P_2 / m_2} \ \cdots \ \trans{P_n / m_n} \ q_n
$$
Is an accepting run of $\cA$ associated to $C$.
Now, we prove by contradiction that for every $C'$ such that $C \subset C'$, $C' \notin \sem{\cA}(S)$.
In order to do this, we define the next lemma, in which we use the notion of \textit{partial run}, which is the same as a run but not necessarily beginning at an initial state.
\begin{lemma}\label{lemmax1}
	Consider an I/O-deterministic CEA $\cA=(Q,\Delta,I,F)$, a stream $S = t_1, t_2, \ldots$ and two partial runs of $\cA_\sMAX$ and $\cA$ over $S$, respectivelly:
	\begin{gather*}
	\sigma: (q_0,T_0) \ \trans{P_{t_1} / m_1} \ (q_1,T_1) \  \trans{P_{t_2} / m_2} \ \cdots \ \trans{P_{t_n} / m_n} \ (q_n,T_n)\\
	\sigma': p_0 \ \trans{P_1 / m_1'} \ p_1 \  \trans{P_2 / m_2'} \ \cdots \ \trans{P_n / m_n'} \ p_n
	\end{gather*}
	Then, if $p_0\in T_0$ and $m_i'=\amark$ at every $i$ for which $m_i=\amark$, it holds that $p_n \in T_n$.
\end{lemma}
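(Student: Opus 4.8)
The plan is to prove Lemma~\ref{lemmax1} by induction on the common length $n$ of the two partial runs, showing that the second component $T_i$ of the $\cA_\sMAX$-run ``absorbs'' the state $p_i$ of the $\cA$-run at every step. The intuition is that $T_i$ is designed to collect exactly the states reachable by runs whose complex event strictly contains the one tracked by the first component, and the two hypotheses — $p_0 \in T_0$ and $m_i' = \amark$ whenever $m_i = \amark$ — say precisely that $\sigma'$ starts inside this tracking set and never \emph{fails} to mark a position that $\sigma$ marks, so $\rmatch(\sigma) \subseteq \rmatch(\sigma')$. For the base case $n = 0$ there is nothing to read, and the hypothesis $p_0 \in T_0$ is already the conclusion $p_n \in T_n$.

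For the inductive step I would first restrict both runs to their length-$(n-1)$ prefixes. These are again partial runs of $\cA_\sMAX$ and $\cA$ over $S$ (the sequence of $T$-components of a $\cA_\sMAX$-run over a fixed stream is determined step by step by $\Delta_\MAX$), and they inherit both hypotheses: $p_0 \in T_0$ is untouched, and the implication $m_i = \amark \Rightarrow m_i' = \amark$ continues to hold for $i \le n-1$. The induction hypothesis then gives $p_{n-1} \in T_{n-1}$, and it remains to propagate this one step. If $m_n = \amark$, then by hypothesis $m_n' = \amark$ as well; the last transition of $\sigma$ is an $\amark$-transition, so by the definition of $\Delta_\MAX^\amark$ we have $T_n = \Delta(T_{n-1}, P_{t_n}, \amark)$, and since $\sigma'$ takes an $\amark$-transition $(p_{n-1}, (P, \amark), p_n) \in \Delta$ with $P \supseteq P_{t_n}$ out of $p_{n-1} \in T_{n-1}$, the definition of the extended function yields $p_n \in \Delta(T_{n-1}, P_{t_n}, \amark) = T_n$. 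If instead $m_n = \umark$, then the last transition of $\sigma$ is a $\umark$-transition, so $\Delta_\MAX^\umark$ gives $T_n \supseteq \Delta(T_{n-1}, P_{t_n}) = \Delta(T_{n-1}, P_{t_n}, \amark) \cup \Delta(T_{n-1}, P_{t_n}, \umark)$; regardless of whether $m_n'$ is $\amark$ or $\umark$, the step of $\sigma'$ is a transition of $\cA$ with mark $m_n'$ and a predicate containing $t_n$ out of $p_{n-1} \in T_{n-1}$, hence $p_n \in \Delta(T_{n-1}, P_{t_n}, m_n') \subseteq T_n$. This closes the induction.

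The only delicate point — the one I would flag as the crux — is the $\umark$ case: one must notice that the definition of $\Delta_\MAX^\umark$ sets $T_n$ to $\Delta(T_{n-1}, P_{t_n}) \cup \Delta(q_{n-1}, P_{t_n}, \amark)$, i.e.\ it closes $T_{n-1}$ under \emph{both} marks. This is exactly what makes the tracking set robust to the choice of $m_n'$, and it is why the hypothesis only constrains $m_i'$ on the $\amark$-positions of $\sigma$ rather than pinning it down completely. Everything else is a routine unfolding of $\Delta_\MAX$ and of the extended transition function $\Delta(\cdot,\cdot,\cdot)$, together with the standard fact that $t_i \in P_i$ forces $P_{t_i} \subseteq P_i$. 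I note for context that the extra summand $\Delta(q_{n-1}, P_{t_n}, \amark)$, which this lemma does not directly invoke, is precisely the mechanism that injects a \emph{newly diverging} strict superset into $T$ at the moment it first marks a position skipped by the tracked run; Lemma~\ref{lemmax1} then certifies that, once inside $T$, such a run stays in $T$, which is what the surrounding $\sMAX$ argument uses to rule out accepting a non-maximal complex event.
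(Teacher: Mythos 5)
Your proof is correct and follows essentially the same route as the paper's: induction on $n$ with the same two-case analysis on $m_n$, using $T_n = \Delta(T_{n-1},P_{t_n},\amark)$ in the $\amark$ case and $\Delta(T_{n-1},P_{t_n}) \subseteq T_n$ in the $\umark$ case. Your explicit remarks on why $P_{t_n} \subseteq P_n$ and on the role of the extra summand $\Delta(q_{n-1},P_{t_n},\amark)$ are accurate elaborations of points the paper leaves implicit.
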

\begin{proof}
	This is proved by induction over the length $n$.
	First, if $n = 0$, then $p_n=p_0$ and $T_n = T_0$, so $p_n\in T_n$.
	Now, assume that the lemma holds for $n-1$, i.e., $p_{n-1}\in T_{n-1}$.
	Consider the case that $m_n = \amark$.
	Then $m_n' = \amark$ too, thus $(p_{n-1},(P_n,\amark),p_n)\in \Delta$.
	Furthermore, $T_n = \Delta(T_{n-1},P_{t_n},\amark)$ and therefore $p_n \in T_n$, because $p_{n-1} \in T_{n-1}$.
	Now, consider the case $m_n = \umark$.
	Either $(p_{n-1},(P_n,\amark),p_n) \in \Delta$ or $(p_{n-1},(P_n,\umark),p_n) \in \Delta$, so $p_n \in \Delta(T_{n-1},P_{t_n})$.
	Moreover, $\Delta(T_{n-1},P_{t_n}) \subseteq T_n$ because of the construction of $\Delta_\sMAX$, therefore $p_n \in T_n$.
\end{proof}

Now, by contradiction consider a complex event $C'$ such that $C \subset C'$ and $C'\in \sem{\cA}(S)$. Then, there must exist an accepting run of $\cA$ over $S$ associated to $C'$ of the form:
$$
\rho': p_0 \ \trans{P_1' / m_1'} \ p_1 \  \trans{P_2' / m_2'} \ \cdots \ \trans{P_n' / m_n'} \ p_n
$$
such that $m_i' = \amark$ at every $i$ for which $m_i = \amark$, and there is at least one $i$ for which $m_i = \umark$ and $m_i' = \amark$.
Consider $i$ to be the lower position for which this happens.
Because $\cA$ is I/O-deterministic, $\rho'$ can be rewritten as:
$$
\rho': q_0 \ \trans{P_1 / m_1} \ \cdots \ \trans{P_{i-1} / m_{i-1}} \ q_{i-1} \ \trans{P_i' / \amark} \ p_i \ \trans{P_{i+1}' / m_{i+1}'} \cdots \ \trans{P_n' / m_n'} \ p_n
$$
Similarly, to ease visualization we rewrite $\rho$ as:
$$
\rho: (q_0,T_0) \ \trans{P_{t_1} / m_1} \ \cdots \ \trans{P_{t_{i-1}} / m_{i-1}} \ (q_{i-1},T_{i-1}) \ \trans{P_{t_i} / \umark} \ (q_i,T_i) \ \trans{P_{t_{i+1}} / m_{i+1}} \cdots \ \trans{P_{t_n} / m_n} \ (q_n,T_n)
$$
In particular, the transition $((q_{i-1},T_{i-1}), (P_{t_i}, \umark), (q_i, T_i))$ is in $\Delta_\sMAX$, which means that $\Delta(\{q_{i-1}\},P_{t_i},\amark) \subseteq T_i$.
Moreover, $(q_{i-1},(P_i',\amark),p_i) \in \Delta$ and, because $t_i \in P_i'$, then $P_{t_i} \subseteq P_i'$ thus $p_i \in T_i$.
Now, by Lemma \ref{lemmax1} it follows that $p_n \in T_n$.
But because $\rho$ is an accepting run, we get that $T_n\cap F = \emptyset$ and so $p_n \notin F$, which is a contradiction to the statement that $\rho'$ is an accepting run.
Therefore, for every $C'$ such that $C \subset C'$, $C'\notin \sem{\cA}(S)$, hence $C \in \sem{\sMAX(\cA)}(S)$.

Next, we will prove the only-if direction.
For this, we will need the following lemma:
\begin{lemma} \label{lemmax2}
	Consider an I/O-deterministic CEA $\cA=(Q,\Delta,I,F)$, a stream $S=t_1, t_2, \ldots$, a run of $\cA_\sMAX$ over $S$:
	$$
	\sigma: (q_0,T_0) \ \trans{P_{t_1} / m_1} \ (q_1,T_1) \  \trans{P_{t_2} / m_2} \ \cdots \ \trans{P_{n_2} / m_n} \ (q_n,T_n)
	$$
	And a state $p \in Q$.
	If $p \in T_n$, then there is a run of $\cA$ over $S$:
	$$
	\sigma': p_0 \ \trans{P_1 / m_1'} \ p_1 \  \trans{P_2 / m_2'} \ \cdots \ \trans{P_{n-1} / m_{n-1}'} \ p_{n-1} \ \trans{P_n / m_n'} \ p
	$$
	Such that $\rmatch(\sigma) \subset \rmatch(\sigma')$.
\end{lemma}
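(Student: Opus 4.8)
The plan is to prove the statement by induction on the length $n$ of the run $\sigma$, unwinding the two update rules for the $T$-component. The guiding invariant is that $T_n$ collects exactly those states of $\cA$ that are reachable, over the prefix $t_1\cdots t_n$, by some run whose set of marked positions \emph{strictly} contains $\rmatch(\sigma)$ (the marks of the main run tracked in the $S$-component). Lemma~\ref{lemmax1} already gives the ``closure'' direction of this invariant; here I need the reconstruction direction, namely that membership in $T_n$ certifies an actual witnessing run of $\cA$. For the base case $n=0$ the initial state of $\cA_\sMAX$ is $(q_0,\emptyset)$, so $T_0=\emptyset$ and the implication holds vacuously.

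For the inductive step I would look at the last transition $(q_{n-1},T_{n-1})\trans{P_{t_n}/m_n}(q_n,T_n)$ and split on $m_n$. If $m_n=\amark$, then by construction $T_n=\Delta(T_{n-1},P_{t_n},\amark)$, so any $p\in T_n$ has a predecessor $p'\in T_{n-1}$ with an $\amark$-transition to $p$ on $t_n$; I apply the induction hypothesis to $p'$ to obtain a run $\sigma'$ ending at $p'$ with $\rmatch(\sigma_{n-1})\subsetneq\rmatch(\sigma')$, then extend it by this $\amark$-transition. Since both the main run and $\sigma'$ gain position $n$, strict containment is preserved. If $m_n=\umark$, then $T_n=\Delta(T_{n-1},P_{t_n})\cup\Delta(\{q_{n-1}\},P_{t_n},\amark)$, and I treat the two sources separately: a state coming from $\Delta(T_{n-1},P_{t_n})$ is handled exactly as above via the induction hypothesis (the main run gains nothing, $\sigma'$ may or may not gain $n$, and the previously strict inclusion survives); a state $p$ coming from $\Delta(\{q_{n-1}\},P_{t_n},\amark)$ is the interesting case, where I branch off the main run itself: I take the $\cA$-run $q_0\cdots q_{n-1}$ read off from the $S$-component of $\sigma$ and extend it by the $\amark$-transition to $p$, so its match set becomes $\rmatch(\sigma_{n-1})\cup\{n\}$ while $\rmatch(\sigma_n)=\rmatch(\sigma_{n-1})$, creating a fresh strict containment.

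I expect the sub-case $p\in\Delta(\{q_{n-1}\},P_{t_n},\amark)$ to be the crux: it is the only place where the strict inclusion is \emph{generated} rather than \emph{inherited}, and the argument hinges on observing that marking the current tuple that the main run deliberately skips adds exactly the single new position $n$, hence a proper superset. The I/O-determinism assumption is what keeps this clean, since it makes $S$ a singleton $\{q_{n-1}\}$ and lets me read off a unique underlying $\cA$-run from $\sigma$ to branch from; the extension to general CEA is routine by carrying a set instead. The remaining bookkeeping --- that every extending transition is enabled because $t_n\in P_{t_n}\subseteq P'$ for the witnessing $\Delta$-transition, and that $\sigma'$ still starts at an initial state --- is immediate from the definition of the extended $\Delta$ and of $\types{\cP}$. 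Finally, this lemma is exactly what powers the only-if direction of the $\sMAX$ construction: if the terminal state $(q_n,T_n)$ had a final state in $T_n$, the lemma would produce an accepting run with a strictly larger complex event, contradicting maximality of $C$.
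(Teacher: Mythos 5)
Your proof is correct and follows essentially the same route as the paper's: induction on the run length, a case split on the last mark $m_n$, with the strict inclusion inherited via the induction hypothesis for states arising from $T_{n-1}$ and freshly generated by branching off the main run with an $\amark$-transition in the sub-case $p\in\Delta(\{q_{n-1}\},P_{t_n},\amark)$. Nothing is missing.
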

\begin{proof}
	It will be proved by induction over the length $n$.
	The base case is $n = 0$, which is trivially true because $T_0=\emptyset$.
	Assume now that the Lemma holds for $n-1$.
	Define the run $\sigma_{n-1}$ as the run $\sigma$ without the last transition.
	For any state $q \in T_{n-1}$, let $\sigma'_q$ be the run that ends in $q$ such that $\rmatch(\sigma_{n-1}) \subset \rmatch(\sigma'_q)$.
	Consider the case $m_n = \umark$.
	Then, either $p \in \Delta(T_{n-1},P_{t_n})$ or $p\in \Delta(\{q_{n-1}\},P_{t_n},\amark)$.
	In the former scenario, there must be a $q \in T_{n-1}$ and $P \in \uset$ such that $(q,(P_n,m),p)\in\Delta$ and $P_{t_n} \subseteq P$, with $m \in \{\amark,\umark\}$.
	Define $\sigma'$ as the run $\sigma'_q$ followed by the transition $(q,(P,m),p)$.
	Then $\sigma'$ satisfies $\rmatch(\sigma) \subset \rmatch(\sigma')$.
	In the latter scenario, there must be an $P \in \uset$ such that $(q_{n-1},(P,\amark),p) \in \Delta$ and $P_{t_n} \subseteq P$.
	Define $\sigma'$ as $\sigma_{n-1}$ followed by the transition $(q_{n-1},(P,\amark),p)$.
	Then $\sigma'$ satisfies $\rmatch(\sigma) \subset \rmatch(\sigma')$.
	Now, consider the case $m_n = \amark$.
	Here, $p$ has to be in $\Delta(T_{n-1},P_{t_n},\amark)$, so there must be a $q \in T_{n-1}$ and $P \in \uset$ such that $(q,(P,\amark),p)\in\Delta$ and $P_{t_n} \subseteq P$.
	Define $\sigma'$ as the run $\sigma'_q$ followed by the transition $(q,(P,\amark),p)$.
	Then $\sigma'$ satisfies $\rmatch(\sigma) \subset \rmatch(\sigma')$.
	Finally, the Lemma holds for every $n$.
\end{proof}
Consider a complex event $C$ such that $C\in \sem{\sMAX(\cA)}(S)$.
This means that there is an accepting run of $\cA$ over $S$ associated to $C$.
Define that run as:
$$
\rho: q_0 \ \trans{P_1 / m_1} \ q_1 \  \trans{P_2 / m_2} \ \cdots \ \trans{P_n / m_n} \ q_n
$$
Where $q_0\in I$, $q_n \in F$ and $(q_{i-1},(P_i,m_i),q_i) \in \Delta$.
To prove that $C \in \sem{\cA_\sMAX}(S)$ we give an accepting run of $\cA_\sMAX$ over $S$ associated to $C$.
Consider the run:
$$
\rho': (q_0,T_0) \ \trans{P_{t_1} / m_1} \ (q_1,T_1) \  \trans{P_{t_2} / m_2} \ \cdots \ \trans{P_{t_n} / m_n} \ (q_n,T_n)
$$
Where $T_0 = \emptyset$, $T_i = \Delta(T_{i-1}, P_{t_i})\cup \Delta(\{q_{i-1}\},P_{t_i},\amark)$ if $m_i=\umark$, and $T_i = \Delta(T_{i-1},P_{t_i},\amark)$ if $m_i=\amark$.
To be a valid run, every transition $(T_{i-1},P_{t_i},m_i,T_i)$ must be in $\Delta_\sMAX$, which we prove now by induction over $i$.
The base case is $i = 0$, which is trivially true because no transition is required to exist.
Next, assume that transitions up to $i-1$ exist.
We know that there is an $P_i$ such that $P_{t_i} \subseteq P_i$ and $(q_{i-1}, (P_i, m_i),q_i)\in \Delta$, so that condition is satisfied.
We only need to prove that $q_i \notin T_i$.
By contradiction, assume that $q_i \in T_i$.
Consider the case that $m_i = \umark$.
It means that either $q_i \in \Delta(\{q_{i-1}\},P_{t_i},\amark)$ or $q_i\in \Delta(T_{i-1},P_{t_i})$.
In the first scenario, consider a new run $\sigma$ to be exactly the same as $\rho$, but changing $m_i$ with $\amark$.
Then $\sigma$ is also an accepting run, and $\rmatch(\rho) \subset \rmatch(\sigma)$, which is a contradiction to the definition of the $\MAX$ semantic.
In the second scenario, there must be some $p \in T_{i-1}$ and $P \in \uset$ such that $(p,(P,m),q_i) \in \Delta$, where $m \in \{\amark, \umark\}$.
Because of Lemma \ref{lemmax2}, it means that there is a run $\sigma'$ over $S$:
$$
\sigma': p_0 \ \trans{P_1' / m_1'} \ p_1 \  \trans{P_2' / m_2'} \ \cdots \ \trans{P_{i-2}' / m_{i-2}'} \ p_{i-2} \ \trans{P_{i-1}' / m_{i-1}'} \ p
$$
Such that $\rmatch(\rho_{i-1}) \subset \rmatch(\sigma')$, where $\rho_{i-1}$ is the run $\rho$ until transition $i-1$.
Moreover, because $(p,(P,m),q_i)\in\Delta$ we can define the run:
$$
\sigma: p_0 \ \trans{P_1' / m_1'} \ p_1 \  \trans{P_2' / m_2'} \ \cdots \ \trans{P_{i-1}' / m_{i-1}'} \ p \ \trans{P / m} \ q_i \ \trans{P_{i+1} / m_{i+1}} \ \cdots \ \trans{P_n / m_n} \ q_n
$$
Such that $\rmatch(\rho) \subset \rmatch(\sigma)$, which is also a contradiction.
Then, $q_i \notin T_i$ for the case $m_i = \umark$.
Now, consider the case $m_i = \amark$.
Assuming that $q_i \in T_i$, it means that $q_i \in \Delta(T_{i-1},P_{t_i},\amark)$.
Then, there must be some $p \in T_{i-1}$ and $P \in \uset$ such that $(p,(P,\amark),q_i)\in \Delta$.
Alike the previous case, because of Lemma \ref{lemmax2}, there is a run:
$$
\sigma: p_0 \ \trans{P_1' / m_1'} \ p_1 \  \trans{P_2' / m_2'} \ \cdots \ \trans{P_{i-1}' / m_{i-1}'} \ p \ \trans{P / \amark} \ q_i \ \trans{P_{i+1} / m_{i+1}} \ \cdots \ \trans{P_n / m_n} \ q_n
$$
Such that $\rmatch(\rho) \subset \rmatch(\sigma)$, which is a contradiction.
Then, $q_i \notin T_i$, therefore $(T_{i-1},(P_{t_i},m_i),T_i) \in \Delta_\sMAX$ for every $i$.
The above proved that $\rho'$ is a run of $\cA_\sMAX$, but to be a accepting run it must hold that $T_n\cap F = \emptyset$.
By contradiction, assume otherwise, i.e., there is some $q \in Q$ such that $q \in T_n \cap F$.
Then, because of Lemma \ref{lemmax2}, there is another accepting run $\sigma$ of $\cA_\sMAX$ over $S$ such that $C \subset \rmatch(\sigma)$, which contradicts the fact that $C$ is maximal.
Thus, $T_n \cap F = \emptyset$ and $\rho'$ is an accepting run, therefore $C \in \sem{\cA_\sMAX}(S)$. Finally, note that $\cA_\sMAX$ is of size exponential in the size of $\cA$, even when $\cA$ is not I/O-deterministic.
\hfill \qed


\section{Proofs of Section~\ref{sec:evaluation}}

\subsection{Enumeration with constant-delay}

\begin{algorithm}[t]
	\caption{Algorithm equivalent to {\sc EnumAll} that runs with constant delay}\label{alg:cea-cdeenum}
	\begin{algorithmic}[1]
		\Require{$n \neq \bot$ and $n.\mlist$ is non-empty.}
		\Procedure{EnumAll*}{$n$}
			\State $n.\mlist.\mbegin$
			\State $s.\mpushblack(n)$
			\While{$s.\mempty = false$}
				\State $n \gets s.\mpop()$
				\If{$n = \bot$}
					\State $n \gets s.\mpopwhites()$
					\State ${\tt Output}(s)$
				\Else
					\State ${\tt Output}(n)$
					\State $n' \gets n.\mlist.\mnext$
					\If{$n.\mlist.\matend = false$}
						\State $s.\mpushwhite(n)$
					\Else
						\State $s.\mpushblack(n)$
					\EndIf
					\State $n'.\mlist.\mbegin$
					\State $s.\mpushblack(n')$
				\EndIf
			\EndWhile
		\EndProcedure
	\end{algorithmic}
\end{algorithm}

We provide the method {\sc EnumAll*} in  Algorithm~\ref{alg:cea-cdeenum} which does the same as {\sc EnumAll} (Algorithm~\ref{alg:cea-enum} in the body of the paper) and runs with constant delay.
Moreover, the algorithm takes constant time between each output event (i.e. position), and constant time between complex events.

We start by explaining the notation in Algorithm~\ref{alg:cea-cdeenum}.
For doing a wise backtracking during the enumeration, we use an extended stack of nodes (denoted by $s$ in the algorithm) that we call a \emph{black-white stack}. 
This stack works as a traditional stack with the difference that stack elements are colored with black and white. 
For coloring the nodes, we provide the methods $\mpushblack(n)$ and $\mpushwhite(n)$ that assign the colored black and white, respectively, when the node $n$ is push into the stack.
This stack also has the traditional method $\mpop()$ and $\mempty$ for popping the top node and checking if the stack is empty, respectively.
The colors are used when the method $\mpopwhites()$ is called. 
When this method is called, the stack pops all the white nodes that are at the top of the stack. 
For example, if $s = \circledblack{1}\circled{2} \circled{3} \circledblack{4} \circled{5} \circled{6} \circled{7}$ is a black-white stack with node $7$ the top of the stack, then when $s.\mpopwhites()$ is called the resulting stack will be $\circledblack{1}\circled{2} \circled{3} \circledblack{4}$, namely, all white nodes at the top of the stack are popped. 
Note that by keeping pointers to the previous black node, each method of a black-white stack can be run in constant time. 

For printing the output, we assume a method $\menum(n)$ which prints the position of the node $n.\mposition$ in the user output tape.
Furthermore, if $s = n_1 \ldots n_{i-1} n_i$ is the current content of a black-white stack with $n_i$ the top of the stack, we assume a method $\menum(s)$ that prints $\# \, n_1.\mposition \, \ldots \, n_{i-1}.\mposition$ in the output (if $s$ is empty or has one node, it does not print any symbol). 
Note that the output will be printed as a sequence $\bar{C}_1 \# \bar{C}_2 \# \ldots \# \bar{C}_k$ where each $\bar{C}_i$ is a sequence of positions (i.e. a complex event) printed in reverse order, namely, if $C = \{i_1, \ldots, i_k\}$ with $i_1 < \ldots < i_k$, then $\bar{C} = i_k \ldots i_1$.
For lists, we assume an extra method $\matend$, which returns true if the iterator of the list is at the end of the list. 
Finally, we assume that the node $\bot$ always has an empty list (i.e. we can apply $\bot.\mlist.\mbegin$ but $\bot.\mlist.\matend$ is always true). 

The intuition behind {\sc EnumAll*} is the following.
For the sake of simplification, we will see the data structure that stores the complex events as an acyclic directed graph, where each node $n$ has edges to the nodes of $n.\mlist$.
Both {\sc EnumAll} and {\sc EnumAll*} are based on the same intuition: to navigate through the graph in a depth-first-search manner and compute a complex event for each path from the root to a leaf (i.e. $\bot$).
The main difference is that, while {\sc EnumAll} does this with recursion and moves one node at a time, {\sc EnumAll*} can move up an arbitrary number of nodes when it acknowledges that there are no more paths (i.e. complex events) at that section of the graph.
This is achieved by the use of the black-white stack and, specifically, in line 7 where the $\mpopwhites$ method is called to backtrack an arbitrary number of nodes in constant time. 
This is particularly useful in cases when, for example, the graph consists of only two disjoint paths that meet at the root.
In this scenario, after enumerating the complex event $C_1$ of the first path, {\sc EnumAll} would have to go back to the root through $|C_1|$ nodes before enumerating the complex event $C_2$ for the second path, thus taking time $\cO(|C_1|)$ between $C_1$ and $C_2$.
On the other side, {\sc EnumAll*} uses the black-white stack $s$ to store the exact point at which it has to go back (in the example, the root node), therefore it takes constant time between each complex event output.
Moreover, to print the partial complex event, {\sc EnumAll*} uses the method ${\tt Output}(s)$ to recap the output from the current position and continue printing from there (line 8).
This way, {\sc EnumAll*} ensures that the time it takes in enumerating between positions or complex events is bounded by a constant.

\subsection{Proof of Theorem~\ref{theo:ndet-evaluation}}

\begin{algorithm}
	\caption{Evaluate non-deterministic $\cA$ over a stream $S$} \label{alg:ndet-eval}
	\begin{algorithmic}[1]
		\Require{A non-deterministic CEA $\cA = (Q,\Delta,I,F)$}
		\Procedure{NDetEvaluate}{$S$}
			\ForAll{$T \in 2^Q \setminus \{I\}$}
				\State $\alist_T \gets \epsilon$
			\EndFor
			\State $\alist_I \gets [\bot]$
			\State $\aactive \gets \{I\}$
			\While {$t \gets \myield_S$}
				\State $\aactive^\aold \gets \aactive.{\tt copy},\; \aactive \gets \emptyset$
				\ForAll{$T \in \aactive^\aold$}
					\State $\alist^\aold_T \gets \alist_T.\mlazycopy, \; \alist_T \gets \epsilon$
				\EndFor
				\ForAll{$T \in \aactive^\aold$}
					\State $U^\amark \gets \Delta(T,t,\amark)$
					\If{$U^\amark \neq \emptyset$}
						\State $\alist_{U^\amark}.\madd(\mNode(t.\mposition,\alist_T^\aold))$
						\State $\aactive \gets \aactive \cup \{U^\amark\}$
					\EndIf
					\State $U^\umark \gets \Delta(T,t,\umark)$
					\If{$U^\umark \neq \emptyset$}
						\State $\alist_{U^\umark}.\mappend(\alist_T^\aold)$
						\State $\aactive \gets \aactive \cup \{U^\umark\}$
					\EndIf
				\EndFor
				\State $\textproc{Enumerate}(\{\alist_T\}_{T \in 2^Q}, \{T \mid T \cap F \neq \emptyset\}, t.\mposition)$
			\EndWhile
		\EndProcedure
	\end{algorithmic}
\end{algorithm}

Here we provide Algorithm~\ref{alg:ndet-eval}, an evaluation algorithm for evaluating an arbitrary CEA $\cA$.
The procedure \textproc{NDetEvaluate} is strongly based on \textproc{Evaluate} from Algorithm~\ref{alg:cea-enum}, modified to do a determinization of $\cA$ ``on the fly''.
It handles subsets of $Q$ as its new states by keeping a list $\alist_T$ for each subset of states $T$ instead of the lists $\alist_q$ for each state $q$.
Moreover, it extends the transition $\Delta$ as a function $\Delta(T,t,m)$ that returns the set of all states reachable from some state in $T \subseteq Q$ after reading event $t$ and marking with $m \in \{\amark,\umark\}$; a similar extension to the one defined for Algorithm~\ref{alg:cea-eval}.
With this modifications, the update of each list $\alist_T$ is done the same way as \textproc{Evaluate}.
To extend it considering $\amark$-transitions when reading $t$, it creates a new node $n^*$ with the current position $t.\mposition$ and linked to the old $\alist_T$; then adds $n^*$ at the top of $U^\amark = \Delta(T,t,\amark)$.
On the other hand, to extend it with $\umark$-transitions, it appends the old list of $T$ to the list of $U^\umark=\Delta(T,t,\umark)$.

Further, a mild optimization is added in Algorithm~\ref{alg:ndet-eval}.
It utilizes a set $\aactive$, which contains the sets $T$ that have non-empty $\alist_T$, avoiding the need to iterate over all subsets of $Q$ when it is not necessary.
However, the exponential update time is still maintained for the worst-case scenario.
It is worth noting that there is an alternative algorithm for evaluating $\cA$ that consists in first determinizing $\cA$ and then running Algorithm~\ref{alg:cea-eval} on the resulting I/O-deterministic CEA $\cA^{\det}$.
This evaluation algorithm updates in time linear to the size $|\cA^{\det}| = \cO(2^{|\cA|})$, resulting in the same update time as Algorithm~\ref{alg:ndet-eval}.

\subsection{Proof of Theorem~\ref{theo:selectors-evaluation}}

Here we provide for each selection strategy $\SEL \in \{\NEXT,\LAST,\STRICT,\MAX\}$ an evaluation algorithm for evaluating $\SEL(\cA)$ for an arbitrary CEA $\cA$. Each algorithm comes from combining the automata constructions of Theorem~\ref{theo:selectors-compilation} and the evaluation algorithm for I/O deterministic CEA. Moreover, each algorithm uses the $\textproc{Enumerate}$ procedure to enumerate all matchings, similar than the algorithm for I/O-deterministic CEA.

\subsubsection{$\sNEXT$ evaluation}
\begin{algorithm}
	\caption{Evaluate $\cA$ over a stream $S$ with $\NEXT$ semantics}\label{alg:next-eval}
	\begin{algorithmic}[1]
		\Require{CEA $\cA = (Q,\Delta,I,F)$}
		\Procedure{NextEvaluate}{$S$}
			\ForAll{$q \in Q \setminus I$}
				\State $\alist_q \gets \epsilon$
			\EndFor
			\ForAll{$q \in I$}
			\State $\alist_q \gets [\bot]$
			\EndFor
			\State $O \gets [I]$
			\While {$t \gets \myield_S$} \label{next-eval-line:yield}
				\ForAll{$q \in Q$}
					\State $\alist^\aold_q \gets \alist_q\!.\mlazycopy, \; \alist_q \gets \epsilon$
				\EndFor
				\State $O^\aold \gets O, \; O \gets []$
				\ForAll{$A \in O^\aold$}
					\State $\textproc{UpdateMarking}(A,t,\amark)$
					\State $\textproc{UpdateMarking}(A,t,\umark)$
				\EndFor
				\State $\textproc{Enumerate}(\{\alist_q\}_{q \in Q},F,t.\mposition)$
			\EndWhile
		\EndProcedure
		
		\Procedure{UpdateMarking}{$A,t,m$}
			\State $B \gets \emptyset$
			\ForAll{$q \in A \textbf{ and } p \in\Delta(q,t,m) \setminus O.{\tt set}$}
				\State $B \gets B \cup \{p\}$
				\If{$m = \amark$}
					\State $\alist_{p}\! \gets [\mNode(t.\mposition,\alist_q^\aold)]$ 
				\Else
					\State $\alist_{p}\! \gets \alist_q^\aold$
				\EndIf
			\EndFor
			\If{$B \neq \emptyset$}
				\State $O.{\tt enqueue}(B)$
			\EndIf
		\EndProcedure
	\end{algorithmic}
\end{algorithm}

An evaluation algorithm for $\NEXT(\cA)$ is given in Algorithm~\ref{alg:next-eval}.
The procedure \textproc{NextEvaluate} uses the same approach as the construction of the CEA $\cA_{\NEXT}$ of Theorem~\ref{theo:selectors-compilation}, which simulated $\cA$ while keeping an order of priority over the states.
This order was used so that $\cA_\NEXT$ could simulate a run $\rho$ of $\cA$ that reaches $q$ only if there was no other simultaneous run $\rho'$ reaching $q$ and such that $\rmatch(\rho) \leqnext \rmatch(\rho')$.
To mimic this behavior, Algorithm~\ref{alg:next-eval} keeps that order in a queue of set of states, called $O$. We assume that $O$ has two methods: ${\tt enqueue}(A)$ to add a set of states $A$ to the queue and ${\tt set}$ to take the union of all set of states inside the queue.
Furthermore, at each update, $\alist_q$ stores at most one node, defined by the first state in the $O$-order that reaches $q$.
This way, when traversing the structure in the $\textproc{Enumerate}$ procedure, the result is at most one complex event for each $\alist_q$ with $q \in F$, which is exactly the maximum complex event in the $\leqnext$ order that reaches $q$.
This, however, could result in giving the same complex event more than once, when it is defined by different runs that end at different states of $F$.
To avoid this issue, one can make sure that $|F| = 1$ by adding a new final state $q_f$ to $\cA$ and adding a transition $(p,P,m,q_f)$ for each $(p,P,m,q)$ that reaches some $q \in F$.

Regarding the update time of Algorithm~\ref{alg:next-eval}, we examine the while iteration of line~\ref{next-eval-line:yield}.
First of all, note that the $O$-queue keeps disjoint set of states and, therefore, its length is bounded by the number of states in $Q$. 
Furthermore, for each set of states $A \in O$ the function {\sc UpdateMarking} iterates over each state in $A$ and each transition $\Delta(q,t,m)$. 
As we said, the sets $A \in O$ are disjoint which implies that each state and transition is checked at most once in {\sc UpdateMarking}, namely, $|\cA|$. 
By using a smart data structure to check membership in $O$ in logarithmic time, the update time of Algorithm~\ref{alg:next-eval} is at most linear in the size of $\cA$. 

\subsubsection{$\sLAST$ evaluation}

\begin{algorithm}
	\caption{Evaluate $\cA$ over a stream $S$ with $\LAST$ semantics}\label{alg:last-eval}
	\begin{algorithmic}[1]
		\Require{CEA $\cA = (Q,\Delta,I,F)$}
		\Procedure{LastEvaluate}{$S$}
			\ForAll{$q \in Q \setminus I$}
				\State $\alist_q \gets \epsilon$
			\EndFor
			\ForAll{$q \in I$}
				\State $\alist_q \gets [\bot]$
			\EndFor
			\State $O \gets [I]$
			\While {$t \gets \myield_S$}
				\ForAll{$q \in Q$}
					\State $\alist^\aold_q \gets \alist_q\!.\mlazycopy, \; \alist_q \gets \epsilon$
				\EndFor
				\State $O^\aold \gets O, \; O \gets []$
				\ForAll{$A \in O^\aold$}
					\State $\textproc{UpdateMarking}(A,t,\amark)$
				\EndFor
				\ForAll{$A \in O^\aold$}
					\State $\textproc{UpdateMarking}(A,t,\umark)$
				\EndFor
				\State $\textproc{Enumerate}(\{\alist_q\}_{q \in Q},F,t.\mposition)$
			\EndWhile
		\EndProcedure
	\end{algorithmic}
\end{algorithm}

Algorithm~\ref{alg:last-eval} is an evaluation algorithm for $\LAST(\cA)$.
One can se the resemblance of procedure \textproc{LastEvaluate} with \textproc{NextEvaluate}.
In fact, both have the same approach: keeping an order $O$ defining how to update the lists.
The difference is that \textproc{LastEvaluate} follows the order of $\cA_\LAST$ of Theorem~\ref{theo:selectors-compilation}, i.e. simulates a run $\rho$ of $\cA$ that reaches $q$ only if there was no other simultaneous run $\rho'$ reaching $q$ and such that $\rmatch(\rho) \leqlast \rmatch(\rho')$.
To achieve this, it prioritizes the updates that add the last position: it iterates over all $\amark$-transitions before all $\umark$-transitions, unlike \textproc{NextEvaluate} which iterates over each $A$ state checking both $\amark$-transitions and $\umark$-transitions from $A$ at the same time.
The same argument about the complexity of \textproc{NextEvaluate} applies to \textproc{LastEvaluate}, thus its update time is also $\cO(|\cA|)$.

\subsubsection{$\sMAX$ evaluation}

\begin{algorithm}
	\caption{Evaluate $\cA$ over a stream $S$ with $\sMAX$ semantics} \label{alg:max-eval}
	\begin{algorithmic}[1]
		\Require{CEA $\cA = (Q,\Delta,I,F)$}
		\Procedure{MaxEvaluate}{$S$}
			\ForAll{$r \in 2^Q \times 2^Q \setminus \{(I,\emptyset)\}$}
				\State $\alist_r \gets \epsilon$
			\EndFor
			\State $\alist_{(I,\emptyset)} \gets [\bot]$
			\State $\aactive \gets \{(I,\emptyset)\}$
			\While {$t \gets \myield_S$}
				\State $\aactive^\aold \gets \aactive\!.{\tt copy}, \; \aactive \gets \emptyset$ \label{max-eval-line:upd1}
				\ForAll{$r \in \aactive^\aold$}
					\State $\alist^\aold_r \gets \alist_r\!.\mlazycopy, \; \alist_r \gets \epsilon$
				\EndFor
				\ForAll{$r \in \aactive^\aold$}
					\State $\textproc{MoveMarking}(r,t)$ \label{max-eval-line:upd2}
					\State $\textproc{MoveNotMarking}(r,t)$ \label{max-eval-line:upd3}
				\EndFor
				\State $\textproc{Enumerate}(\{\alist_r\}_{r \in 2^Q \times 2^Q},\{(T,U) \mid T \cap F \neq \emptyset \land U \cap F = \emptyset \},t.\mposition)$ \label{max-eval-line:upd4}
			\EndWhile
		\EndProcedure
		\Procedure{MoveMarking}{$(T,U),t$}
			\State $U' \gets \Delta(U,t,\amark)$ \label{max-eval-line:mrk1}
			\State $T' \gets \Delta(T,t,\amark) \setminus U'$ \label{max-eval-line:mrk2}
			\If{$T' \neq \emptyset$}
				\State $\alist_{(T',U')}\!.\madd(\mNode(t.\mposition,\alist_{(T,U)}^\aold))$ \label{max-eval-line:mrk3}
				\State $\aactive \gets \aactive \cup \{(T',U')\}$
			\EndIf
		\EndProcedure
		\Procedure{MoveNotMarking}{$(T,U),t$}
			\State $U' \gets \Delta(U,t,\amark) \cup \Delta(U,t,\umark) \cup \Delta(T,t,\amark)$ \label{max-eval-line:nmrk1}
			\State $T' \gets \Delta(T,t,\umark) \setminus U'$  \label{max-eval-line:nmrk2}
			\If{$T' \neq \emptyset$}
				\State $\alist_{(T',U')}\!.\mappend(\alist_{(T,U)}^\aold)$ \label{max-eval-line:nmrk3}
				\State $\aactive \gets \aactive \cup \{(T',U')\}$
			\EndIf
		\EndProcedure
	\end{algorithmic}
\end{algorithm}

The algorithm for evaluating $\MAX(\cA)$ is Algorithm~\ref{alg:max-eval}, which is arguably the most convoluted one so far.
We extend the transition relation $\Delta$ as a function that receives $(T,t,m)$ and returns the set $T'$ such that $q \in T'$ iff there is some $p \in T$ and predicate $P$ such that $(p,P,m,q) \in \Delta$ and $t \in P$.
This extension is analogous to the one in Section~\ref{sec:evaluation-iodet} for $\delta$.

Procedure \textproc{MaxEvaluate} keeps for each pair $(T,U) \in 2^Q \times 2^Q$ a list $\alist_{(T,U)}$.
Similar than for the algorithm for I/O deterministic CEA, here each list keeps the complex event data for a set of runs.
The procedure initializes all lists as empty except for $\alist_{(I,\emptyset)}$, which begins with $\bot$ in it.
At each update (lines~\ref{max-eval-line:upd1}-\ref{max-eval-line:upd4}), it first creates a lazycopy of each list.
Then, each list is updated by procedures \textproc{MoveMarking} and \textproc{MoveNotMarking} (lines~\ref{max-eval-line:upd2} and~\ref{max-eval-line:upd3}).
\textproc{MoveMarking} updates the list with $\amark$ transitions the same way as Algorithm~\ref{alg:cea-eval}, i.e. adding a new node to the target list with the current $t\!.\mposition$, and linking it with the origin list (line~\ref{max-eval-line:mrk3}).
However, it differs in that the origin and target lists are not defined by a transition, e.g. $(p,P,\amark,q)$.
Instead, the origin $\alist_{(T,U)}$ and target $\alist_{(T',U')}$ are bound by the relations (lines~\ref{max-eval-line:mrk1}-\ref{max-eval-line:mrk2}):
\[
(*) \ \left\{ \ 
	\begin{array}{ll}
		U' = \Delta(U,t,\amark) \\
		T' = \Delta(T,t,\amark) \setminus U'
	\end{array}
\right.
\]
Moreover, \textproc{MoveNoteMarking} updates the list with $\umark$ transitions by appending the origin list to the target list (line~\ref{max-eval-line:nmrk3}), as in Algorithm~\ref{alg:cea-eval}.
In this case, the origin $\alist_{(T,U)}$ and target $\alist_{(T',U')}$ are bound by relations (lines~\ref{max-eval-line:nmrk1}-\ref{max-eval-line:nmrk2}):
\[
(**) \ \left\{ \
	\begin{array}{ll}
		U' = \Delta(U,t,\amark) \cup \Delta(U,t,\umark) \cup \Delta(T,t,\amark) \\
		T' = \Delta(T,t,\umark) \setminus U'
	\end{array}
\right.
\]
Both $(*)$ and $(**)$ are motivated by the standard automata determinization: to compress all the runs that define the same output in a single run $\rho$, keeping track of the set of current states $T$ and updating it using the transition relation $\Delta$.
Here we also need to store the set $U$ of states that are reached by runs that define superset complex events of the current one, i.e. the states that can be reached by some simultaneous run $\rho'$ such that $\rmatch(\rho) \subseteq \rmatch(\rho')$.
Because of $(*)$ and $(**)$, the runs represented by $\alist_{(T,U)}$ are the ones that end at some $q \in T$, and if there is other simultaneous run $\rho'$ such chat $\rmatch(\rho) \subseteq \rmatch(\rho')$ then $\rho'$ must end at some state $p \in U$.
This way, in the call $\textproc{Enumerate}$ at line~\ref{max-eval-line:upd4}, we give as final-states argument the pairs $(T,U)$ such that $T$ has an accepting state and $U$ does not, which means that the runs in $\alist_{(T,U)}$ define complex events that are maximal.

As a basic optimization, a set $\aactive$ is stored which keeps the pairs $(T,U)$ with non-empty $\alist_{(T,U)}$, avoiding the need to iterate over all pairs of $2^Q \times 2^Q$ when it is not necessary.
Still, the complexity in the worst-case scenario remains exponential ($\cO(4^{|\cA|})$).

\subsubsection{$\sSTRICT$ evaluation}

\begin{algorithm}
	\caption{Evaluate $\cA$ over a stream $S$ with $\STRICT$ semantics}\label{alg:strict-eval}
	\begin{algorithmic}[1]
		\Require{An I/O deterministic CEA $\cA = (Q,\delta,q_0,F)$}
		\Procedure{StrictEvaluate}{$S$}
		\ForAll{$q \in Q \setminus \{q_0\}$}
		\State $\alist_q \gets \epsilon$
		\EndFor
		\State $\qinit \gets q_0, \; \alist_{\qinit} \gets [\bot]$
		\While {$t \gets \myield_S$}
			\ForAll{$q \in Q$}
				\State $\alist^\aold_q \gets \alist_q\!.\mlazycopy, \; \alist_q \gets \epsilon$
			\EndFor
			\ForAll{$q \in Q \textbf{ with } \alist_q^\aold \neq \epsilon$}
				\If{$p^\amark \gets \delta(q,t,\amark)$}
					\State $\alist_{p^\amark}\!.\madd(\mNode(t.\mposition,\alist_q^\aold))$
				\EndIf
			\EndFor
			\If{$\qinit \gets \delta(\qinit,t,\umark)$} \label{strict-eval-line:nmrk1}
				\State $\alist_{\qinit}\!.\mappend([\bot])$ \label{strict-eval-line:nmrk2}
			\EndIf
			\State $\textproc{Enumerate}(\{\alist_q\}_{q \in Q},F,t.\mposition)$
		\EndWhile
		\EndProcedure
	\end{algorithmic}
\end{algorithm}

An evaluation algorithm for $\STRICT(\cA)$ is given in Algorithm~\ref{alg:strict-eval}.
First, it requires $\cA$ to be deterministic, so for evaluating an arbitrary CEA it first needs to be determinized, incurring in an additional $2^{|\cA|}$ blow-up.

Procedure \textproc{StrictEvaluate} is very similar to \textproc{Evaluate} of Algorithm~\ref{alg:cea-eval}.
The core difference is that it keeps track of a special $\qinit$ state that represents (if it exists) the run done by following only $\umark$ transitions, i.e. the empty run that still have not marked any position (called \emph{empty run}).
At each update, it follows the same idea as \textproc{Evaluate} to update the lists considering $\amark$ transitions.
On the other hand, it does not update the lists in the same way for $\umark$.
This is because it only computes the runs that define continuous intervals of events, therefore no $\umark$ transition can be taken if some event was marked with a $\amark$ transition.
Therefore, it only considers the $\umark$ transitions in the empty run: it updates $\qinit$ with $\delta(\qinit,t,\umark)$ (line~\ref{strict-eval-line:nmrk1}) and adds the $\bot$ node to the new $\alist_{\qinit}$ (line~\ref{strict-eval-line:nmrk2}).
The call of \textproc{Enumerate} is the same as in \textproc{Evaluate}.

As mentioned above, we first need to determinize $\cA$ before running Algorithm~\ref{alg:strict-eval}, which results in a $2^{|\cA|}$ blow-up on the size of the complex event automaton.
Moreover, since the algorithm runs in linear time over the input CEA (by the same arguments as Algorithm~\ref{alg:cea-eval}), the overall update time is $\cO(2^{|\cA|})$.


\section{Queries for SASE and EsperTech}
\medskip
 For our experimental evaluation we proposed the queries $Q_1$-$Q_6$ presented in Table~\ref{tab:patterns}. We had to translate these queries to EsperTech and SASE to perform the experiments. Next we present the corresponding translations for the experiments in which we required all complex events (the ones reported in Figure~\ref{fig:exp1}). For performing the experiment using consumption policy (the ones reported in Table~\ref{tab:exp2-proc}) in EsperTech the query is the same, and for Sase we had to modify the code amd call the function \texttt{Initialize} to reset the memory of their automaton whenever a match was produced. For performing the experiment using selection strategies (the ones reported in Figure~\ref{fig:exp-3-throughput}) in EsperTech is suffices to remove the \texttt{all\_matches} clause, and for Sase it suffices to change \texttt{skip-till-any-match} for \texttt{skip-till-next-match}.

\paragraph*{\bf $Q_1=A \as x; B \as y; C \as z$}
\begin{itemize}
\item EsperTech
\begin{verbatim}
    select a, b, c from Stream
    match_recognize (
    measures A as a, B as b, C as c
    all_matches
    pattern (A s* B s* C)
    define 
        A as A.type = “A”,
        B as B.type = “B”,
        C as C.type = “C”
    )
\end{verbatim}
\item SASE
\begin{verbatim}
    PATTERN SEQ(A a, B b, C c)
    WHERE skip-till-any-match
\end{verbatim}
\end{itemize}

\paragraph*{\bf $Q_2=A \as x; B \as y; C \as z; D \as w$}
\begin{itemize}
\item EsperTech
\begin{verbatim}
    select a, b, c, d from Stream
    match_recognize (
    measures A as a, B as b, C as c, D as d
    all_matches
    pattern (A s* B s* C s* D)
    define 
        A as A.type = “A”,
        B as B.type = “B”,
        C as C.type = “C”,
        D as D.type = “D”
    )
\end{verbatim}
\item SASE
\begin{verbatim}
    PATTERN SEQ(A a, B b, C c, D d)
    WHERE skip-till-any-match
\end{verbatim}
\end{itemize}

\paragraph*{\bf $Q_3=((A \as x \cor B \as y) \cor C \as z); D \as w$}
\begin{itemize}
\item EsperTech
\begin{verbatim}
    select a, b, c, d from Stream
    match_recognize (
    measures A as a, B as b, C as c, D as d
    all_matches
    pattern ( ( A | B | C ) s* D)
    define 
        A as A.type = “A”,
        B as B.type = “B”,
        C as C.type = “C”,
        D as D.type = “D”
    )
\end{verbatim}
\item SASE does not support nested disjunction.
\end{itemize}

\paragraph*{\bf $Q_4=(A \as x)+; (B \as y)$}
\begin{itemize}
\item EsperTech
\begin{verbatim}
    select a, b from Stream
    match_recognize (
    measures A as a, B as b 
    all_matches
    pattern (  A ( A | s )* B)
    define 
        A as A.type = “A”,
        B as B.type = “B”
    )
\end{verbatim}
\item SASE
\begin{verbatim}
    PATTERN SEQ(A+ a[], B b)
    WHERE skip-till-any-match
\end{verbatim}
\end{itemize}

\paragraph*{\bf $Q_5=(A \as x)+; (B \as y)+; C \as z$}
\begin{itemize}
\item EsperTech
\begin{verbatim}
    select a, b, c from Stream
    match_recognize (
    measures A as a, B as b, C as c
    all_matches
    pattern (  A ( A | s )* B ( B | s ) * C)
    define 
        A as A.type = “A”,
        B as B.type = “B”,
        C as C.type = “C”
    )
\end{verbatim}
\item SASE
\begin{verbatim}
    PATTERN SEQ(A+ a[], B+ b[], C c)
    WHERE skip-till-any-match
\end{verbatim}
\end{itemize}

\paragraph*{\bf $Q_6=((A \as x)+; (B \as y))+; C \as z$}
\begin{itemize}
\item EsperTech:
\begin{verbatim}
    select a, b, c from Stream
    match_recognize (
    measures A as a, B as b, C as c
    all_matches
    pattern (  A ( A | s )* B ( ( A ( A | s )* B ) | s ) * C)
    define 
        A as A.type = “A”,
        B as B.type = “B”,
        C as C.type = “C”
    )

\end{verbatim}
\item SASE does not support nested iteration.
\end{itemize}

\end{document}